\keywords{%
  Spatial bisimilarity;
  Spatial logic;
  Closure spaces;
  Quasi-discrete closure spaces;
  Stuttering equivalence.
}
\newcommand{\range}{\mathtt{range}}
\newcommand{\zones}{\mathtt{zones}}
\DeclareRobustCommand{\cev}[1]{%
  \mathpalette\do@cev{#1}%
}
\newcommand{\do@cev}[2]{%
  \fix@cev{#1}{+}%
  \reflectbox{$\m@th#1\vec{\reflectbox{$\fix@cev{#1}{-}\m@th#1#2\fix@cev{#1}{+}$}}$}%
  \fix@cev{#1}{-}%
}
\newcommand{\fix@cev}[2]{%
  \ifx#1\displaystyle
    \mkern#23mu
  \else
    \ifx#1\textstyle
      \mkern#23mu
    \else
      \ifx#1\scriptstyle
        \mkern#22mu
      \else
        \mkern#22mu
      \fi
    \fi
  \fi
}
\newcommand{\RED}[1]{\textcolor{red}{#1}}
\newcommand{\BROWN}[1]{\textcolor{brown}{#1}}
\newcommand{\MAGENTA}[1]{\textcolor{magenta}{#1}}
\newcommand{\ORANGE}[1]{\textcolor{orange}{#1}}
\newcounter{dgnot} 
\newenvironment{dgnot}[1][]{\refstepcounter{dgnot}\par\medskip
   \noindent \textbf{\RED{NfD~\thedgnot.}  #1} \rmfamily}{\medskip}
\newcounter{mknot} 
\newenvironment{mknot}[1][]{\refstepcounter{mknot}\par\medskip
   \noindent \textbf{\MAGENTA{NfM~\themknot.}  #1} \rmfamily}{\medskip}
\newcounter{vcnot} 
\newenvironment{vcnot}[1][]{\refstepcounter{vcnot}\par\medskip
   \noindent \textbf{\BROWN{NfV~\thevcnot.}  #1} \rmfamily}{\medskip}
\newcounter{eknot} 
\newenvironment{eknot}[1][]{\refstepcounter{vcnot}\par\medskip
   \noindent \textbf{\ORANGE{NbE~\thevcnot.}  #1} \rmfamily}{\medskip}
\newcommand{\calC}{\mathcal{C}}
\newcommand{\calD}{\mathcal{D}}
\newcommand{\calI}{\mathcal{I}}
\newcommand{\calJ}{\mathcal{J}}
\newcommand{\calK}{\mathcal{K}}
\newcommand{\calM}{\mathcal{M}}
\newcommand{\calN}{\mathcal{N}}
\newcommand{\calP}{\mathcal{P}}
\newcommand{\calS}{\mathcal{S}}
\newcommand{\calT}{\mathcal{T}}
\newcommand{\calV}{\mathcal{V}}
\newcommand{\deriv}{\noindent\hspace*{.20in}\vspace{0.1in}}
\newcommand{\hint}[2]{\newline#1\hspace*{.25in} [\,$#2$\,]\vspace{0.1in}\newline\hspace*{.20in}
\vspace{0.1in}}
\newcommand{\blankline}{\vspace{1.0\baselineskip}}
\newcommand{\SET}[1]{\{#1\}}
\newcommand{\ZET}[2]{\SET{\, {#1} \mid {#2} \,}}
\newcommand{\pws}{\mathbf{\calP}}
\newcommand{\nats}{\mathbb{N}}
\renewcommand{\succ}{\mathtt{succ}}
\newcommand{\cs}{CS}
\newcommand{\qdcs}{QdCS}
\newcommand{\closure}{\calC}
\newcommand{\closureF}{\vec{\closure}}
\newcommand{\closureT}{\cev{\closure}}
\newcommand{\interior}{\calI}
\newcommand{\interiorF}{\vec{\interior}}
\newcommand{\interiorT}{\cev{\interior}}
\newcommand{\fpthsF}{\mathtt{FPaths^F}}
\newcommand{\nmodels}{\nvDash}
\newcommand{\dbs}{DBS}
\newcommand{\cl}{$\calC$}
\newcommand{\cm}{CM}
\newcommand{\cmc}{CMC}
\newcommand{\pth}{Path}
\newcommand{\cop}{CoPa}
\newcommand{\qdcm}{QdCM}
\newcommand{\ap}{\texttt{AP}}
\newcommand{\model}{\calM}
\newcommand{\peval}{\calV}
\newcommand{\invpeval}{\peval^{-1}}
\newcommand{\eqsign}{\rightleftharpoons}
\newcommand{\cmbis}{\eqsign_{\mathtt{\cm}}}
\newcommand{\cmcbis}{\eqsign_{\mathtt{\cmc}}}
\newcommand{\copbis}{\eqsign_{\mathtt{\cop}}}
\newcommand{\copbisold}{\eqsign_{\mathtt{\cop}}^{\ref{def:CoPabisimilarityOld}}}
\newcommand{\copbisnew}{\eqsign_{\mathtt{\cop}}^{\ref{def:CoPabisimilarity}}}
\newcommand{\dbsbis}{\eqsign_{\mathtt{\dbs}}}
\newcommand{\dbseq}{\eqsign_{\mathtt{\dbs}}}
\newcommand{\iml}{{\tt IML}}
\newcommand{\imlc}{{\tt IMLC}}
\newcommand{\icrl}{{\tt ICRL}}
\newcommand{\slcs}{{\tt SLCS}}
\newcommand{\islcs}{{\tt ISLCS}}
\newcommand{\minilogica}{{\tt MiniLogicA}}
\newcommand{\topochecker}{{\tt topochecker}}
\newcommand{\voxlogica}{{\tt VoxLogicA}}
\newcommand{\graphlogica}{{\tt GraphLogicA}}
\newcommand{\form}{\Phi}
\newcommand{\ltrue}{{\tt true}}
\newcommand{\lfalse}{{\tt false}}
\newcommand{\lneg}{\neg}
\newcommand{\lior}{\bigvee}
\newcommand{\liand}{\bigwedge}
\newcommand{\ldist}{\calD}
\newcommand{\lnear}{\calN}
\newcommand{\lreach}{\rho}
\newcommand{\lsurr}{\calS}
\newcommand{\lnearF}{\vec{\lnear}}
\newcommand{\lnearT}{\cev{\lnear}}
\newcommand{\ltothru}{\vec{\rho}}
\newcommand{\lfromthru}{\cev{\rho}}
\newcommand{\lstothru}{\vec{\zeta}}
\newcommand{\lsfromthru}{\cev{\zeta}}
\newcommand{\imleq}{\simeq_{\iml}}
\newcommand{\icrleq}{\simeq_{\icrl}}
\newcommand{\imlceq}{\simeq_{\imlc}}
\newcommand{\closedefi}{\hfill$\bullet$}
\newcommand{\cnrf}[1]{\hat{#1}}
\newcommand{\sem}[1]{\mkern1mu [\![#1]\!]}
\newcommand{\lts}{LTS}
\providecommand*{\donothing}[1]{}
\begin{document}


\title{On Bisimilarity for Quasi-Discrete Closure Spaces}



\thanks{
Research partially supported by  
MUR project 
PRIN 2020TL3X8X ``T-LADIES'', and partially funded by 
European Union - Next Generation EU - MUR-PRIN 20228KXFN2 ``Stendhal'', 
bilateral project between CNR (Italy) and SRNSFG (Georgia) ``Model Checking for Polyhedral Logic'' (\#CNR-22-010), and  
European Union - Next Generation EU, in the context of The National Recovery and Resilience Plan, Investment 1.5 Ecosystems of Innovation, Project “Tuscany Health Ecosystem” (THE), CUP: B83C22003920001.
The authors are listed in alphabetical order, as they equally contributed to the work presented in this paper.}

\author[V.~Ciancia]{Vincenzo Ciancia\lmcsorcid{0000-0003-1314-0574}}[a]	
\author[D.~Latella]{Diego Latella\lmcsorcid{0000-0002-3257-9059}}[b]	
\author[M.~Massink]{Mieke Massink\lmcsorcid{0000-0001-5089-002X}}[a]	
\author[E.~P.~de Vink]{Erik P.~de Vink\lmcsorcid{0000-0001-9514-2260}}[c]

\address{Istituto di Scienza e Tecnologie dell'Informazione ``A. Faedo'', Consiglio Nazionale delle Ricerche, Pisa, Italy}	
\email{Vincenzo.Ciancia@cnr.it, Mieke.Massink@cnr.it}  

\address{Formerly with Istituto di Scienza e Tecnologie dell'Informazione ``A. Faedo'', Consiglio Nazionale delle Ricerche, Pisa, Italy (ret.)}	
\email{Diego.Latella@actiones.eu}  

\address{Department of Mathematics and Computer Science,
  Eindhoven University of Technology, Eindhoven, The Netherlands}
\email{evink@win.tue.nl}





\begin{abstract}
Closure spaces, a generalisation of topological spaces, have shown to be a convenient
theoretical framework for spatial model-checking.
  The closure operator of closure spaces and quasi-discrete closure
  spaces induces a notion of neighbourhood akin to that of topological
  spaces that build on open sets. For closure models and
  quasi-discrete closure models, in this paper we present three notions of bisimilarity that are
  logically characterised by corresponding modal logics with spatial
  modalities: (i)~\cm-bisimilarity for closure models (\cm{s}) is
  shown to generalise topo-bisimilarity for topological
  models and to be an instantiation of neighbourhood bisimilarity, when \cm{s} are seen as  (augmented) neighbourhood models.  \cm-bisimilarity corresponds to equivalence with respect to
  the infinitary modal logic~\iml{} that includes the
  modality~$\lnear$ for ``being near to''. (ii) \cmc-bisimilarity, with \cmc{} standing for \cm-bisimilarity \emph{with converse}, refines
  \cm-bisimilarity for quasi-discrete closure spaces, carriers of
  quasi-discrete closure models. Quasi-discrete closure models come
  equipped with two closure operators, $\closureF$ and~$\closureT$,
  stemming from the binary relation underlying closure and its
  converse.  \cmc-bisimilarity, is captured by the infinitary modal
  logic~\imlc{} including two modalities, $\lnearF$ and~$\lnearT$,
  corresponding to the two closure operators.
  (iii)~\cop-bisimilarity on quasi-discrete closure models, which is weaker
  than \cmc-bisimilarity, is based on the notion of
  \emph{compatible paths}. The logical counterpart of
  \cop-bisimilarity is the infinitary modal logic~\icrl{} with
  modalities $\lstothru$ and~$\lsfromthru$ whose semantics relies on
  forward and backward paths, respectively. It is shown that
  \cop-bisimilarity for quasi-discrete closure models relates to
  divergence-blind stuttering equivalence for Kripke models.
\end{abstract}

\maketitle


\section{Introduction}
\label{sec:Introduction}

Space, like time, has been the object of human reasoning probably since the start of humankind. Though both concepts have been profoundly studied in philosophy and mathematics, in the area of formal methods for computer science time has received much more attention than space. Much work can be found on temporal logics and its application to in formal verification techniques such as model checking. Spatial logics have received more attention only very recently, also in combination with time (see for example~\cite{Ai+07}). In this paper we aim at the development of a uniform, coherent and self-contained framework for the design of logic based, automatic verification techniques, notably model checking, for representations of space and the analysis thereof. We will do so by adopting {\em closure spaces} (\cs{s}), a generalisation of topological spaces, as a
general underlying theory, for space and proposing, for the spatial setting, notions and techniques that have often been inspired by similar ones in the context of modal logic and concurrency theory.

In the well-known topological interpretation of modal logic, a point
in space satisfies the formula~$\Diamond \, \form$ whenever it belongs
to the \emph{topological closure} of the set $\sem{\form}$ of all the
points satisfying formula~$\form$ (see e.g.,~\cite{vBB07}).
The approach dates back to Tarski and McKinsey~\cite{MT44:am}, who
proposed topological spaces and their 
{\em interior operator}~$\Box$ --- i.e., the boolean dual of $\Diamond$ ---
as the fundamental basis for reasoning about space.
However, the idempotence property of topological closure renders
topological spaces to be too restrictive for specific applications.
For instance, discrete structures useful for certain representations
of space, like general graphs or adjacency graphs, including 2D and 3D
images, cannot be captured topologically.
To that purpose, a more liberal notion of space, namely that of
closure spaces, has been proposed in the
literature~\cite{Smy95:tcs,Gal03}.
Closure spaces do not require idempotence of the closure operator
(see~\cite{Gal03} for an in-depth treatment of the subject).

For an interpretation of modal logic 
on {\em closure models} (\cm{s}), i.e., models based on \cs{s},
\cite{Ci+14,Ci+16}~introduced the \emph{Spatial Logic for Closure
  Spaces} (\slcs) that includes the modality of the surround
operator~$\lsurr$.
A point $x$ satisfies
$\form_1 \, \lsurr \, \form_2$ if it lays in a set
$A \subseteq \sem{\form_1}$ while the external border of~$A$
consists of points satisfying $\form_2$, i.e.,\ the point~$x$ lays in an
area satisfying~~$\form_1$ which is surrounded by points
satisfying~$\form_2$. A model-checking algorithm has been proposed in
~\cite{Ci+14,Ci+16} that has been implemented in the tool
\topochecker~\cite{Ci+18,Ci+15} and in
\voxlogica~\cite{Be+19}, a tool specialised for spatial model-checking
digital images, that can be modelled as \emph{adjacency spaces}, a
special case of closure spaces.

The logic and the above mentioned model checkers have been applied to several case
studies~\cite{Ci+16,Ci+15,Ci+18} including a declarative approach to
medical image analysis~\cite{Be+19,Be+19a,Ba+20,Be+21}, 
where the logic has been extended with a {\em distance} operator $\ldist^I$ in order to be able to express properties of points in {\em distance \cm{s}}, i.e., \cm{s} based on \cs{s} equipped with a suitable {\em distance} function. A point $x$ in the space satisfies formula $\ldist^I \form$
 if its distance from the set of points satisfying $\form$ falls in interval $I$.
An encoding of
the discrete Region Connection Calculus RCC8D of~\cite{Ra+13} into the
collective variant of \slcs{} has been proposed in~\cite{Ci+19b}. The
logic has also inspired other approaches to spatial reasoning in the
context of signal temporal logic and system
monitoring~\cite{NBBL22,NBCLM18} and in the verification of
cyber-physical systems \cite{TKG17}.

 A key question, when reasoning about modal logics and their models, is
the relationship between logical equivalences and notions of
bisimilarity defined on their underlying models. This is of great  importance not
only from a theoretical perspective, but also from a practical point of view. In fact,
the existence of such bisimilarities, and 
the existence of a logical characterisation thereof, 
often referred to as the ``Hennessy-Milner'' property of the bisimilarity,
makes it possible to exploit minimisation
procedures for bisimilarity with the purpose of 
model reduction, which may help making 
model-checking more efficient --- and spatial models are notoriously large.
For instance, in~\cite{Ci+23} an encoding of a class of closure spaces
into labelled transition systems has been proposed such that two points in
space are bisimilar --- for an appropriate notion of spatial bisimilarity --- if and only if their
images in the associated labelled transition system are branching equivalent.
Thus, model-checking a formula of the logic characterising spatial bisimilarity can
be safely performed on a model that has been minimised using efficient tools for branching equivalence.

In the present paper we study three different notions of bisimilarity for
\cm{s} and their
associated logics. The first one is \emph{\cm-bisimilarity}, that is
an adaptation for closure models of classical topo-bisimilarity for
topological models as defined in~\cite{vBB07}. Actually,
\cm-bisimilarity is an instantiation to closure models of monotonic
bisimulation on neighbourhood models~\cite{HKP09,Pac17}. In fact, it
is defined using the monotonic interior operator of closure models, 
which characterises the neighbourhoods of a point,
thus making closure models an instantiation of monotonic
neighbourhood models.  We show that,
as one might expect,
 \cm-bisimilarity is logically
characterised by 
a logic with a single modality, namely $\lnear$ (to be read as ``near''), expressing {\em proximity},
and infinite conjunction. $\lnear$ coincides with the $\Diamond$  modality in the topological interpretation of modal logic of Tarski and McKinsey. We call the resulting language Infinitary Modal Logic,~\iml.

For {\em quasi-discrete} closure models (\qdcm{s}), i.e., closure models where every point has a minimal neighbourhood, it is convenient to use a more intuitive, and simpler, definition of \cm-bisimilarity, given directly in terms of the 
closure operator --- instead of the interior operator. Such a definition is reminiscent of the definition of strong equivalence for labelled transition systems~\cite{Mil89}. The direct use of the closure operator in the definition of \cm-bisimilarity simplifies several proofs.

We then present a refinement of \cm-bisimilarity, specialised  for \qdcm{s}. In quasi-discrete closure spaces, the closure of a set of points --- and also its interior~--- can be expressed using an underlying binary relation. This gives rise to both a {\em direct} closure and interior of a set, and a {\em converse} closure and interior, the latter being obtained using the converse of the  binary relation. In turn, this induces a refined notion of bisimilarity, {\em \cm-bisimilarity with converse}, abbreviated as \cmc-bisimilarity, which is shown to be strictly stronger than \cm-bisimilarity.

We also present a closure-based definition 
for \cmc-bisimilarity, originally proposed in~\cite{Ci+20}, that resembles strong back-and-forth bisimilarity for labelled transition systems proposed by De Nicola, Montanari, and Vaandrager in~\cite{De+90}.  
In order to capture \cmc-bisimilarity logically, we extend \iml{} with the
converse of its unary modal operator and show that the resulting
logic~\imlc{} characterises \cmc-bisimilarity.  
We recall here that in~\cite{Ci+20} a minimisation
algorithm for \cmc-bisimulation and the related tool \minilogica{} have
been proposed as well.

\cm-bisimilarity and \cmc-bisimilarity play an
important role as they are counterparts of classical
Topo-bisimilarity. On the other hand, they turn out to be rather
strong when one has intuitive relations on space  in mind  --- like reachability, or adjacency, of areas of points with specific features --- that may be useful for instance when dealing with models representing images.

Let us consider, as an example, the image of a maze shown in Figure~\ref{fig:TheMaze}. In the image, 
walls are represented in black and the exit area, situated at the border of the image, is shown in green; 
the floor is represented in white.
Suppose we want to know whether, starting from a given point in the maze --- for instance one of those shown in blue in the picture --- one can reach the exit area. 

\begin{figure}
\def\samplesz{2.5cm}
\centering
\subfloat[][]
{
\includegraphics[height=3.5cm]{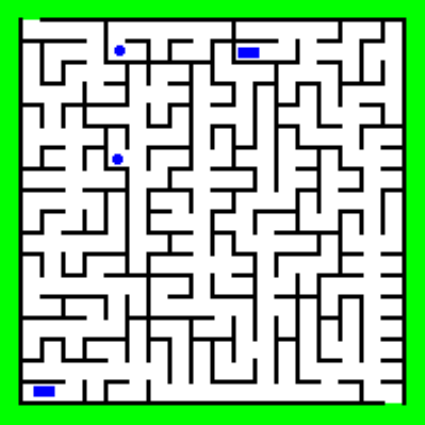}
\label{fig:TheMaze}
}\quad
\centering
\subfloat[][]
{
\includegraphics[height=1.7cm]{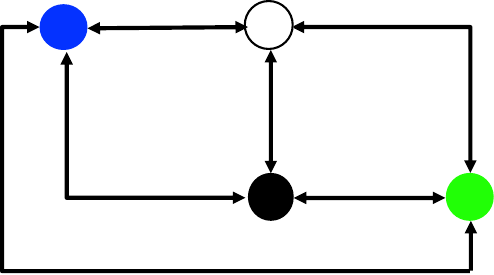}
\label{fig:PathReducedMaze}
}\quad
\centering
\subfloat[][]
{
\includegraphics[height=\samplesz]{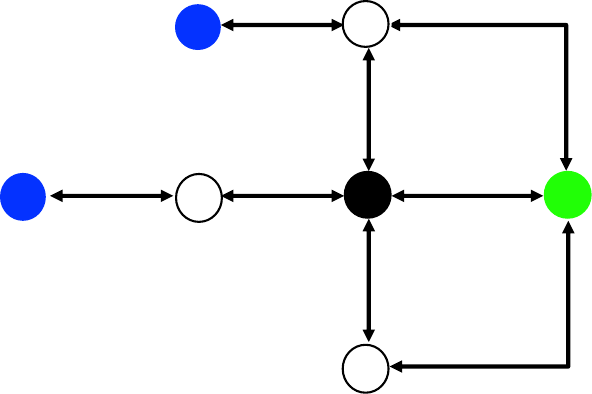}
\label{fig:CoPaReducedMaze}
}
\caption{A maze~(\ref{fig:TheMaze}), its path minimal model~(\ref{fig:PathReducedMaze}), and its  \cop-minimal models~(\ref{fig:CoPaReducedMaze}). In the maze, there are two exit gates, one at the top-left corner and the other one in the bottom-right corner.}
\label{fig:Maze}
\end{figure}

In other words, we are interested in those paths in the picture, starting in blue points, 
leading to green points {\em passing only} through white points.
In~\cite{Ci+21} we introduced the notion of {\em spatial path bisimilarity}. Essentially, 
two points are path bisimilar if (they satisfy the same set of atomic proposition letters and) for every path starting in one of the two points, there is a path starting in the other point such that the end-points of the two paths are path bisimilar. This is illustrated in more detail in Figure~\ref{fig:Maze}.
Figure~\ref{fig:PathReducedMaze} shows the minimal model for the maze shown in Figure~\ref{fig:TheMaze} according to path bisimilarity. We  see that all blue points are equivalent --- they all collapse to a single blue point in the minimal model --- and so are all white points. In other words, we are unable to distinguish those blue (white) points from which one can reach an exit  
from those from which one cannot, due to the (black) walls. This is not satisfactory; as we said before, we would like to differentiate the blue points (also) on the basis of whether they can reach an exit, considering the presence of walls.
We are then looking for  a notion of equivalence that is not only based on reachability, but also takes the structure of paths into consideration. Reachability should not be unconditional, i.e., the  relevant paths should share some common structure. For that purpose, we resort to a notion of ``mutual compatibility'' between relevant paths that essentially requires each of them to be composed by a sequence of non-empty ``zones''. The total number of zones in each of the  paths must be the same, while the zones may be of arbitrary (but not zero) length. Each element of one path in a  given zone is required to be related by the bisimulation relation to all the elements in the corresponding zone in the other path, as illustrated in Figure~\ref{fig:Zones}.

\begin{figure}
  \centering
  \includegraphics[height=2.3cm]{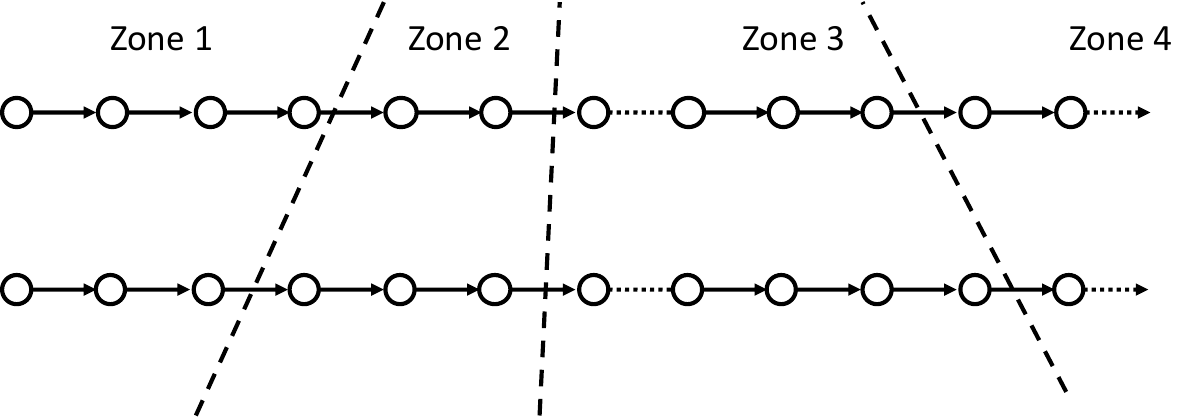}
  \caption{Zones of compatible paths.}
  \label{fig:Zones}
\end{figure}

This idea of compatibility of paths gives rise to the third notion of
bisimulation, namely {\em compatible path bisimulation}, or 
\cop-bisimulation for short, which is strictly stronger than path bisimilarity
and, for quasi-discrete closure model{s}, strictly weaker than
\cmc-bisimilarity.

The minimal model, according to \cop-bisimilarity, for the image of the maze shown in Figure~\ref{fig:TheMaze}, is shown
Figure~\ref{fig:CoPaReducedMaze}. It is worth noting that this model distinguishes the blue points
from which one can reach green ones passing only by white points from those from which one cannot. 
Similarly, white points through which an exit can be reached from a blue point (without passing through a black point) are distinguished both from those that cannot be reached from blue points and from those through which no green point can be reached. 
Finally, we note that the \cop-minimal model also shows that there is no white area that does not contain a blue region from which a green exit cannot be reached (i.e., a  completely white area  surrounded by walls).

We provide a
logical characterisation of \cop-bisimilarity by means of  the 
{\em Infinitary Compatible Reachability Logic} (\icrl). This logic involves the modalities of forward and backward
{\em conditional} reachability, $\lstothru$ and~$\lsfromthru$. 
The notion of \cop-bisimulation is reminiscent of that of 
{ stuttering equivalence} for Kripke models,
going back to~\cite{BCG88}, although in a different context and with
different definitions as well as underlying notions. We show the
coincidence of \cop-bisimilarity and an adaptation to the \qdcm{s} setting of divergence-blind
stuttering equivalence~\cite{Gr+17}.

It is worth noting that, in the context of space, and in particular when dealing with notions of directionality (e.g., one-way roads, public area gates, expansion of/radiation from certain substances), it is essential to be able to distinguish between the concept of “reaching” and that of “being reached”. For instance, one might be interested to express the fact that, from a certain location, 
via a safe corridor, a rescue area can be reached (forward reachability) that cannot be reached by (backward reachability) smoke generated in another area. This kind of situations have no obvious counterpart in the temporal domain, including temporal logics with ``past'' operators, where there can be more than one future, like in the case of branching-time logics, but there is typically only one, fixed, past. 
In fact, for the great majority of temporal logics in the literature,
it is typically assumed that there is  only one, fixed, past, i.e. the one that has occurred, whereas, at least in the case of  branching-time temporal logics, more than one possible future is allowed. In particular, for branching-time temporal logics, the model of time is typically a {\em tree}, i.e. a partial ordering $(\calT,\preceq)$ where every instant $t\in \calT$ has a {\em linearly} ordered set of $\prec$-predecessors in $\calT$ as remarked also in the Stanford Encyclopedia of Philosophy: \begin{quotation}
``Backward-linearity captures the idea that the past is fixed; forward-branching reflects the openness of the future. At each point in time, there may be more than one path leading towards the future, and each such path represents a future possibility.''~\cite{GoR23}.
\end{quotation}
There are only a few notable exceptions to this interpretation of the past, e.g.,~\cite{KuR97,Sti93}, although we are not aware of the existence of any supporting verification tool, such as model-checkers, for the logics presented in these works. 
We point out that doubly labelled transition systems, i.e., transition systems where
both states and transitions are labelled, in conjunction with languages like the Modal $\mu$Calculus, could be used for encoding both forward and backward spatial reachability.
One could, for instance, encode a finite closure model into a labelled transition system using two, ad hoc, transition labels, one for ``forward'' transitions, used for simulating forward proximity a reachability, and the other for ``backward'' ones for simulating backward proximity and reachability.
However, as we mentioned before, this paper focuses on the commonalities of various bisimilarities for closure models, with particular emphasis on \qdcm{s}. As we remarked before, the main objective of our work is, therefore, the development of a {\em uniform}, {\em coherent} and {\em self-contained} theoretical framework for the design and application of logic-based, automated verification techniques, notably model-checking, to representations of space and the analysis thereof.

Many notions we have developed for our theory have their conceptual origin in other areas of research. 
For instance, \cm-bisimilarity is closely
related to neighbourhood bisimilarity, while the logic \imlc{}, inspired
by the work of De Nicola et al.\ on labelled transition systems~\cite{De+90}, is reminiscent of modal logic with a direct and a converse modality.

Finally, \cop-bisimilarity is reminiscent of stuttering equivalence and the forward and backward reachability operators 
of \icrl{} could be simulated using the Modal $\mu$Calculus over doubly labelled transition systems, as we noted before.

One could wonder why we are developing such a theoretical framework, affording a task that looks like rephrasing, in the language of topology,  notions and results already developed in other contexts. Let us address this question in more detail. At a methodological level, as we have already stated above, we need a uniform, coherent and self-contained theoretical framework for spatial models in order to develop efficient analysis techniques and tools, and for optimising them with the support of a solid mathematical basis.
This is, for instance, the case for model reduction based on minimisation modulo bisimilarity.
Of course, a user could map, every time,  and according to specific needs, \cm{s} and related logics to different models. Thus one could use Kripke models or neighbourhood models and classical modal logic with a single modality for \cm-bisimilarity and \iml{.} Similar structures and
 modal logic with two modalities could be used for \cmc-bisimilarity and \imlc{.} Doubly labelled transition systems and the Modal $\mu$Calculus could be used for \cop-bisimilarity and \icrl{.}
 However, this would have a negative impact on the {\em usability} of our framework since the knowledge needed for using it would have to span over many different theories, making reasoning about space more complicated. In addition, the fact that such a framework would not
 be self-contained, would have a negative impact also on the development of tools supporting model analysis, since different kinds of model reduction techniques require ad hoc encodings from models of space to different target models (see, e.g.,~\cite{Ci+23}). Therefore, we prefer to develop a single, coherent, framework, with all (and only) those conceptual tools that are necessary and useful for our purposes, the latter being the modelling of space and the analysis of such models via model-checking.

 It is finally worth pointing out that in~\cite{Ci+23a} we adapted  the concept of
 path compatibility, discussed in depth in the present paper, to so called {\em $\pm$-paths} and we
 introduced a novel notion of bisimilarity, namely {\em $\pm$-bisimilarity}. The latter applies to {\em 
 cell poset models}, that are discrete representations of polyhedra models, i.e., {\em continuous} models of space (the interested reader is referred to~\cite{Bez+22,Ci+23a} for details). 
 We have shown that $\pm$-bisimilarity on cell poset models
 coincides with logical equivalence on such models. The latter coincides in turn  with {\em simplicial bisimilarity} on polyhedra models. 
In~\cite{Be+24a} a weaker logic and weaker versions of simplicial bisimilarity and 
$\pm$-bisimilarity have been proposed. This 
 logic as well as the related bisimilarities allow for
efficient model minimisation while offering adequate expressive power~\cite{Be+24b}.
 
It is  worth noting that \cmc-bisimilarity on cell poset models is included in $\pm$-bisimilarity, so that, it is safe\footnote{Although, obviously not optimal, but maybe cheap, given that efficient algorithms that implement reduction via \cmc-bisimilarity are rather easy to implement.} to use \cmc-bisimilarity for model reduction. This clearly illustrates that the availability of a uniform framework for dealing with space, both discrete and continuous, is quite profitable and that developing such a framework is worthwhile.

The present paper builds upon ideas and notions originally presented by the authors in~\cite{Ci+20,Ci+21,Ci+22a}. In particular,  this paper contains all the proofs of the results presented in~\cite{Ci+22a}. Furthermore, an alternative, more intuitive, definition of \cop-bisimilarity is proposed that is based on an explicit notion of path compatibility.
Finally,  additional results relating our notions with those in the literature are proved as well.

\paragraph{\em Contributions.} Summarising, in this paper 
we develop a uniform, coherent and self-contained theoretical framework for 
modelling space and for the design and application of  efficient, logic based, model analysis techniques. In particular, 
we present three notions of bisimilarity that are logically characterised by corresponding modal logics with spatial modalities:
\begin{itemize}
\item \cm-bisimilarity for closure models  (\cm{s}), which generalises topo-bisimilarity for topological models and coincides with neighbourhood bisimilarity when \cm{s} are seen as (augmented)
neighbourhood models. \cm-bisimilarity corresponds to equivalence with respect to the infinitary modal logic IML that includes the proximity modality~$\lnear$ standing for “being near”.
\item \cmc-bisimilarity, which refines
  \cm-bisimilarity for quasi-discrete closure spaces, carriers of
  quasi-discrete closure models. \cmc-bisimilarity, is captured by the infinitary modal
  logic~\imlc{} including two proximity modalities, $\lnearF$ and~$\lnearT$,
  corresponding to the forward and backward closure operators, respectively.
\item \cop-bisimilarity on quasi-discrete closure models based on the notion of
  \emph{compatible paths} and which is weaker
  than \cmc-bisimilarity. The logical counterpart of
  \cop-bisimilarity is the infinitary modal logic~\icrl{} with conditional
  reachability modalities $\lstothru$ and~$\lsfromthru$ whose semantics relies on
  forward and backward paths, respectively. It is shown that
  \cop-bisimilarity for quasi-discrete closure models relates to
  divergence-blind stuttering equivalence for Kripke structures.
\end{itemize}
We also discuss some preliminary results of a feasibility study of spatial model checking based on a suitable LTS-encoding in {\tt mCRL2}~\cite{Bu+19}
of a spatial model to obtain \cop-bisimilarity via the application of branching bisimilarity minimisation on the encoded LTS~\cite{Ci+23}.

\paragraph{\em Further related work.} Our work is also inspired by spatial logics (see~\cite{vBB07} for an
extensive overview), including seminal work dating back to Tarski and
McKinsey in the forties of the previous century. The work on {\em
  spatial model-checking} for logics with reachability originated in
\cite{Ci+14} and~\cite{Ci+16}, which includes a comparison to the work of Aiello on
spatial \emph{until} operators (see e.g.,~\cite{A02}). 
In~\cite{A03},
Aiello envisaged practical applications of topological logics with
\emph{until} to minimisation of images. Recent work in~\cite{Ci+23} builds on
and extends that vision taking \cop-bisimilarity as a suitable equivalence for spatial minimisation.

In~\cite{Li+20,Li+21} the spatial logic \slcs{} is studied from a model-theoretic perspective. In particular, in~\cite{Li+20} the authors are focused on issues of expressivity of \slcs{} in relation to topological connectedness and separation. In~\cite{Li+21} it is shown that the logic admits finite models for a sub-class
of neighbourhood models, namely the class of quasi-discrete neighbourhood models, both for topological paths and for quasi-discrete ones --- which consist of an enumeration of points --- whereas it does not enjoy the finite model property
for general neighbourhood models, regardless of the kind of paths considered.

The work in~\cite{Bez+22} and in~\cite{LoQ21} introduces
bisimulation relations that characterise spatial logics with
reachability in polyhedral models and in simplicial complexes,
respectively. It will be interesting future work to apply the
minimisation techniques we present here to such relevant classes of models.

Spatial analysis based on \qdcm{s} is also related to work on qualitative reasoning about spatial
entities (see~\cite{CohnR2008} and references therein), also known as QSR (Qualitative Spatial Reasoning). 
This is a very active area of research in which the theory of topology and that of closure spaces play an
important role. Prominent examples of that area are the region
connection calculi (RCC), such as the discrete space variant RCC8D. As mentioned before, an embedding of the latter in the
collective variant of \slcs\ was presented in~\cite{Ci+19b}. Such embedding enables the application of spatial model-checking of RCC8D via the spatial model checker \voxlogica~\cite{Be+19}.

In~\cite{HKP09}, coalgebraic bisimilarity is developed for a general kind of models, generalising the topological ones, known as Neighbourhood Frames.  At the moment, the notions of path and reachability are not part of the framework (that is, bisimilarity in neighbourhood semantics is based on a one-step relation), thus the results therein, although more general than the theory of closure spaces, cannot be directly reused in the work we present in this paper. Steps towards extending Neighbourhood Frames in this direction in the context of Closure Hyperdoctrines can be found in~\cite{CaM21}.
 
In the Computer Science literature, other kinds of spatial logics have been proposed that typically describe situations in which modal operators are interpreted {\em syntactically} against the {\em structure of agents} in a process calculus. We refer 
to~\cite{CaG00,CaCC03} for some classical examples. Along the same lines, an example is given in~\cite{TPGN15}, concerning model-checking of security aspects in cyber-physical systems, in a spatial context based on the idea of bigraphical reactive systems introduced by Milner~\cite{Mil09}. A bigraph consists of two graphs: a place graph, i.e., a forest defined over a set of nodes which is intended to represent entities and their locality in terms of a containment structure, and a link graph, a hyper graph composed over the same set of nodes representing arbitrary linking among those entities. The \qdcm{s} that are the topic of the present paper, instead, address space from a topological point of view rather than viewing space as a containment structure for spatial entities.

\paragraph{\em Structure.} The paper is organised as follows: Preliminary notions and definitions
can be found in Section~\ref{sec:Preliminaries}, while
Section~\ref{sec:CMbisimilarity} is devoted to \cm-bisimilarity and
the logic~\iml. Section~\ref{sec:CMCbisimilarity} deals with
\cmc-bisimulation and the logic~\imlc. Next,
Section~\ref{sec:COPAbisimilarity} addresses \cop-bisimilarity together with the
logic~\icrl.  
Section~\ref{sec:conclusions} presents conclusions and open questions for future work.
An additional proof is shown in the appendix.


\section{Preliminaries}
\label{sec:Preliminaries}

This section introduces  relevant notions, concepts, and the notation that are used in the sequel.
In particular, we recall the notions of closure space and quasi-discrete closure space, paths in such spaces, and closure models.

Given a set $X$, $|X|$ denotes the cardinality of $X$ and $\pws(X)$ denotes the power set of $X$;
for $Y \subseteq X$ we use $\overline{Y}$ to denote $X\setminus Y$,
i.e.,\ the complement of~$Y$ with respect to~$X$. 
For $x_0,\ldots,x_{\ell} \in X$, we let $(x_i)_{i=0}^{\ell}$ denote
the sequence $(x_0,\ldots,x_{\ell})\in X^{\ell+1}$.
For
$m_1,m_2 \in \nats$, with~$\nats$ the set of natural numbers,
we use $[m_1;m_2]$ to denote the set
$\ZET{n \in \nats}{m_1 \leqslant n \leqslant m_2}$ and $[m_1;m_2)$ to
denote $\ZET{n \in \nats}{m_1 \leqslant n < m_2}$. The sets
$(m_1;m_2]$ and $(m_1;m_2)$ are defined in a similar way.
For a function $f:X \to Y$ and $A \subseteq X$ and $B \subseteq Y$ we have
$f(A) = \ZET{f(a)}{a \in A}$ and $f^{-1}(B) = \ZET{a}{f(a) \in B}$, respectively. We let $\range(f)$ denote the
range~$f(X)$ of~$f$. The \emph{restriction} of~$f$ to~$A$ will be
denoted by $f|A$, as usual. For a binary relation
$R \subseteq X \times X$, we use~$R^{-1}$ to denote the converse relation
$\ZET{(x_1,x_2)}{(x_2,x_1) \in R}$, $R^=$ to denote the reflexive closure of $R$,
and we let $\cnrf{R}$ be defined as $(R^=)^{-1}$. We let $R[x]$ denote the set $\ZET{x'\in X}{(x,x')\in R}$.
In the sequel, we assume a set $\ap$ of \emph{atomic proposition letters} is given.

\begin{defi}[Kripke model]\label{def:KripkeModel}
A {\em Kripke frame} is a pair $(X,R)$ where $X$ is a set and $R\subseteq X \times X$ is a binary relation over $X$. A {\em Kripke model} is a triple $(X,R,\peval)$ where $(X,R)$ is a Kripke frame
and $\peval: \ap \to \pws(X)$ is an evaluation function for proposition letters  mapping each $p \in \ap$ into the set of elements of $X$ satisfying $p$.
\closedefi
\end{defi}

\begin{defi}[Kripke bisimulation]\label{def:KripkeBisimulation}
  Given a Kripke model $\calK = (X, R, \peval)$, a symmetric relation
  $B \subseteq X \times X$ is called a \emph{bisimulation}
  for~$\calK$ if for all $x_1, x_2 \in X$ such that~$(x_1,x_2)\in B$ the
  following holds:
  \begin{enumerate}
  \item  $x_1 \in \peval(p)$ if and only if  $x_2 \in \peval(p)$, for all $p\in\ap$.
  \item If $(x_1,x'_1) \in R$ for some $x'_1 \in X$, then 
  $x'_2 \in X$ exists such that  $(x_2,x'_2) \in R$ and $(x'_1,x'_2)\in B$.
  \end{enumerate}
  Two points $x_1, x_2 \in X$ are called bisimilar in~$\calK$ if
  $(x_1,x_2)\in B$ for some bisimulation~$B$ for~$\calK$. 
\closedefi
\end{defi}

\begin{defi}[Neighbourhood model]\label{def:NeighbourhoodModel}
A {\em neighbourhood frame} is a pair $(X,\nu)$ where $X$ is a set and $\nu:X \to \pws(\pws(X))$
is a neighbourhood function mapping each element of $X$ into its collection of neighbourhoods.
A neighbourhood frame is {\em monotonic} if, for all $x\in X$ and all $U,U' \subseteq X$ it holds that
$U \subseteq U'$ and $U \in \nu(x)$ implies $U' \in \nu(x)$. A monotonic frame is  {\em augmented} if, for all 
$x\in X$, $\bigcap \nu(x) \in \nu(x)$.
A {\em neighbourhood model} is a triple $(X,\nu,\peval)$ where $(X,\nu)$  is a  neighbourhood frame
and $\peval: \ap \to \pws(X)$ is an evaluation function mapping each proposition letter $p \in \ap$ into the set of elements of $X$ satisfying $p$.  Neighbourhood model $(X,\nu,\peval)$ is monotonic/augmented
if $(X,\nu)$ is so.
\closedefi
\end{defi}

\begin{defi}[Monotonic neighbourhood bisimulation]\label{def:MonoNeiBisimulation}
  Given a monotonic neighbourhood model $\calJ=(X,\nu,\peval)$, a symmetric relation
  $B \subseteq X \times X$ is called a \emph{bisimulation}
  for~$\calJ$ if for all $x_1, x_2 \in X$ such that~$(x_1,x_2)\in B$ the
  following holds:
  \begin{enumerate}
  \item  $x_1 \in \peval(p)$ if and only if  $x_2 \in \peval(p)$, for all $p\in\ap$.
  \item For all $U_1\in \nu(x_1)$,  $U_2 \subseteq X$ exists such that
   $U_2\in \nu(x_2)$ and, for all $x'_2 \in U_2$,  $x'_1 \in U_1$ exists such that $(x'_1,x'_2)\in B$.
  \end{enumerate}
  Two points $x_1, x_2 \in X$ are called bisimilar in~$\calJ$ if
  $(x_1,x_2)\in B$ for some bisimulation~$B$ for~$\calJ$. 
\closedefi
\end{defi}

\noindent
Given a Kripke frame $(X, R)$, one can define a corresponding augmented neighbourhood frame
$(X,\nu_R)$ by putting $\nu_R(x)=\ZET{U\subseteq X}{R[x]\subseteq U}$. The correspondence extends to Kripke and neighbourhood models in the obvious way. The following lemma relates the two notions of a model~\cite{HKP09}:
\begin{lem}\label{KbisAndNbis}
Given a Kripke model $\calK=(X, R, \peval)$ and a neighbourhood model
$\calJ=(X,\nu_R,\peval)$ with $\nu_R(x)=\ZET{U\subseteq X}{R[x]\subseteq U}$ for all $x\in X$,
the following holds: for any two  $x_1,x_2\in X$,
$x_1$ and $x_2$ are Kripke bisimilar as elements of $\calK$ if and only if they are 
neighbourhood bisimilar as elements of $\calJ$.
\end{lem}

\noindent
Our  framework  for modelling space is based on the notion of {\em \v{C}ech Closure Space}~\cite{Cec66}, \cs{} for short, also known as {\em pre-topological space}, that provides a convenient common framework for the study of several kinds of spatial models, including models of both discrete and continuous space~\cite{SmW07}. We briefly recall several definitions and results on
\cs{s}, most of which are borrowed from~\cite{Gal03}.

\begin{defi}[Closure Space]\label{def:ClosureSpace}
  A \emph{closure space}, \cs{} for short, is a pair $(X,\closure)$
  where $X$ is a  set (of {\em points}) and {\em
    $\closure: \pws(X) \to \pws(X)$} is a function, referred
    to as the closure operator, satisfying the following axioms:
\begin{enumerate}
\item $\closure(\emptyset)=\emptyset$;
\item $A \subseteq \closure(A)$ for all  $A \subseteq X$;
\item \label{CS3}
  $\closure(A_1 \cup A_2) = \closure(A_1) \cup \closure(A_2)$ for
  all $A_1,A_2\subseteq X$. 
\closedefi
\end{enumerate}
\end{defi}

\noindent
  A closure operator~$\closure$ is monotone: if $A_1 \subseteq A_2$
  then
  $\closure(A_2) = \closure(A_1 \cup A_2) = \closure(A_1) \cup
  \closure(A_2)$, hence $\closure(A_1) \subseteq \closure(A_2)$.
It is also worth pointing out that topological spaces 
coincide with the sub-class of \cs{s} for which the closure operator
is idempotent, i.e.,
\begin{enumerate}
\item[(4)]$\closure(\closure(A))= \closure(A)$,  for all~$A \subseteq X$.
\end{enumerate}
Axioms (1)--(4) are commonly known as the Kuratowski closure axioms for topological spaces, after K. Kuratowski, who first formalised them~\cite{Kur1922}.

Closure models are the basic structures for  interpreting spatial logics on closure spaces:
\begin{defi}[Closure model]
  \label{def:ClosureModel}
  A \emph{closure model}, \cm{} for short, is a tuple
  $\model = (X,\closure, \peval)$, with $(X,\closure)$ a \cs{,} and
  $\peval: \ap \to \pws(X)$ the (atomic proposition letters) valuation function,
  assigning to each $p \in \ap$ the set of points where $p$~holds.
  \closedefi
\end{defi}

Below, we recall some basic notions and results concerning \cs{s} and \cm{s}.

The \emph{interior} operator
$\interior : \pws(X) \to \pws(X)$ of CS~$(X,\closure)$ is the
dual of the closure operator:
$\interior(A) = \overline{\closure(\overline{A})}$
for~$A \subseteq X$.
It holds that $\interior(X) = X$, $\interior(A) \subseteq A$, and
$\interior(A_1 \cap A_2) = \interior(A_1) \cap \interior(A_2)$.
Like the closure operator, the interior operator is monotone.
A \emph{neighbourhood} of a point $x \in X$ is any set
$A \subseteq X$ such that $x \in \interior(A)$.
A \emph{minimal} neighbourhood of a point $x \in X$ is a neighbourhood
$A$ of $x$ such that $A \subseteq A'$ for any other neighbourhood~$A'$
of~$x$.

We will use the following property of closure spaces:
\begin{thmC}[(Theorem~14.B.6 of~\cite{Cec66})]
  \label{thm-point-in-closure}
  Let $(X,\closure)$ be a CS\@. For $x \in X$, $A \subseteq X$, it
  holds that $x \in \closure(A)$ iff $U \cap A \neq \emptyset$ for
  each neighbourhood~$U$ of~$x$.
\end{thmC}

\noindent
Given a closure space $(X, \closure)$, one can define a corresponding neighbourhood space
$(X,\nu_{\closure})$ by putting $\nu_{\closure}(x)=\ZET{U\subseteq X}{x \in \interior(U)}$. The relationship extends to closure and neighbourhood models in the obvious way.\\

\noindent
In the sequel we will mainly work with quasi-discrete closure spaces,
rather than closure spaces in general.

\begin{defi}[Quasi-discrete closure space]
  \label{def:QDClosureSpace}
  A \emph{quasi-discrete closure space}, \qdcs{} for short, is a \cs{}
  $(X,\closure)$ such that any of the following equivalent conditions
  holds:
  \begin{enumerate}
  \item each $x \in X$ has a minimal neighborhood;
  \item for each $A \subseteq X$ it holds that
    $\closure(A) = \bigcup_{x\in A}\closure(\SET{x})$.
    \closedefi
  \end{enumerate}
\end{defi}

Given a relation $R \subseteq X \times X$, let the function
$\closure_{R}: \pws(X) \to \pws(X)$ be such that
\begin{displaymath}
  \closure_R(A) = A \cup \ZET{x \in X}{\exists \mkern1mu a \in A \colon
    (a,x) \in R}
\end{displaymath}
for all~$A \subseteq X$. It is easy to see that, for
any relation~$R$, function~$\closure_{R}$ satisfies all the axioms
of Definition~\ref{def:ClosureSpace} and so $(X, \closure_{R})$ is
a~\cs.
  As a consequence, for directed and undirected graphs, the edge 
  relation gives rise to a \qdcs.
  It follows from condition~\ref{CS3} of
  Definition~\ref{def:ClosureSpace} that every finite closure space
  is quasi-discrete.

\blankline

\noindent
The following theorem  is a standard result in the theory of
\cs{s}.

\begin{thmC}[(Theorem 1 of~\cite{Gal03})]\label{thm:QDR}
  A \cs{} $(X, \closure)$ is quasi-discrete iff there is a relation
  $R \subseteq X \times X$ such that $\closure = \closure_{R}$.  
\end{thmC}

Under the conditions of Theorem~\ref{thm:QDR}, we say that $R$ {\em generates} the \qdcs{.}

The definition of closure model is extended to \qdcs{s} in the obvious way.
A \emph{quasi-discrete closure model} (\qdcm{} for short) is a \cm{}
$(X, \closure,\peval)$ where $(X,\closure)$ is a \qdcs.  For a
closure model $\model=(X,\closure,\peval)$ we often write
$x \in \model$ when~$x \in X$. 

In the sequel, whenever a CS $(X, \closure)$ is quasi-discrete, and $R$
is the binary relation on $X$ which generates it, we will often write
$\closureF$ for denoting $\closure_R$, and, consequently,
$(X, \closureF)$ for denoting the closure space, abstracting from the
specification of~$R$. Moreover, we let $\closureT$ denote $\closure_{R^{-1}}$.
In the \qdcs{} $(X,\closureF\,)$ of Figure~\ref{fig:base-set} a  
set $A \subseteq X$ is shown in red. $\closureF(A)$ and $\closureT(A)$ are shown in blue and in green in  Figure~\ref{fig:forward-closure} and Figure~\ref{fig:backward-closure}, respectively. 

\begin{figure}
\centering
\subfloat[][]{
  \scalebox{0.50}{%
%
\begin{tikzpicture}[%
  blackstate/.style = {draw, circle, minimum size=2mm, fill=black},
  redstate/.style = {draw=red, circle, minimum size=2mm, fill=red},
  myedge/.style = {ultra thick, ->, >=stealth'},
  ]

  
  \node [blackstate] (s15) at (1,5) {} ;
  \node [blackstate] (s14) at (1,4) {} ;
  \node [redstate]   (s24) at (2,4) {} ;
  \node [blackstate] (s22) at (2,2) {} ;
  \node [blackstate] (s36) at (3,6) {} ;
  \node [redstate]   (s35) at (3,5) {} ;
  \node [redstate]   (s34) at (3,4) {} ;
  \node [redstate]   (s33) at (3,3) {} ;
  \node [blackstate] (s32) at (3,2) {} ;
  \node [blackstate] (s42) at (4,2) {} ;
  \node [redstate]   (s43) at (4,3) {} ;
  \node [redstate]   (s44) at (4,4) {} ;
  \node [redstate]   (s45) at (4,5) {} ;
  \node [blackstate] (s55) at (5,5) {} ;
  \node [blackstate] (s54) at (5,4) {} ;
  \node [blackstate] (s53) at (5,3) {} ;
  
  \draw [myedge] (s24) -- (s15) ;
  \draw [myedge] (s14) -- (s24) ;
  \draw [myedge] (s24) -- (s34) ;
  \draw [myedge] (s35) -- (s45) ;
  \draw [myedge] (s34) -- (s44) ;
  \draw [myedge] (s34) -- (s35) ;
  \draw [myedge] (s34) -- (s33) ;
  \draw [myedge] (s34) -- (s33) ;
  \draw [myedge] (s33) -- (s22) ;
  \draw [myedge] (s32) -- (s42) ;
  \draw [myedge] (s32) -- (s33) ;
  \draw [myedge] (s45) -- (s55) ;
  \draw [myedge] (s44) -- (s45) ;
  \draw [myedge] (s44) -- (s54) ;
  \draw [myedge] (s43) -- (s34) ;
  \draw [myedge] (s43) -- (s42) ;
  \draw [myedge] (s53) -- (s43) ;
  \draw [myedge] (s54) -- (s53) ;
\end{tikzpicture}
  } 
  \label{fig:base-set}
} \quad
\subfloat[][] {%
  \scalebox{0.50}{%
%
\begin{tikzpicture}[%
  blackstate/.style = {draw, circle, minimum size=2mm, fill=black},
  redstate/.style = {draw=red, circle, minimum size=2mm, fill=red},
  bluestate/.style = {draw=blue, fill=blue, circle, minimum size=2mm},
  myedge/.style = {ultra thick, ->, >=stealth'},
  ]

  
  \node [bluestate]  (s15) at (1,5) {} ;
  \node [blackstate] (s14) at (1,4) {} ;
  \node [bluestate]  (s24) at (2,4) {} ;
  \node [bluestate]  (s22) at (2,2) {} ;
  \node [blackstate] (s36) at (3,6) {} ;
  \node [bluestate]  (s35) at (3,5) {} ;
  \node [bluestate]  (s34) at (3,4) {} ;
  \node [bluestate]  (s33) at (3,3) {} ;
  \node [blackstate] (s32) at (3,2) {} ;
  \node [bluestate]  (s45) at (4,5) {} ;
  \node [bluestate]  (s43) at (4,3) {} ;
  \node [bluestate]  (s42) at (4,2) {} ;
  \node [bluestate]  (s44) at (4,4) {} ;
  \node [bluestate]  (s55) at (5,5) {} ;
  \node [bluestate]  (s54) at (5,4) {} ;
  \node [blackstate] (s53) at (5,3) {} ;
  
  \draw [myedge] (s24) -- (s15) ;
  \draw [myedge] (s14) -- (s24) ;
  \draw [myedge] (s24) -- (s34) ;
  \draw [myedge] (s35) -- (s45) ;
  \draw [myedge] (s34) -- (s44) ;
  \draw [myedge] (s34) -- (s35) ;
  \draw [myedge] (s34) -- (s33) ;
  \draw [myedge] (s34) -- (s33) ;
  \draw [myedge] (s33) -- (s22) ;
  \draw [myedge] (s32) -- (s42) ;
  \draw [myedge] (s32) -- (s33) ;
  \draw [myedge] (s45) -- (s55) ;
  \draw [myedge] (s44) -- (s45) ;
  \draw [myedge] (s44) -- (s54) ;
  \draw [myedge] (s43) -- (s34) ;
  \draw [myedge] (s43) -- (s42) ;
  \draw [myedge] (s53) -- (s43) ;
  \draw [myedge] (s54) -- (s53) ;
\end{tikzpicture}
  } 
  \label{fig:forward-closure}
} \quad
\subfloat[][] {%
  \scalebox{0.50}{%
%
\begin{tikzpicture}[%
  blackstate/.style = {draw, circle, minimum size=2mm, fill=black},
  redstate/.style = {draw=red, fill=red, circle, minimum size=2mm},
  greenstate/.style = {draw=DarkGreen!80, fill=DarkGreen!70, circle, minimum size=2mm},
  myedge/.style = {ultra thick, ->, >=stealth'},
  ]

  
  \node [blackstate] (s15) at (1,5) {} ;
  \node [greenstate] (s14) at (1,4) {} ;
  \node [greenstate] (s24) at (2,4) {} ;
  \node [blackstate] (s22) at (2,2) {} ;
  \node [blackstate] (s36) at (3,6) {} ;
  \node [greenstate] (s35) at (3,5) {} ;
  \node [greenstate] (s34) at (3,4) {} ;
  \node [greenstate] (s33) at (3,3) {} ;
  \node [greenstate] (s32) at (3,2) {} ;
  \node [blackstate] (s42) at (4,2) {} ;
  \node [greenstate] (s43) at (4,3) {} ;
  \node [greenstate] (s44) at (4,4) {} ;
  \node [greenstate] (s45) at (4,5) {} ;
  \node [blackstate] (s55) at (5,5) {} ;
  \node [blackstate] (s54) at (5,4) {} ;
  \node [greenstate] (s53) at (5,3) {} ;
  
  \draw [myedge] (s24) -- (s15) ;
  \draw [myedge] (s14) -- (s24) ;
  \draw [myedge] (s24) -- (s34) ;
  \draw [myedge] (s35) -- (s45) ;
  \draw [myedge] (s34) -- (s44) ;
  \draw [myedge] (s34) -- (s35) ;
  \draw [myedge] (s34) -- (s33) ;
  \draw [myedge] (s34) -- (s33) ;
  \draw [myedge] (s33) -- (s22) ;
  \draw [myedge] (s32) -- (s42) ;
  \draw [myedge] (s32) -- (s33) ;
  \draw [myedge] (s45) -- (s55) ;
  \draw [myedge] (s44) -- (s45) ;
  \draw [myedge] (s44) -- (s54) ;
  \draw [myedge] (s43) -- (s34) ;
  \draw [myedge] (s43) -- (s42) ;
  \draw [myedge] (s53) -- (s43) ;
  \draw [myedge] (s54) -- (s53) ;
\end{tikzpicture}
  } 
  \label{fig:backward-closure}
} \quad
\subfloat[][] {%
  \scalebox{0.50}{%
%
\begin{tikzpicture}[%
  blackstate/.style = {draw, circle, minimum size=2mm, fill=black},
  redstate/.style = {draw=red, circle, minimum size=2mm, fill=red},
  orangestate/.style = {draw=orange, circle, minimum size=2mm, fill=orange},
  myedge/.style = {ultra thick, ->, >=stealth'},
  ]

  
  \node [blackstate] (s15) at (1,5) {} ;
  \node [blackstate] (s14) at (1,4) {} ;
  \node [blackstate] (s24) at (2,4) {} ;
  \node [blackstate] (s22) at (2,2) {} ;
  \node [blackstate] (s36) at (3,6) {} ;
  \node [orangestate](s35) at (3,5) {} ;
  \node [orangestate](s34) at (3,4) {} ;
  \node [blackstate] (s33) at (3,3) {} ;
  \node [blackstate] (s32) at (3,2) {} ;
  \node [blackstate] (s42) at (4,2) {} ;
  \node [blackstate] (s43) at (4,3) {} ;
  \node [orangestate](s44) at (4,4) {} ;
  \node [orangestate](s45) at (4,5) {} ;
  \node [blackstate] (s55) at (5,5) {} ;
  \node [blackstate] (s54) at (5,4) {} ;
  \node [blackstate] (s53) at (5,3) {} ;
  
  \draw [myedge] (s24) -- (s15) ;
  \draw [myedge] (s14) -- (s24) ;
  \draw [myedge] (s24) -- (s34) ;
  \draw [myedge] (s35) -- (s45) ;
  \draw [myedge] (s34) -- (s44) ;
  \draw [myedge] (s34) -- (s35) ;
  \draw [myedge] (s34) -- (s33) ;
  \draw [myedge] (s34) -- (s33) ;
  \draw [myedge] (s33) -- (s22) ;
  \draw [myedge] (s32) -- (s42) ;
  \draw [myedge] (s32) -- (s33) ;
  \draw [myedge] (s45) -- (s55) ;
  \draw [myedge] (s44) -- (s45) ;
  \draw [myedge] (s44) -- (s54) ;
  \draw [myedge] (s43) -- (s34) ;
  \draw [myedge] (s43) -- (s42) ;
  \draw [myedge] (s53) -- (s43) ;
  \draw [myedge] (s54) -- (s53) ;
\end{tikzpicture}
  } 
  \label{fig:forward-interior}
} \quad
\subfloat[][] {%
  \scalebox{0.50}{%
%
\begin{tikzpicture}[%
  blackstate/.style = {draw, circle, minimum size=2mm, fill=black},
  redstate/.style = {draw=red, circle, minimum size=2mm, fill=red},
  purplestate/.style = {draw=purple!80, circle, minimum size=2mm, fill=purple!80},
  myedge/.style = {ultra thick, ->, >=stealth'},
  ]

  
  \node [blackstate] (s15) at (1,5) {} ;
  \node [blackstate] (s14) at (1,4) {} ;
  \node [blackstate] (s24) at (2,4) {} ;
  \node [blackstate] (s22) at (2,2) {} ;
  \node [blackstate] (s36) at (3,6) {} ;
  \node [purplestate]   (s35) at (3,5) {} ;
  \node [purplestate]   (s34) at (3,4) {} ;
  \node [blackstate] (s33) at (3,3) {} ;
  \node [blackstate] (s32) at (3,2) {} ;
  \node [blackstate] (s42) at (4,2) {} ;
  \node [blackstate] (s43) at (4,3) {} ;
  \node [blackstate] (s44) at (4,4) {} ;
  \node [blackstate] (s45) at (4,5) {} ;
  \node [blackstate] (s55) at (5,5) {} ;
  \node [blackstate] (s54) at (5,4) {} ;
  \node [blackstate] (s53) at (5,3) {} ;
  
  \draw [myedge] (s24) -- (s15) ;
  \draw [myedge] (s14) -- (s24) ;
  \draw [myedge] (s24) -- (s34) ;
  \draw [myedge] (s35) -- (s45) ;
  \draw [myedge] (s34) -- (s44) ;
  \draw [myedge] (s34) -- (s35) ;
  \draw [myedge] (s34) -- (s33) ;
  \draw [myedge] (s34) -- (s33) ;
  \draw [myedge] (s33) -- (s22) ;
  \draw [myedge] (s32) -- (s42) ;
  \draw [myedge] (s32) -- (s33) ;
  \draw [myedge] (s45) -- (s55) ;
  \draw [myedge] (s44) -- (s45) ;
  \draw [myedge] (s44) -- (s54) ;
  \draw [myedge] (s43) -- (s34) ;
  \draw [myedge] (s43) -- (s42) ;
  \draw [myedge] (s53) -- (s43) ;
  \draw [myedge] (s54) -- (s53) ;
\end{tikzpicture}
  } 
  \label{fig:backward-interior}
}
\caption{
A \qdcs{} $(X,\closureF)$. Arrows in the figure represent the binary relation generating the space.
A set $A \subseteq X$ is
shown in red in Fig.~\ref{fig:base-set},  
$\closureF(A)$ is shown in blue in Fig.~\ref{fig:forward-closure},
$\closureT(A)$ is shown in green in Fig.~\ref{fig:backward-closure},
$\interiorF(A)$ is shown in orange in Fig.~\ref{fig:forward-interior}, and
$\interiorT(A)$ is shown in magenta in Fig.~\ref{fig:backward-interior}.
}
\label{fig:CloIntExample}
\end{figure}

Regarding the interior operator~$\interior$, the notations
$\interiorF$ and~$\interiorT$ are defined in the obvious way:
$\interiorF(A) = \overline{\closureF (\bar{A})}$ and
$\interiorT(A) = \overline{\closureT (\bar{A})}$. 
Again, with reference to the example of Figure~\ref{fig:CloIntExample},
$\interiorF(A)$ and $\interiorT(A)$ are shown in orange and in magenta in 
Figure~\ref{fig:forward-interior} and Figure~\ref{fig:backward-interior}, respectively.

It is worth noting that, for any \qdcs{} $(X,\closure_R)$ and $A\subseteq X$ we have that
$$\interiorF(A) = A \setminus \ZET{x \in X}{\exists \mkern1mu a \in \overline{A} \colon (a,x)\in R},$$ and, similarly,
$$\interiorT(A) = A \setminus \ZET{x \in X}{\exists \mkern1mu a \in \overline{A} \colon (x,a)\in R}.$$

Clearly, for a symmetric relation~$R$ on~$X$ the closure operators
$\closureF$ and~$\closureT$ coincide, as do $\interiorF$
and~$\interiorT$.
Finally, we will often use the simplified notation $\closureF(x)$ for
$\closureF(\SET{x})$, and similarly for~$\closureT(x)$, $\interiorF(x)$ and $\interiorT(x)$.

\blankline

\noindent
The following lemma relates the operators $\closureF$ and~$\interiorT$
(and, symmetrically, $\closureT$ and $\interiorF$).

\begin{lem}
  \label{lem:swap-closure-and-interior}
  Let $(X, \closureF)$ be a \qdcs. Then $\closureF(x) \subseteq A$ iff
  $x \in \interiorT(A)$ and $\closureT(x) \subseteq A$ iff
  $x \in \interiorF(A)$, for all $x \in X$ and $A \subseteq X$.
\end{lem}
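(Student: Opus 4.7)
My plan is to unfold both sides to the same elementary condition expressed in terms of the binary relation~$R$ that generates the \qdcs{} (which exists by \theo~\ref{thm:QDR}), and then read off the equivalence. I will only prove the first biconditional in detail; the second one is entirely symmetric since $\closureT = \closure_{R^{-1}}$ and $\interiorF = \overline{\closureF(\,\overline{\cdot}\,)}$ while $R^{-1}$ plays the role of the generating relation for~$\closureT$.

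First, by the definition of~$\closure_R$, for any $x \in X$ we have
\[
  \closureF(x) \;=\; \{x\} \cup \ZET{y \in X}{(x,y) \in R}.
\]
Thus $\closureF(x) \subseteq A$ is equivalent to the conjunction of $x \in A$ and ``every $R$-successor of~$x$ belongs to~$A$'', i.e., there is no $a \in \overline{A}$ with $(x,a) \in R$.

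Next I will use the explicit formula for~$\interiorT$ already displayed in the excerpt, namely
\[
  \interiorT(A) \;=\; A \setminus \ZET{x \in X}{\exists \mkern1mu a \in \overline{A} \colon (x,a) \in R}.
\]
Hence $x \in \interiorT(A)$ is equivalent to $x \in A$ together with the non-existence of $a \in \overline{A}$ with $(x,a) \in R$. This is exactly the characterisation of $\closureF(x) \subseteq A$ obtained above, so the two statements are equivalent.

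The symmetric statement $\closureT(x) \subseteq A \iff x \in \interiorF(A)$ is handled identically, replacing $R$ by $R^{-1}$ throughout; concretely, $\closureT(x) = \{x\} \cup \ZET{y}{(y,x) \in R}$ and the displayed formula for~$\interiorF$ uses the condition $\exists a \in \overline{A}\colon (a,x) \in R$, so both sides reduce to ``$x \in A$ and every $R$-predecessor of~$x$ lies in~$A$''. There is no real obstacle here; the only care required is in tracking the direction of~$R$ when dualising closure to interior, which is already done for us by the two displayed identities for~$\interiorF$ and~$\interiorT$ preceding the lemma.
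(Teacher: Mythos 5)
Your proof is correct and follows essentially the same route as the paper's: both arguments reduce the two sides of each biconditional to the elementary condition ``$x \in A$ and no point of $\overline{A}$ is $R$-related to $x$ in the appropriate direction.'' The only cosmetic difference is that the paper expands $\interiorF(A) = \overline{\closureF(\overline{A})}$ from scratch inside the proof, whereas you invoke the explicit relational formulas for $\interiorF$ and $\interiorT$ displayed just before the lemma; since those formulas are themselves immediate from the same expansion, the two presentations are equivalent.
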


\begin{proof}
We have the following derivation: \bigskip \\
\noindent
\begin{math}
\deriv{x \in \interiorF(A)}
\hint{\Leftrightarrow}{Def.\ of $\interiorF$; Set Theory}
x \notin \closureF(\bar{A})
\hint{\Leftrightarrow}{Def.\ of $\closureF$}
\mbox{$x \notin \bar{A}$ and there is no $\bar{a} \in \bar{A}$ such
  that $(\bar{a},x) \in R$} 
\hint{\Leftrightarrow}{Logic}
\mbox{$x \in A$ and for all $x' \in X$, if $(x',x) \in R$ then $x' \in A$}
\hint{\Leftrightarrow}{Set Theory}
\closureT(x) \subseteq A \mkern1mu .
\end{math} \\
Symmetrically we obtain $x \in \interiorT(A)$ if and only if
$\closureF(x) \subseteq A$.
\end{proof}

\noindent
Lemma~\ref{lem:AltQdCS} below provides an alternative characterisation of \qdcs{s}, based on their interpretation as augmented neighbourhood spaces. We recall that $\cnrf{R}$ stands for $(R^=)^{-1}$.
\begin{lem}\label{lem:AltQdCS}
Given a  \cs{} $(X,\closure)$ and binary relation $R \subseteq X \times X$, the following holds:
$(X,\closure)$ is the \qdcs{} generated by $R$ if and only if $(X,\nu_{\closure})=(X,\nu_{\cnrf{R}})$.
\end{lem}
\begin{proof}
Suppose $(X,\closure)$  is the \qdcs{} generated by $R$, i.e. $\closure=\closure_R$ as of Theorem~\ref{thm:QDR}. 
We have to prove that $(X,\nu_{\closure_R})=(X,\nu_{\cnrf{R}})$. This amounts to showing that
$\nu_{\closure_R} = \nu_{\cnrf{R}}$, i.e. that
for all $x \in X$, $\nu_{\closure_R}(x) = \nu_{\cnrf{R}}(x)$ holds. We show the derivation below for generic $x \in X$ and $U \subseteq X$\\[0.5em]
\noindent
$
\deriv
U \in \nu_{\closure_R}(x)
\hint{\iff}{Def. of $\nu_{\closure_R}$}
x \in \interior_R(U)
\hint{\iff}{$\interior_R(U)=\overline{\closure_R(\overline{U})}$}
x \in \overline{\closure_R(\overline{U})}
\hint{\iff}{Set Theory}
x \not\in \closure_R(\overline{U})
\hint{\iff}{Def. of $\closure_R$ as of Theorem~\ref{thm:QDR} on page~\pageref{thm:QDR}}
x\not\in (\overline{U} \cup \ZET{y \in X}{\exists z \in \overline{U}: z R y})
\hint{\iff}{Set Theory}
x\not\in \overline{U} \mbox{ and } x\not\in \ZET{y \in X}{\exists z \in \overline{U}: z R y}
\hint{\iff}{Set Theory}
x\in U \mbox{ and no }  z \mbox{ exists such that } z \not\in U \mbox{ and } z R x
\hint{\iff}{Set Theory}
x\in U \mbox{ and } \ZET{y \in X}{y R x} \subseteq U
\hint{\iff}{Set Theory; Def. of $R^=$}
\ZET{y \in X}{y R^= x} \subseteq U
\hint{\iff}{Def. of $(R^=)^{-1}$; $\cnrf{R}=(R^=)^{-1}$}
\cnrf{R}[x] \subseteq U
\hint{\iff}{Def. of $\nu_{\cnrf{R}}$}
U \in \nu_{\cnrf{R}}(x)
$\\[0.5em]
\noindent
Now suppose that $(X,\nu_{\closure})=(X,\nu_{\cnrf{R}})$. This means that 
$\nu_{\closure}=\nu_{\cnrf{R}}$. We have to show that 
$(X,\closure)=(X,\closure_R)$. 
Let us consider the neighbourhood space $(X, \nu_{\closure_R})$, noting that, from
the above derivation, we know  that 
$\nu_{\closure_R} = \nu_{\cnrf{R}}$.  This means that 
$(X,\nu_{\closure_R})=(X,\nu_{\cnrf{R}})$. But, by hypothesis, we know that 
$(X,\nu_{\cnrf{R}})=(X,\nu_{\closure})$ and so we get $(X,\nu_{\closure}) = (X,\nu_{\closure_R})$.
We now use this last equality to prove that $\closure = \closure_R$. From this, it follows that
$(X,\closure) = (X,\closure_R)$.
The above equality,  $(X,\nu_{\closure}) = (X,\nu_{\closure_R})$, means  that 
$\nu_{\closure} = \nu_{\closure_R}$, that is
$$
\ZET{U\subseteq X}{x\in \interior(U)} =\nu_{\closure}(x) = 
\nu_{\closure_R}(x) = \ZET{U\subseteq X}{x\in \interior_R(U)}
$$
for all $x\in X$, where, we recall, $\interior(U)=\overline{\closure(\overline{U})}$. Suppose $\interior \not=\interior_R$, i.e. 
$\interior(A) \not=\interior_R(A)$ for some $A \subseteq X$.
Then there is an $x \in X$ such that $x \in \interior(A) \setminus \interior_R(A)$
or $x \in \interior_R(A) \setminus \interior(A)$.
Suppose $x \in \interior(A) \setminus \interior_R(A)$. This implies that 
$A \in \ZET{U\subseteq X}{x\in \interior(U)} \setminus \ZET{U\subseteq X}{x\in \interior_R(U)}$.
Similarly, if $x \in \interior_R(A) \setminus \interior(A)$, we have
$A \in \ZET{U\subseteq X}{x\in \interior_R(U)} \setminus \ZET{U\subseteq X}{x\in \interior(U)}$.
In both cases, we would have 
$\nu_{\closure}(x) \not=  \nu_{\closure_R}(x)$, which contradicts 
$(X,\nu_{\closure}) = (X,\nu_{\closure_R})$. So, we have that, for all $A \subseteq X$
we have $\interior(A) =\interior_R(A)$, i.e. 
$\interior =\interior_R$. This means
$\closure = \closure_R$ which brings us to $(X,\closure) = (X,\closure_R)$.
\end{proof}

\noindent
In the theory of closure spaces, {\em paths} play a fundamental
role. As in topology, they are defined as {\em continuous functions} from appropriate {\em index spaces} to the relevant closure spaces. An index space is a \cs{} in turn.
\begin{defi}[Continuous function]
  \label{def:ContinuousFunction}
  Function $f : X_1 \to X_2$ is a \emph{continuous} function from
  $(X_1,\closure_1)$ to $(X_2,\closure_2)$ if and only if for all sets
  $A \subseteq X_1$ we have
  $f (\closure_1(A)) \subseteq \closure_2(f(A))$.
  \closedefi
\end{defi}

In the present paper we will use paths mainly in the context of \qdcs{s}. In this case it is sufficient to use {\em discrete paths}, i.e., paths where the index space is the \qdcs{} $(\nats,\closure_{\succ})$.
Relation $\succ$ is the \emph{successor} relation on the set of natural numbers $\nats$, i.e., $\ZET{(m,n)\in \nats \times \nats}{n=m+1}$:

\begin{defi}[Discrete path]
  \label{def:path}
  A \emph{discrete path} in \qdcs{} $(X,\closureF)$ is a continuous function from
  $(\nats,\closure_{\succ})$ to $(X,\closureF)$.
  \closedefi
\end{defi}

In the sequel, we will omit ``discrete'' when referring to discrete paths.
The following lemma states some useful properties of the closure
operator as well as of paths.

\begin{lem}
  \label{lem:FT}
  For a \qdcs{} $(X, \closureF)$ it holds for all
  $A \subseteq X$ and $x_1,x_2 \in X$ that:
  \begin{enumerate} [(i)]
  \item \label{point:FT2}
    $x_1 \in \closureT(\SET{x_2})$ if and only if
    $x_2 \in \closureF(\SET{x_1})$;
    \item $\closureT(A) = \ZET{x}{\mbox{$x \in X$ and exists $a \in A$
        such that $a \in \closureF(\SET{x})$}}$;
  \item\label{point:Paths} $\pi$ is a path over $X$ if and only if for
    all $j \neq 0$ the following holds:
    $ \pi(j) \in \closureF(\pi(j{-}1))$.
\end{enumerate}
\end{lem}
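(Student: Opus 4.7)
My plan is to prove the three items in order, exploiting the assumption that $(X,\closureF)$ is quasi-discrete, generated by a binary relation~$R$ on~$X$, so that $\closureF = \closure_R$ and $\closureT = \closure_{R^{-1}}$ by Theorem~\ref{thm:QDR} and the convention introduced before Figure~\ref{fig:CloIntExample}.

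For item~(\ref{point:FT2}), I would simply unfold definitions. By the definition of~$\closure_R$, $x_1 \in \closureT(\{x_2\})$ holds iff $x_1 = x_2$ or $(x_2,x_1) \in R^{-1}$, which is equivalent to $x_1 = x_2$ or $(x_1,x_2) \in R$, which in turn is $x_2 \in \closureF(\{x_1\})$. This is essentially immediate from the generators.

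For item~(ii), I would combine~(\ref{point:FT2}) with the characterisation of quasi-discreteness, condition~(2) of Definition~\ref{def:QDClosureSpace}. Applied to $\closureT$, this gives $\closureT(A) = \bigcup_{a \in A} \closureT(\{a\})$. Hence $x \in \closureT(A)$ iff there exists $a \in A$ with $x \in \closureT(\{a\})$, which by~(\ref{point:FT2}) is equivalent to the existence of $a \in A$ with $a \in \closureF(\{x\})$, exactly the claimed right-hand side.

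For item~(\ref{point:Paths}), I would argue both directions using the definition of continuous function. For the forward implication, assume $\pi$ is a path, i.e.\ continuous from $(\nats,\closure_{\succ})$ to $(X,\closureF)$. For any $j \neq 0$, apply continuity to the singleton $\{j{-}1\}$: since $\closure_{\succ}(\{j{-}1\}) = \{j{-}1,j\}$, we obtain $\{\pi(j{-}1),\pi(j)\} \subseteq \closureF(\{\pi(j{-}1)\})$, whence $\pi(j) \in \closureF(\pi(j{-}1))$. For the converse, assume the pointwise condition and let $A \subseteq \nats$; to establish $\pi(\closure_{\succ}(A)) \subseteq \closureF(\pi(A))$, pick $y \in \pi(\closure_{\succ}(A))$, say $y = \pi(n)$ with $n \in \closure_{\succ}(A)$. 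Either $n \in A$, in which case $y \in \pi(A) \subseteq \closureF(\pi(A))$ by axiom~(2) of Definition~\ref{def:ClosureSpace}, or $n = m{+}1$ for some $m \in A$, in which case the hypothesis gives $\pi(n) \in \closureF(\{\pi(m)\})$, and monotonicity of $\closureF$ with $\{\pi(m)\} \subseteq \pi(A)$ yields $\pi(n) \in \closureF(\pi(A))$.

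The proofs are all routine unfoldings; the only mildly delicate point is the converse direction of~(\ref{point:Paths}), where one must remember to handle the case $n \in A$ separately from $n = m{+}1$ and to appeal to monotonicity of the closure operator rather than to condition~(\ref{CS3}) of Definition~\ref{def:ClosureSpace} directly. No essential obstacle is expected.
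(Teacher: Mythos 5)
Your proposal is correct and follows essentially the same route as the paper: the paper proves only item~(\ref{point:Paths}) (declaring (i) and (ii) straightforward, as you also treat them by routine unfolding of $\closure_R$ and quasi-discreteness), and its argument for~(\ref{point:Paths}) is exactly yours --- continuity applied to $\closure_{\succ}(\SET{j{-}1})=\SET{j{-}1,j}$ for the forward direction, and for the converse the same case split between elements of the index set (handled by axiom~(2)) and their successors (handled by the pointwise hypothesis plus monotonicity of~$\closureF$), merely phrased set-wise rather than element-wise.
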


\begin{proof}
  We prove only item~\ref{point:Paths} of the lemma, the proof of the
  other points being straightforward.  We show that $\pi$~is a path over
  $(X,\closureF)$ if and only if, for all $j\not=0$, we have
  $ \pi(j) \in \closureF(\pi(j{-}1))$.  Suppose $\pi$~is a path over
  $(X,\closureF)$; the following
  derivation proves the assert:\\
  $ $\\
\noindent
$
\deriv
\pi(j)
\hint{\in}{Set Theory}
\SET{\pi(j{-}1),\pi(j)}
\hint{=}{Definition of $\pi(N)$ for $N \subseteq \nats$}
\pi(\SET{j{-}1,j})
\hint{=}{Definition of $\closure_{\succ}$}
\pi(\closure_{\succ}(\SET{j{-}1}))
\hint{\subseteq}{Continuity of $\pi$}
\closureF(\pi(j{-}1))
$

\noindent
For proving the converse we have to show that for all sets 
$N \subseteq \nats \setminus \SET{0}$ we have 
$\pi(\closure_{\succ}(N)) \subseteq \closureF(\pi(N))$. 
By definition of $\closure_{\succ}$
we have that $\closure_{\succ}(N) = N \cup \ZET{j}{j{-}1 \in N}$ and so
$\pi(\closure_{\succ}(N)) = \pi(N) \cup \pi(\ZET{j}{j{-}1 \in N})$. 
By the second axiom of closure, we have $\pi(N) \subseteq \closureF(\pi(N))$.
We show that $\pi(\ZET{j}{j{-}1 \in N}) \subseteq \closureF(\pi(N))$ as well.
Take any $j$ such that $j{-}1 \in N$; we have $\SET{\pi(j{-}1)} \subseteq \pi(N)$ since $j{-}1 \in N$, and,
by monotonicity of $\closureF$ it follows that 
$\closureF(\SET{\pi(j{-}1)}) \subseteq \closureF(\pi(N))$. Consequently, since 
$\pi(j) \in \closureF(\pi(j{-}1))$ by hypothesis, we also get 
$\pi(j) \in\closureF(\pi(N))$. Since this holds for all elements of
the set $\ZET{j}{j{-}1 \in N}$ we also have 
$\pi(\ZET{j}{j{-}1 \in N}) \subseteq \closureF(\pi(N))$.
\end{proof}

For the purposes of the present paper, it is actually sufficient to
consider only \emph{finite} paths over \qdcs{s}, i.e.,\ continuous
functions having $[0;\ell \mkern1mu ]$ as a domain, for some
$\ell\in \nats$.  For such paths, it is convenient to introduce some
notation and terminology.  Given a \qdcs{} $(X,\closureF)$ and a path
$\pi : [0;\ell \mkern1mu ] \to X$, we call~$\ell$ the \emph{length}
of~$\pi$ and often use the sequence notation $(x_i)_{i=0}^{\ell}$,
where~$x_i= \pi(i)$ for all $i \in [0;\ell
\mkern1mu]$. 
More precisely, on the basis of Lemma~\ref{lem:FT}\ref{point:Paths},
we say that 
$(x_i)_{i=0}^{\ell}$ is a \emph{forward path from}~$x_0$ if $x_{i+1} \in \closureF(x_i)$ for
$i \in [0;\ell \mkern1mu )$ and, similarly, we say that it
is a \emph{backward path from}~$x_0$ if $x_{i+1}\in \closureT(x_i)$ for
$i \in [0;\ell \mkern1mu )$. An example of forward and backward paths is shown in 
Figure~\ref{fig:ForwBackwPath}. In the sequel, we avoid to specify  ``from''  which point 
a (forward / backward) starts, when this does not cause confusion.

\begin{figure}
     \begin{tikzpicture}
       \tikzset{->}
       \tikzstyle{point}=[circle,draw=black,fill=white,inner sep=0pt,minimum width=4pt,minimum height=4pt]
          \node (p1)[point] at (0,0){ $\;x_1\;$ }; 
          \node (p2)[point, right of=p1, xshift=1cm] { $\;x_2\;$ }; 
          \node (p3)[point, right of=p2, xshift=1cm] { $\;x_3\;$ }; 
          \draw [thick] (p1) edge (p2);
          \draw [thick] (p2) edge (p3);         
     \end{tikzpicture}
\caption{\label{fig:ForwBackwPath} A \cs{} $(X,\closure_R)$ where $X=\SET{x_1, x_2, x_3}$ and 
$R=\SET{(x_1,x_2),(x_2,x_3)}$ (in the figure, only the points and relation $R$ are shown); we have that $(x_1,x_1,x_1,x_2,x_2,x_2,x_3,x_3)$ is a forward path from $x_1$ and 
$(x_3,x_3,x_2, x_1, x_1, x_1, x_1)$ is a backward path from $x_3$.}
\end{figure}

Given forward paths
$\pi' = (x'_i)_{i=0}^{\ell'}$ and $\pi'' = (x''_i)_{i=0}^{\ell''}$,
with $x'_{\ell'} = x''_0$, the \emph{concatenation} $\pi' \cdot \pi''$
of $\pi'$ with~$\pi''$ is the path $\pi: [0;\ell' + \ell''] \to X$
from~$x'_0$, such that $\pi(i) = \pi'(i)$ if $i \in [0;\ell']$ and
$\pi(i) = \pi''(i-\ell')$ if $i \in [\ell'; \ell'+ \ell'']$;
concatenation for backward paths is defined similarly.  For a path
$\pi=(x_i)_{i=0}^n$ and $k \in [0,n]$ we define the $k$-shift operator
$\_{\mkern2mu \uparrow} k$ as follows:
$\pi \mathord{\uparrow} k = (x_{j+k})_{j=0}^{n-k}$;
furthermore, a {\em (non-empty) prefix} of $\pi$ is a path $\pi|[0,k]$, for some $k\in [0;n]$.
We let $\fpthsF$ denote the set of all finite forward paths
of~$(X,\closureF)$.


\section{Bisimilarity for Closure Models}\label{sec:CMbisimilarity}

In this section, we introduce the first notion of spatial bisimilarity that we consider, namely {\em closure model bisimilarity}, \cm-bisimilarity for short, for which we also provide a logical
characterisation.
The definition of \cm-bisimilarity is an instantiation to \cm{s} of
monotonic bisimulation on neighbourhood
models~\cite{vB+17,Han03}. Consequently, it is defined using
the interior operator.

\begin{defi}[\cm-bisimilarity]
  \label{def:cm-bisimulation}
  Given a \cm{} $\model = (X, \closure, \peval)$, a symmetric relation
  $B \subseteq X \times X$ is called a \emph{\cm-bisimulation}
  for~$\model$ if for all $x_1, x_2 \in X$ such that~$(x_1,x_2)\in B$ the
  following holds:
  \begin{enumerate}
  \item \label{CM1} $x_1 \in \peval(p)$ if and only if
    $x_2 \in \peval(p)$, for all $p\in\ap$.
  \item\label{CM2} If $x_1 \in \interior(S_1)$ for a subset
    $S_1 \subseteq X$, then for some subset $S_2 \subseteq X$ it holds
    that (i)~$x_2 \in \interior(S_2)$ and (ii)~for each $s_2 \in S_2$
    there $s_1 \in S_1$ such that~$(s_1,s_2)\in B$.
  \end{enumerate}
  Two points $x_1, x_2 \in X$ are called \cm-bisimilar in~$\model$ if
  $(x_1,x_2)\in B$ for some \cm-bisimulation~$B$ for~$\model$. Notation,
  $x_1 \cmbis^{\model} x_2$.  
\closedefi
\end{defi}

\begin{exa}\label{ex:AnExample}
  Consider the closure model $\model$ depicted in
  Figure~\ref{fig:example-cm-bisim} where
  $\model = (X,\closure,\peval)$ with
  $X = \SET{x_1,x_2,x_3,y_1,y_2,z_2,t_1,t_2,t_3,u_1,u_2,v_1,v_2,v_3}$,
  $\closure$ as induced by the binary relation as depicted in the
  figure, and with atomic propositions $\mathit{red}$,
  $\mathit{blue}$, and $\mathit{green}$ and valuation
  function~$\peval$ satisfying
  $\peval(\mathit{red}) = \SET{x_1, x_2, x_3, v_1, v_2, v_3}$,
  $\peval(\mathit{blue}) = \SET{y_1, y_2, z_2, t_2}$, and
  $\peval(\mathit{green}) = \SET{y_3, u_1, u_2, t_1, t_3}$.
  
  The relation $B_{12} = \SET{ \, (x_1,x_2) ,\, (x_2,x_1) \, }$ is a
  \cm-bisimulation relation relating $x_1$ and~$x_2$. E.g., for the set
  $S_1 = \SET{x_1,y_1}$ satisfying
  $x_1 \in \SET{x_1,y_1} = \interior(S_1)$ we can choose
  $S_2 = \SET{x_2}$ to match $S_1$.
  Similarly, the relation
  $B_{13} = \SET{ \, (x_1,x_3) ,\, (x_3,x_1) \, }$ is a
  \cm-bisimulation relation relating $x_1$ and~$x_3$ despite the green
  colour~$\invpeval(\SET{y_3})$ of~$y_3$.

  Also the points $v_1$ and~$v_3$ are CM-bisimilar, but both $v_1$
  and~$v_3$ are not CM-bisimilar to~$v_2$. The set
  $S_1 = \SET{t_1, u_1, v_1}$ has~$v_1$ in its interior. The smallest
  set~$S_2$ such that $v_2 \in \interior(S_2)$ is
  $\SET{t_2, u_2, v_2}$. However, for $t_2 \in \interior(S_2)$ there
  exists no counterpart in~$S_1$.
\end{exa}

\begin{figure}[htbp]
  \centering

\begin{tikzpicture}[baseline=1,
  ->, >=stealth', semithick, shorten >=0.5pt,
  every state/.style={
    circle, minimum size=8pt, inner sep=2pt, draw},
  ]
  
  \node [state, fill=red, label={+180:{$x_1$}}] (x1) at (1,1) {} ;
  \node [state, fill=blue, label={+0:{$y_1$}}] (y1) at (3,1) {} ;
  \draw (x1) edge (y1) ;

  \begin{scope}[xshift=+3.5cm]
    \node [state, fill=red, label={+180:{$x_2$}}] (x2) at (1,1) {} ;
    \node [state, fill=blue, label={+0:{$y_2$}}] (y2) at (3,1.67) {} ;
    \node [state, fill=blue, label={-0:{$z_2$}}] (z2) at (3,0.33) {} ;
    \draw (x2) edge (y2) ;
    \draw (x2) edge (z2) ;
  \end{scope}

  \begin{scope}[xshift=+1.75cm, yshift=-1.67cm]
    \node [state, fill=red, label={+180:{$x_3$}}] (x3) at (1,1) {} ;
    \node [state, fill=green, label={+0:{$y_3$}}] (y3) at (3,1) {} ;
    \draw (x3) edge (y3) ;
  \end{scope}
\end{tikzpicture}
\qquad
\begin{tikzpicture}[baseline=1,
  ->, >=stealth', semithick, shorten >=0.5pt,
  every state/.style={
    circle, minimum size=8pt, inner sep=2pt, draw},
  ]
  
  \node [state, fill=green, label={+180:{$t_1$}}] (t1) at (1,1.67) {} ;
  \node [state, fill=green, label={+180:{$u_1$}}] (u1) at (1,0.33) {} ;
  \node [state, fill=red, label={+0:{$v_1$}}] (v1) at (3,1) {} ;
  \draw (t1) edge (v1) ;
  \draw (u1) edge (v1) ;

  \begin{scope}[xshift=+3.5cm]
    \node [state, fill=blue, label={+180:{$t_2$}}] (t2) at (1,1.67) {} ;
    \node [state, fill=green, label={+180:{$u_2$}}] (u2) at (1,0.33) {} ;
    \node [state, fill=red, label={+0:{$v_2$}}] (v2) at (3,1) {} ;
    \draw (t2) edge (v2) ;
    \draw (u2) edge (v2) ;
  \end{scope}

  \begin{scope}[xshift=+1.75cm, yshift=-1.67cm]
    \node [state, fill=green, label={+180:{$t_3$}}] (t3) at (1,1) {} ;
    \node [state, fill=red, label={+0:{$v_3$}}] (v3) at (3,1) {} ;
    \draw (t3) edge (v3) ;
  \end{scope}
\end{tikzpicture}
  \caption{$x_{1}$, $x_{2}$, and $x_{3}$ CM-bisimilar; $v_1$ and $v_3$
    not CM-bisimilar to~$v_2$}
  \label{fig:example-cm-bisim}
\end{figure}

The following lemma  follows directly from the relevant definitions:
\begin{lem}\label{lem:CMandNei}
Given a \cm{} $\model=(X, \closure, \peval)$ and neighbourhood model
$\calJ=(X,\nu_{\closure},\peval)$ with $\nu_{\closure}(x)=\ZET{U\subseteq X}{x \in \interior(U)}$ for all $x\in X$,
the following holds: for all $x_1,x_2\in X$,
$x_1$ and $x_2$ are \cm-bisimilar as elements of $\model$ if and only if they are 
neighbourhood bisimilar as elements of $\calJ$.
\end{lem}

\noindent
We will show that \cm-bisimilarity is characterized by an infinitary
modal logic called~\iml\@.
The result is an extension to (possibly) infinite structures of the known HM result for finite spaces~\cite{vB+17}. As we will see, in the proof of the HM result, we use a technique similar to that of van Benthem et al.\ but we need infinite conjunction in order to quantify over all elements of the structure.

In the present spatial context the
classical modality~$\Diamond$ is interpreted as a {\em proximity} modality, 
and it is denoted by~$\lnear$ standing for being ``near''.  The definition of \iml{} below is taken
from~\cite{Ci+20}.

\begin{defi}[\iml{}]
  \label{def:Iml}
  The abstract language of the modal logic \iml{} is given
    by
    \begin{displaymath}
      \form ::= p \mid \lneg \form \mid \liand_{i \in I} \: \form_i \mid
      \lnear \mkern1mu \form 
    \end{displaymath}
    where $p$ ranges over $\ap$ and $I$ ranges over an appropriate
    collection of index sets.
    
  The satisfaction relation for \iml{} with respect to a given
    \cm~$\model = (X, {\closure}, {\peval})$ with point~$x$ is
    recursively defined on the structure of $\form$ as follows:
    \begin{displaymath}
      \def\arraystretch{1.2}
      \begin{array}{r c l c l c l l}
        \model,x & \models_{\iml} & p & \text{iff} & x  \in \peval(p) \\
        \model,x & \models_{\iml}  & \lneg \,\form & \text{iff} & \model,x  \models_{\iml} \form \mbox{ does not hold} \\
        \model,x & \models_{\iml}  & \liand_{i\in I} \: \form_i & \text{iff} & \model,x  \models_{\iml} \form_i \mbox{ for all } i \in I \\
        \model,x & \models_{\iml}  & \lnear \form & \text{iff} & x \in \closure(\sem{\form}^{\model})
      \end{array}
      \def\arraystretch{1.0}
    \end{displaymath}
    with $\sem{\form}^{\model} = \ZET{x \in X}{\model,x
      \models_{\iml} \form}$.
\closedefi
\end{defi}

\begin{exa}
  As a simple illustration of the above definition, again with
  reference to the \cm{} of
  Figure~\ref{fig:example-cm-bisim} that we have used for Example~\ref{ex:AnExample}, it holds that
  $\sem{\mathit{blue}} = \SET{y_1, y_2, z_2, t_2}$. Hence
  $\closure( \sem{\mathit{blue}} ) = \sem{\mathit{blue}} \cup
  \SET{v_2}$. Therefore we have that
  $v_2 \models \lnear \mathit{blue}$ and
  $v_1,v_3 \nmodels \lnear \mathit{blue}$.
\end{exa}

\noindent
The logic \iml{} induces an equivalence on the carrier of a \cm{} in
the usual way.

\begin{defi}[\iml-equivalence]
  \label{def:ImlEq}
  For a \cm{} $\model$, the relation
  $\imleq^{\model} \, \subseteq \, X \times X$ is defined by
  \begin{displaymath}
    \begin{array}{rcl}
      x_1 \imleq^{\model} x_2
      & \text{iff}
      & \text{$\model, x_1 \models_{\iml} \form \Leftrightarrow
        \model, x_2 \models_{\iml} \form$, for all $\form \in \iml$.} 
    \end{array} \vspace{-0.25in}
  \end{displaymath}
\closedefi
\end{defi}

\noindent
We establish coincidence of CM-bisimilarity and \iml-equivalence in two
steps. First we prove that \iml-equivalence~$\imleq^{\model}$ includes
\cm-bisimilarity~$\cmbis^{\model}$. 

\begin{lem}
  \label{lem:CMbisIsImleq}
  For all points $x_1, x_2$ in a \cm~$\model$, if $x_1 \cmbis^{\model} x_2$
  then $x_1 \imleq^{\model} x_2$.
\end{lem}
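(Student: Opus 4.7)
The plan is to proceed by induction on the structure of the formula $\form \in \iml$, proving the uniform statement that for every CM-bisimulation $B$ and every pair $(y_1, y_2) \in B$, we have $\model, y_1 \models_{\iml} \form$ iff $\model, y_2 \models_{\iml} \form$. The lemma then follows immediately by fixing a CM-bisimulation $B$ with $(x_1, x_2) \in B$.

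The base case $\form = p$ with $p \in \ap$ follows directly from clause~(\ref{CM1}) of Definition~\ref{def:cm-bisimulation}. The boolean cases $\form = \lneg \form'$ and $\form = \liand_{i\in I} \form_i$ are routine applications of the induction hypothesis at the same pair $(y_1, y_2)$.

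The interesting case is $\form = \lnear \form'$, where we must show $y_1 \in \closure(\sem{\form'}^{\model})$ iff $y_2 \in \closure(\sem{\form'}^{\model})$. By symmetry of $B$ it suffices to prove one direction, so assume $y_1 \in \closure(\sem{\form'}^{\model})$. By Theorem~\ref{thm-point-in-closure} it then suffices to show that every neighbourhood of $y_2$ meets $\sem{\form'}^{\model}$. Let $U$ be an arbitrary such neighbourhood, i.e., $y_2 \in \interior(U)$. Since $B$ is symmetric, $(y_2, y_1) \in B$, and applying clause~(\ref{CM2}) of Definition~\ref{def:cm-bisimulation} with $y_2$ in the role of ``$x_1$'' yields a set $V \subseteq X$ with $y_1 \in \interior(V)$ such that every $v \in V$ is $B$-related to some $u \in U$. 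Because $V$ is now a neighbourhood of $y_1 \in \closure(\sem{\form'}^{\model})$, Theorem~\ref{thm-point-in-closure} produces some $v \in V \cap \sem{\form'}^{\model}$. The matching $u \in U$ satisfies $(u, v) \in B$, so by the induction hypothesis $u \models_{\iml} \form'$, whence $u \in U \cap \sem{\form'}^{\model}$, as required.

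The main conceptual step is bridging the interior-based formulation of CM-bisimulation in clause~(\ref{CM2}) with the closure-based semantics of the modality $\lnear$. Theorem~\ref{thm-point-in-closure} provides exactly this bridge, characterising closure membership as non-empty intersection with every neighbourhood. Once this translation is in place, symmetry of $B$ delivers the converse direction uniformly and completes the induction.
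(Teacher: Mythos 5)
Your proposal is correct and follows essentially the same route as the paper's proof: induction on the formula, with the $\lnear$ case handled by invoking Theorem~\ref{thm-point-in-closure} to reduce closure membership to neighbourhood intersection, then using the symmetry of the bisimulation to pull an arbitrary neighbourhood of the second point back to one of the first and transferring a witness via the induction hypothesis. The only cosmetic difference is that you phrase the induction over an arbitrary \cm-bisimulation $B$ rather than over $\cmbis$ itself, which sidesteps the (true but unstated) fact that $\cmbis$ is itself a \cm-bisimulation.
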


\begin{proof}
  We proceed by induction on the structure of~$\form$ to show that
  $x_1 \models \form$ implies $x_2 \models \form$ when
  $x_1 \cmbis x_2$ for $x_1, x_2 \in X$, where we only consider the
  case for~$\lnear \form$, the others being straightforward, and we
  refrain from writing the superscript~${\model}$ for the sake of
  readability.
  
  So, suppose $x_1, x_2 \in X$ such that $x_1 \cmbis x_2$ and
  $x_1 \models \lnear \form$. By definition of satisfaction for~\iml,
  we have that $x_1 \in \closure \sem{\form}$. Equivalently, by
  Theorem~\ref{thm-point-in-closure}, every neighbourhood of~$x_1$
  intersects~$\sem{\form}$.
  Now, let $S_2$ be a neighbourhood of~$x_2$. Since $x_1 \cmbis x_2$,
  for some neighbourhood~$S_1$ of~$x_1$ it holds that for each
  point~$s_1 \in S_1$ a CM-bisimilar point~$s_2$ in~$S_2$ exists,
  i.e.,~$s_1 \cmbis s_2$. Let $x'_1 \in S_1 \cap \sem{\form}$ and
  $x'_2 \in S_2$ be such that $x'_1 \cmbis x'_2$. Since
  $x'_1 \in \sem{\form}$, we have $x'_1 \models \form$, and because
  $x'_1 \cmbis x'_2$, also $x'_2 \models \form$ by induction
  hypothesis. As $x'_2 \in S_2 \cap \sem{\form}$,
  $S_2 \cap \sem{\form}$ is non-empty.  Again with appeal to
  Theorem~\ref{thm-point-in-closure} we obtain
  $x_2 \in \closure \sem{\form}$, i.e.,\ $x_2 \models \lnear \form$, as
  was to be verified.
\end{proof}

\noindent
In order to obtain an inclusion into the other direction, we follow
the argument as in~\cite{vB+17} and introduce the auxiliary notion of
a characteristic formula for a point in a CM.

\begin{lem}
  \label{lem:ImleqIsCMbis}
  For a \qdcm~$\model$, it holds that $\imleq^{\model}$ is a
  \cm-bisimulation for~$\model$.
\end{lem}

\begin{proof}
  Suppose $\model = (X,\closureF,\peval)$. For~$x,y \in X$, 
  let  \iml-formula~$\delta_{x,y}$ be
  defined as follows: if $x \imleq y$, then set $\delta_{x,y}= \ltrue$, otherwise pick some 
  \iml-formula~$\psi$ such that $\model,x \models \psi$ and $\model,y \models \lneg\psi$, and 
  set $\delta_{x,y}= \psi$.
  For a point~$x$
  of~$\model$  define $\chi(x)$ as follows:
  $\chi(x) = \liand_{y \in X} \: \delta_{x,y}$. 
  It can be
  straightforwardly verified that 
  (i)~$y \models \chi(x)$ iff $x \imleq y$, and
  (ii)~$S \subseteq \sem{\lior_{s\in S} \: \chi(s)}$ for all $S \subseteq X$.
  Also, for a point~$x$ of~$\model$ and \iml-formula~$\form$ it holds
  that
  \begin{equation}
    \label{eq-interior-neg-lnear-neg}
    x \in \interior (\sem{\form})
    \quad \text{iff} \quad
    x \models \neg \mkern2mu \lnear \neg \mkern1mu \form
    \, .
  \end{equation}
  To see this, observe that
  $\overline{ \sem{\form} } = \sem{ \neg \mkern1mu \form
  }$. Consequently, $x \in \interior (\sem{\form})$ iff
  $x \notin \closureF (\overline{ \sem{\form} } )$ iff
  $x \notin \closureF (\sem{ \neg \mkern1mu \form })$ iff
  $x \not\models \lnear \neg \mkern1mu \form$ iff
  $x \models \neg \mkern3mu \lnear \neg \mkern1mu \form$.

  Assume~$x_1 \imleq x_2$. We check that the two conditions of
  Definition~\ref{def:cm-bisimulation} are fulfilled. As to the first
  condition, let~$p \in \ap$. Because $x_1 \imleq x_2$, we have
  $x_1 \models p$ iff $x_2 \models p$, thus $x_1 \in \peval(p)$ iff
  $x_2 \in \peval(p)$.

  As to the second condition, if $S_1 \subseteq X$ is a neighbourhood of~$x_1$,
  i.e.,\ $x_1 \in \interior(S_1)$, we put
  $S_2 = \ZET{s_2}{\exists \mkern1mu s_1 \in S_1 \colon s_1 \imleq
    s_2}$. Clearly, for $s_2 \in S_2$, $s_1 \in S_1$ exists  such
  that~$s_1 \imleq s_2$. Therefore, it suffices to verify that $S_2$
  is a neighbourhood of~$x_2$, i.e.,~$x_2 \in \interior (S_2)$.

  Put $\form = \bigvee_{s_1 {\in} S_1} \: \chi(s_1)$.
  To see that $S_2 = \sem{\form}$ we argue as follows:\\
  Case $S_2 \subseteq \sem{\form}$: For~$s_2 \in S_2$ we can choose
  $s_1 \in S_1$ such that $s_1 \imleq s_2$. Since
  $s_1 \models \chi(s_1)$ by property~(i) above, also
  $s_1 \models \form$. Thus $s_2 \models \form$, since
  $s_1 \imleq s_2$.\\
  Case $ \sem{\form} \subseteq S_2$: If $s \models \Phi$,
  then $s \models \chi(s_1)$ for some~$s_1 \in S_1$. Thus
  $s \imleq s_1$ for some~$s_1 \in S_1$ by property~(ii) above
  and~$s \in S_2$. This proves that $S_2 = \sem{\form}$.
  
  We continue to verify that $x_2 \in \interior(S_2)$. It holds that
  $x_1 \models \lneg \mkern2mu \lnear \mkern-1mu \lneg \mkern1mu
  \form$. To see this, note that $S_1 \subseteq \sem{\form}$ by
  property~(iii) above. Now, $S_1$~is a neighbourhood of~$x_1$. Hence
  $x_1 \in \interior(S_1) \subseteq \interior (\sem{\form})$ and
  $x_1 \models \lneg \mkern2mu \lnear \mkern-1mu \lneg \mkern1mu
  \form$ by Equation~\ref{eq-interior-neg-lnear-neg}.
  Now, since $x_1 \imleq x_2$ it follows that
  $x_2 \models \lneg \mkern2mu \lnear \mkern-1mu \lneg \mkern1mu
  \form$. Therefore, $x_2 \in \interior (\sem{\form})$ again by
  Equation~\ref{eq-interior-neg-lnear-neg}. Thus
  $x_2 \in \interior(S_2)$, because $S_2 = \sem{\form}$.
  \end{proof}

\noindent
Summarising the above, we have the following correspondence result of
CM-bisimilarity vs.\ \iml-equivalence.

\begin{thm}
  \label{thm:CMbisEqImleq}
  \iml-equivalence $\imleq^{\mkern1mu \model}$ coincides with
  \cm-bisimilarity $\cmbis^{\model}$, for all \qdcm~$\model$.
\end{thm}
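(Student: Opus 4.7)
The plan is to obtain the theorem as an essentially immediate corollary of the two preceding lemmas, which together establish a two-way inclusion between the relations $\cmbis^{\model}$ and $\imleq^{\model}$.

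First I would handle the inclusion $\cmbis^{\model} \subseteq \imleq^{\model}$. Suppose $x_1 \cmbis^{\model} x_2$. By Lemma~\ref{lem:CMbisIsImleq}, which is proved for arbitrary closure models (not only quasi-discrete ones), it follows directly that $x_1 \imleq^{\model} x_2$. Note that this direction does not require the quasi-discreteness hypothesis.

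For the converse inclusion $\imleq^{\model} \subseteq \cmbis^{\model}$, the key observation is Lemma~\ref{lem:ImleqIsCMbis}: over a \qdcm{} $\model$, the logical equivalence relation $\imleq^{\model}$ is itself a \cm-bisimulation for~$\model$. Here one should briefly remark that $\imleq^{\model}$ is symmetric, as required by Definition~\ref{def:cm-bisimulation}, since it is defined through biconditional agreement on all \iml-formulae and is therefore evidently an equivalence relation. Consequently, if $x_1 \imleq^{\model} x_2$, then the pair $(x_1, x_2)$ lies in a \cm-bisimulation for~$\model$, and so $x_1 \cmbis^{\model} x_2$ by the definition of \cm-bisimilarity as the union of all \cm-bisimulations.

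The principal difficulty has already been absorbed into the preceding lemmas: Lemma~\ref{lem:CMbisIsImleq} relies on Theorem~\ref{thm-point-in-closure} together with a structural induction on formulae to propagate bisimilarity through the $\lnear$ modality, while Lemma~\ref{lem:ImleqIsCMbis} requires the construction of characteristic formulae $\chi(x)$, which in turn needs the infinitary conjunction available in \iml{} to separate logically inequivalent points. Quasi-discreteness is invoked only in the latter; the theorem itself is then just the conjunction of the two inclusions, so no further work beyond invoking the lemmas is necessary.
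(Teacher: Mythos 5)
Your proposal is correct and follows exactly the paper's own argument: the theorem is obtained as the immediate conjunction of Lemma~\ref{lem:CMbisIsImleq} (giving $\cmbis^{\model} \subseteq \imleq^{\model}$) and Lemma~\ref{lem:ImleqIsCMbis} (giving $\imleq^{\model} \subseteq \cmbis^{\model}$). Your added remarks — that $\imleq^{\model}$ is symmetric as required by Definition~\ref{def:cm-bisimulation}, and that quasi-discreteness is only needed for the second inclusion — are accurate observations consistent with the paper.
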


\begin{proof}
  Lemma~\ref{lem:CMbisIsImleq} yields
  $\mathord{\cmbis^{\model}} \subseteq \mathord{\imleq^{\mkern1mu
      \model}}$. Lemma~\ref{lem:ImleqIsCMbis} yields
  $\mathord{\imleq^{\mkern1mu \model}} \subseteq
  \mathord{\cmbis^{\model}}$.
\end{proof}

\noindent
An obvious consequence of Theorem~\ref{thm:CMbisEqImleq} follows.

\begin{cor}
   \label{cor:CMbisEquiv}
For all  \qdcm{s} $\model$, $\cmbis^{\model}$ is an equivalence relation. 
\end{cor}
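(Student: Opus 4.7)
The plan is to derive the corollary directly from Theorem~\ref{thm:CMbisEqImleq}. That theorem asserts the coincidence of $\cmbis^{\model}$ with the logical equivalence $\imleq^{\mkern1mu \model}$ on any \qdcm{} $\model$, so it suffices to observe that $\imleq^{\mkern1mu \model}$ is an equivalence relation and to transfer this property across the coincidence.

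More concretely, I would first recall that $\imleq^{\mkern1mu \model}$ is defined in Definition~\ref{def:ImlEq} as a biconditional on satisfaction of \iml-formulas at points of~$\model$. The three defining properties of an equivalence relation are then immediate: reflexivity holds because $\model, x \models_{\iml} \form \Leftrightarrow \model, x \models_{\iml} \form$ for every~$\form$; symmetry holds because the biconditional is symmetric in its two sides; and transitivity holds because biconditionals compose, i.e.\ if $\model, x_1 \models_{\iml} \form \Leftrightarrow \model, x_2 \models_{\iml} \form$ and $\model, x_2 \models_{\iml} \form \Leftrightarrow \model, x_3 \models_{\iml} \form$ for every~$\form$, then $\model, x_1 \models_{\iml} \form \Leftrightarrow \model, x_3 \models_{\iml} \form$ for every~$\form$.

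Having established that $\imleq^{\mkern1mu \model}$ is an equivalence relation on the carrier of~$\model$, the conclusion follows by Theorem~\ref{thm:CMbisEqImleq}: since $\mathord{\cmbis^{\model}} = \mathord{\imleq^{\mkern1mu \model}}$, the relation $\cmbis^{\model}$ inherits reflexivity, symmetry and transitivity from $\imleq^{\mkern1mu \model}$, hence is itself an equivalence relation. There is no real obstacle here; the only thing worth flagging is that the corollary as stated is restricted to \qdcm{s}, which is precisely the setting in which Theorem~\ref{thm:CMbisEqImleq} is formulated (because Lemma~\ref{lem:ImleqIsCMbis} was proved for \qdcm{s}), so the hypothesis on~$\model$ exactly matches what is required to invoke the theorem.
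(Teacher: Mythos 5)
Your proof is correct and follows exactly the route the paper intends: the corollary is stated as an immediate consequence of Theorem~\ref{thm:CMbisEqImleq}, transferring the (evident) equivalence-relation properties of $\imleq^{\mkern1mu \model}$ to $\cmbis^{\model}$ via their coincidence on \qdcm{s}. Your additional remark about why the \qdcm{} hypothesis is needed (Lemma~\ref{lem:ImleqIsCMbis} is proved in that setting) is accurate and matches the paper's framing.
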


\begin{rem}
  The notion of CM-bisimilarity as given by
  Definition~\ref{def:cm-bisimulation} is the natural generalisation
  to closure models of the notion of topo-bisimilarity for
  topological models~\cite{vBB07}. The latter models are similar to
  closure models except that the underlying set is equipped with the open sets of a
  topology rather than the closed sets derived from a closure operator.
    We  note that a topological space~$(X,\tau)$ gives rise to a
    closure space with idempotent closure
    operator~$\closure_\tau$. Therefore, a topological model
    $(X,\tau,\peval)$ can be seen as a closure model
    $(X,\closure_\tau,\peval)$.
  For a topological model the definition of a Topo-bisimulation
  of~\cite{vBB07} and Definition~\ref{def:cm-bisimulation} coincide,
  i.e.,\ a relation~$B$ on~$X$ is a Topo-bisimulation if and only if it
  is a CM-bisimulation.
  However, closure spaces are more general than topological spaces. As
  we anticipated in Section~\ref{sec:Preliminaries}, it holds that a
  closure operator induces a topology if and only if it is idempotent,
  cf.~\cite{Cec66,Gal03}. Hence, not every closure space is a
  topological space and not every closure model is a topological
  model.
Finally, we point out that, given the close relationship between 
closure models and neighbourhood models (see, e.g. Lemma~\ref{lem:CMandNei}),
Theorem~\ref{thm:CMbisEqImleq} is the counterpart, for closure models,  of the
corresponding Hennessy-Milner
result for neighbourhood models~\cite{HKP09}.
\end{rem}


\section{\cmc-bisimilarity for \qdcm{s}}
\label{sec:CMCbisimilarity}

Definition~\ref{def:cm-bisimulation} of the previous section
defines \cm-bisimilarity of a closure model in terms of
its interior operator~$\interior$. In the case of
\qdcm{s}, an alternative formulation can
be given that uses the closure operator explicitly and directly, as we
will see below. This formulation exploits the symmetric nature of the
operators in such models.

\begin{defi}
  \label{def:cm-bisimulation-alt}
  Given a \qdcm{} $\model = (X, \closureF, \peval)$, a symmetric
  relation $B \subseteq X \times X$ is a \cm-bisimulation for~$\model$
  if, whenever $(x_1, x_2)\in B$, the following holds:
  \begin{enumerate}
  \item \label{CBCM1} for all $p\in\ap$ we have $x_1 \in \peval(p)$ if
    and only if $x_2 \in \peval(p)$;
  \item \label{CBCM2} for all $x'_1$ such that
    $x_1 \in \closureF(x'_1)$ exists $x'_2$ such that
    $x_2 \in \closureF(x'_2)$ and $(x'_1,x'_2)\in B$.
    \closedefi
\end{enumerate}
\end{defi}

\noindent
The above definition is justified by the next lemma.

\begin{lem}
  \label{lem:CM-def-with-interior-vs-closure}
  Let $\model = (X, \closureF, \peval)$ be a \qdcm{} and $B \subseteq X
  \times X$ a relation. It holds that $B$~is a \cm-bisimulation according to 
  Definition~\ref{def:cm-bisimulation} if and only if
  $B$~is a  \cm-bisimulation according to 
  Definition~\ref{def:cm-bisimulation-alt}.
\end{lem}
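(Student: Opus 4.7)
The plan is to rewrite both conditions so that they talk about the same objects, namely the set $\closureT(x)$ of predecessors of a point $x$, and then derive each formulation from the other using the symmetry of $B$. The atomic-proposition clauses \ref{CM1} and \ref{CBCM1} are literally identical, so the only real work is to prove the equivalence of the two second conditions.

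The first ingredient is Lemma~\ref{lem:swap-closure-and-interior}, which in a \qdcm{} (where $\closure = \closureF$ and $\interior = \interiorF$) gives $x \in \interiorF(A) \Leftrightarrow \closureT(x) \subseteq A$. This means that the minimal neighbourhood of $x$ in the sense of Definition~\ref{def:cm-bisimulation} is exactly $\closureT(x)$, and that monotonicity lets us assume without loss of generality that $S_1 = \closureT(x_1)$ and $S_2 = \closureT(x_2)$ in condition~\ref{CM2}. The second ingredient is Lemma~\ref{lem:FT}(\ref{point:FT2}), which says $x \in \closureF(x') \Leftrightarrow x' \in \closureT(x)$, so condition~\ref{CBCM2} of Definition~\ref{def:cm-bisimulation-alt} becomes exactly: for every $x'_1 \in \closureT(x_1)$ there exists $x'_2 \in \closureT(x_2)$ with $(x'_1,x'_2)\in B$.

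For the direction \textit{Def.~\ref{def:cm-bisimulation} $\Rightarrow$ Def.~\ref{def:cm-bisimulation-alt}}, fix $(x_1,x_2)\in B$ and $x'_1\in\closureT(x_1)$. Apply \ref{CM2} to the pair $(x_2,x_1)\in B$ (which is in $B$ by symmetry) with the neighbourhood $\closureT(x_2)$ of $x_2$: this yields an $S$ with $x_1\in\interiorF(S)$, hence $\closureT(x_1)\subseteq S$, and such that for every $s\in S$ some $s'\in\closureT(x_2)$ satisfies $(s',s)\in B$. Since $x'_1\in\closureT(x_1)\subseteq S$, pick the associated $s'$ and set $x'_2 := s'\in\closureT(x_2)$; symmetry of $B$ gives $(x'_1,x'_2)\in B$, which is exactly the witness required by \ref{CBCM2}.

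For the direction \textit{Def.~\ref{def:cm-bisimulation-alt} $\Rightarrow$ Def.~\ref{def:cm-bisimulation}}, fix $(x_1,x_2)\in B$ and $S_1$ with $x_1\in\interiorF(S_1)$, so $\closureT(x_1)\subseteq S_1$. Choose $S_2 := \closureT(x_2)$; then $x_2\in\interiorF(S_2)$ trivially. For any $s_2\in S_2=\closureT(x_2)$ apply \ref{CBCM2} to the pair $(x_2,x_1)\in B$ to obtain some $s_1\in\closureT(x_1)\subseteq S_1$ with $(s_2,s_1)\in B$, and conclude $(s_1,s_2)\in B$ by symmetry. The main conceptual obstacle is precisely this quantifier-direction mismatch: condition \ref{CM2} picks partners of elements of $S_2$ inside $S_1$, whereas condition \ref{CBCM2} picks partners of predecessors of $x_1$ inside the predecessors of $x_2$; the two swap roles, and it is the symmetry of $B$ together with the fact that in a \qdcm{} one may restrict to the minimal neighbourhoods $\closureT(x_i)$ that makes the translation work in both directions.
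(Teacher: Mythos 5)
Your proof is correct and follows essentially the same route as the paper's: both directions hinge on Lemma~\ref{lem:swap-closure-and-interior} to identify $\closureT(x)$ as the minimal neighbourhood, on Lemma~\ref{lem:FT}(\ref{point:FT2}) to rewrite condition~\ref{CBCM2} in terms of $\closureT$, on choosing $S_2 = \closureT(x_2)$ as the witness, and on the symmetry of $B$ to resolve the quantifier-direction mismatch. No gaps.
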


\begin{proof}
  (\textit{if}) Assume that $B$ is a \cm-bisimulation in the sense of
  Definition~\ref{def:cm-bisimulation}. Let $x_1, x_2 \in X$ such that
  $(x_1,x_2)\in B$. We verify condition~(\ref{CBCM2}) of
  Definition~\ref{def:cm-bisimulation-alt}: Let $x'_1 \in X$ such that
  $x_1 \in \closureF(x'_1)$. Hence $x'_1 \in \closureT(x_1)$, by
  Lemma~\ref{lem:FT}\ref{point:FT2}. For $S_2 = \closureT(x_2)$ we
  have $x_2 \in \interiorF(S_2)$ by
  Lemma~\ref{lem:swap-closure-and-interior}. By condition~(\ref{CM2})
  of Definition~\ref{def:cm-bisimulation}, with the roles of $x_1$
  and~$x_2$, and of $S_1$ and~$S_2$ interchanged, a subset
  $S_1 \subseteq X$ exists such that $x_1 \in \interiorF(S_1)$ and,
  for each $s_1 \in S_1$, $s_2 \in S_2$ exists such
  that~$(s_1,s_2)\in B$. In particular, there is
  $x'_2 \in S_2 = \closureT(x_2)$ such that $(x'_1,x'_2)\in B$. Thus
  $x'_2 \in X$ exists such that $x_2 \in \closureF(x'_2)$ and
  $(x'_1,x'_2)\in B$.

  (\textit{only if}) Assume that $B$ is a closure-based
  \cm-bisimulation in the sense of
  Definition~\ref{def:cm-bisimulation-alt}. Let $x_1, x_2 \in X$ be such
  that~$(x_1,x_2)\in B$. We verify condition~(\ref{CM2}) of
  Definition~\ref{def:cm-bisimulation}. Suppose subset
  $S_1 \subseteq X$ is such that $x_1 \in \interiorF(S_1)$. By
  Lemma~\ref{lem:swap-closure-and-interior} we have
  $\closureT(x_1) \subseteq S_1$. Let
  $S_2 = \closureT(x_2)$. Then $x_2 \in \interiorF(S_2)$,
  again by Lemma~\ref{lem:swap-closure-and-interior}. By
  condition~(\ref{CBCM2}) of Definition~\ref{def:cm-bisimulation-alt},
  for each $x'_2 \in \closureT(x_2)$ a point
  $x'_1 \in \closureT(x_1)$ exists such that~$(x'_1,x'_2)\in B$. Since
  $S_2 = \closureT(x_2)$ and $\closureT(x_1) \subseteq S_1$, 
  it follows that for each $s_2 \in S_2$ and  there is
  $s_1 \in S_1$ such that~$(s_1,s_2)\in B$.
\end{proof}

\begin{rem}
An alternative proof for Lemma~\ref{lem:CM-def-with-interior-vs-closure} can be carried out by exploiting
the relationship between \qdcm{s} and neighbourhood models on the one hand, and 
Kripke models and neighbourhood models on the other hand.
In fact, it is easy to see
that $x_1$ and $x_2$ are \cm-bisimilar in $(X,\closure_{R},\peval)$ according to Definition~\ref{def:cm-bisimulation-alt} if and only if they are Kripke bisimilar in $(X,\cnrf{R},\peval)$. Moreover,
by Lemma~\ref{KbisAndNbis}, they are Kripke bisimilar in $(X,\cnrf{R},\peval)$ if and only if
they are neighbourhood bisimilar in $(X,\nu_{\cnrf{R}},\peval)$ and, by Lemma~\ref{lem:AltQdCS},
this means they are also neighbourhood bisimilar in $(X,\nu_{\closure_R},\peval)$,
since $(X,\nu_{\closure_R})=(X,\nu_{\cnrf{R}})$. 
This, by Lemma~\ref{lem:CMandNei}, means exactly that 
$x_1$ and $x_2$ are \cm-bisimilar in $(X,\closure_{R},\peval)$ according to Definition~\ref{def:cm-bisimulation}
\end{rem}

\noindent
When dealing with \qdcm{s}, we can exploit the
symmetric nature of the operators involved. Recall that, whenever
$\model$~is quasi-discrete, there are actually two interior functions,
namely $\interiorF(S)$ and $\interiorT(S)$.  It is then natural to
use both functions for the definition of a notion of
\cm-bisimilarity specifically designed for \qdcm{s}, namely {\em
  \cm-bisimilarity with converse}, \cmc-bisimilarity for short,
presented below.

\begin{defi}[\cmc-bisimilarity] \label{def:cmc-bisimulation} Given \qdcm{}
  $\model = (X, \closureF, \peval)$, a symmetric relation
  $B \subseteq X \times X$ is a \emph{\cmc-bisimulation} for~$\model$
  if, whenever $(x_1,x_2)\in B$, the following holds:
  \begin{enumerate}
  \item \label{CMC1} for all $p\in\ap$ we have $x_1 \in \peval(p)$ if
    and only if $x_2 \in \peval(p)$;
  \item \label{CMC2} for all $S_1 \subseteq X$ such that
    $x_1 \in \interiorF(S_1)$ there is $S_2 \subseteq X$ such that
    $x_2 \in \interiorF(S_2)$ and for all $s_2 \in S_2$, there is
    $s_1 \in S_1$ with $(s_1,s_2)\in B$;
  \item \label{CMC3} for all $S_1 \subseteq X$ such that
    $x_1 \in \interiorT(S_1)$ there is $S_2 \subseteq X$ such that
    $x_2 \in \interiorT(S_2)$ and for all $s_2 \in S_2$, there is
    $s_1 \in S_1$ with $(s_1,s_2)\in B$.
  \end{enumerate}
  Two points $x_1,x_2 \in X$ are called CMC-bisimilar in $\model$, if
  $(x_1,x_2)\in B$ for some \cmc-bisimulation $B$ for
  $\model$. Notation, $x_1\,\cmcbis^{\model}\, x_2$.
  \closedefi
\end{defi}

\noindent
For a \qdcm~$\model$, as for \cm-bisimilarity, we have that
\cmc-bisimilarity 
is a \cmc-bisimulation itself,
viz.\ the largest \cmc-bisimulation for~$\model$, thus including each
\cmc-bisimulation for~$\model$. 

Also for \cmc-bisimilarity, a formulation directly in terms of
closures is useful\footnote{The notion characterized by
  Definition~\ref{def:cmc-bisimulation-alt} is called \cl-bisimilation
  in~\cite{Ci+21}}.

\begin{defi} \label{def:cmc-bisimulation-alt} Given a \qdcm{}
  $\model = (X, \closureF, \peval)$, a symmetric relation
  $B \subseteq X \times X$ is a \cmc-bisimulation for~$\model$ if,
  whenever $(x_1, x_2)\in B$, the following holds:
  \begin{enumerate}
  \item \label{CBCMC1} for all $p\in\ap$ we have $x_1 \in \peval(p)$
    if and only if $x_2 \in \peval(p)$;
  \item \label{CBCMC2} for all $x'_1 \in \closureF(x_1)$ there is
    $x'_2\in \closureF(x_2)$ such that $(x'_1,x'_2)\in B$;
  \item \label{CBCMC3} for all $x'_1 \in \closureT(x_1)$ there is
    $x'_2\in \closureT(x_2)$ such that $(x'_1,x'_2)\in B$.
    \closedefi
  \end{enumerate}
\end{defi}

\noindent
The next lemma shows the interchangeability of
Definitions~\ref{def:cmc-bisimulation}
and~\ref{def:cmc-bisimulation-alt}. The proof is in essence the same
as that of Lemma~\ref{lem:CM-def-with-interior-vs-closure}. Of course,
 the proof of Lemma~\ref{lem:CMCbisimilarityEqCMCbisimilarity-alt} can 
also be carried out by exploiting the connection 
between closure spaces and neighbourhood spaces.

\begin{lem} \label{lem:CMCbisimilarityEqCMCbisimilarity-alt}
  Let $\model = (X, \closureF, \peval)$ be a \qdcm{} and $B \subseteq X
  \times X$ a relation. It holds that 
  $B$~is a \cmc-bisimulation 
  according to  Definition~\ref{def:cmc-bisimulation}
  if and only if
  $B$~is a  \cmc-bisimulation
  according to  Definition~\ref{def:cmc-bisimulation-alt}. \qed
\end{lem}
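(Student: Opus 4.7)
The plan is to mimic the proof of Lemma~\ref{lem:CM-def-with-interior-vs-closure}, running the same argument twice: once for the forward leg and once for the backward leg. Clause~(\ref{CMC1}) of Definition~\ref{def:cmc-bisimulation} is literally clause~(\ref{CBCMC1}) of Definition~\ref{def:cmc-bisimulation-alt}, so only the structural clauses require work. The key tool is Lemma~\ref{lem:swap-closure-and-interior}, which provides a cross-over duality between closure and interior: $x \in \interiorF(A)$ iff $\closureT(x) \subseteq A$, and $x \in \interiorT(A)$ iff $\closureF(x) \subseteq A$. Because of this cross-over, clause~(\ref{CMC2}) of Definition~\ref{def:cmc-bisimulation}, which is formulated in terms of $\interiorF$, will pair up with clause~(\ref{CBCMC3}) of Definition~\ref{def:cmc-bisimulation-alt}, formulated in terms of $\closureT$; symmetrically, clause~(\ref{CMC3}) will pair with clause~(\ref{CBCMC2}).

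For the \emph{if} direction, assume $B$ satisfies (\ref{CMC1})--(\ref{CMC3}) and take $(x_1,x_2) \in B$. To verify~(\ref{CBCMC2}), let $x'_1 \in \closureF(x_1)$ and set $S_2 = \closureF(x_2)$, so that the trivial inclusion $\closureF(x_2) \subseteq S_2$ gives $x_2 \in \interiorT(S_2)$ by Lemma~\ref{lem:swap-closure-and-interior}. Exploiting the symmetry of~$B$, apply~(\ref{CMC3}) with the roles of $x_1,x_2$ interchanged to obtain $S_1$ with $x_1 \in \interiorT(S_1)$ and each $s_1 \in S_1$ matched via~$B$ to some $s_2 \in S_2$. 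The inclusion $\closureF(x_1) \subseteq S_1$ then gives $x'_1 \in S_1$, so some $x'_2 \in S_2 = \closureF(x_2)$ with $(x'_1,x'_2) \in B$ exists, as required. The verification of~(\ref{CBCMC3}) is the mirror image, using~(\ref{CMC2}) and the dual duality $x \in \interiorF(A) \Leftrightarrow \closureT(x) \subseteq A$.

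For the \emph{only if} direction, assume $B$ satisfies (\ref{CBCMC1})--(\ref{CBCMC3}). To establish~(\ref{CMC2}), let $S_1$ be such that $x_1 \in \interiorF(S_1)$, i.e., $\closureT(x_1) \subseteq S_1$ by Lemma~\ref{lem:swap-closure-and-interior}. Take $S_2 = \closureT(x_2)$, so that $x_2 \in \interiorF(S_2)$. Then clause~(\ref{CBCMC3}), applied with $x_1,x_2$ swapped, yields for every $s_2 \in \closureT(x_2) = S_2$ a matching $s_1 \in \closureT(x_1) \subseteq S_1$ with $(s_1,s_2) \in B$, which is precisely what~(\ref{CMC2}) demands. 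Clause~(\ref{CMC3}) follows symmetrically from~(\ref{CBCMC2}) with the choice $S_2 = \closureF(x_2)$.

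I do not anticipate any genuine obstacle: the argument is pure bookkeeping of which direction (forward or backward) of closure pairs with which direction of interior, and this pairing is forced by the cross-over in Lemma~\ref{lem:swap-closure-and-interior}. The one point that requires a moment of care is that each deduction invokes the target clause on the symmetric pair $(x_2,x_1)$, which is legitimate because both relations in question are assumed symmetric.
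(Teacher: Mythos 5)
Your proof is correct and follows exactly the route the paper intends: the paper omits the details, remarking only that the argument is essentially that of Lemma~\ref{lem:CM-def-with-interior-vs-closure}, and your proposal is precisely that argument run twice, with the cross-over pairing of clause~(\ref{CMC2}) with~(\ref{CBCMC3}) and~(\ref{CMC3}) with~(\ref{CBCMC2}) via Lemma~\ref{lem:swap-closure-and-interior} worked out correctly, including the appeal to symmetry of~$B$ when swapping the roles of $x_1$ and~$x_2$.
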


\begin{exa}
  From the definitions of \cm-bisimulation and \cmc-bisimulation it
  can be immediately observed that in a closure model based on a
  quasi-discrete closure two points that are \cmc-bisimilar are also
  \cm-bisimilar. The reverse does not hold in general. In
  Figure~\ref{fig:example-cm-bisim}, the points $x_1$ and~$x_2$ on the
  one hand and the point~$x_3$ on the other hand are \cm-bisimilar,
  but not \cmc-bisimilar. E.g., the points $y_1 \in \closureF(x_1)$
  and $y_2 \in \closureF(x_2)$ do not have a match in
  $\closureF(x_3) = \SET{x_3,y_3}$.
\end{exa}

\begin{rem}
Definition~\ref{def:cmc-bisimulation-alt} was proposed
  originally in~\cite{Ci+20}, in a slightly different form, and
  resembles the notion of (strong) back-and-forth bisimulation
  of~\cite{De+90}, in particular for the presence of condition
  (\ref{CBCMC3}).
\end{rem}

In order to provide a logical characterisation of \cmc-bisimilarity,
we extend \iml{} with a ``converse'' of its modal operator. The
resulting logic is called \emph{Infinitary Modal Logic with Converse}
\imlc{}, a logic including the two spatial proximity modalities $\lnearF$, expressing forward ``near'', and~$\lnearT$, expressing backward ``near''.

\begin{defi}[\imlc{}]\label{def:Imlc} \mbox{}
  The abstract language of \imlc{} is defined as follows:
    \begin{displaymath}
      \form ::= p \mid \lneg \form \mid \textstyle{\liand_{i \in I}} \:
      \form_i \mid \lnearF \mkern1mu \form \mid \lnearT \mkern1mu \form
    \end{displaymath}
    where $p$ ranges over~$\ap$ and $I$ ranges over an appropriate
    collection of index sets.
  The satisfaction relation with respect to a \qdcm~$\model$,
    point~$x \in \model$, and \imlc{} formula~$\form$ is defined
    recursively on the structure of $\form$ as follows:
    \begin{displaymath}
      \def\arraystretch{1.2}
      \begin{array}{r c l c l c l l}
        \model,x & \models_{\imlc}  & p & \Leftrightarrow & x  \in \peval(p)
        \\ \model,x & \models_{\imlc}  & \lneg \,\form & \Leftrightarrow & \model,x  \models_{\imlc} \form \mbox{ does not hold}
        \\ \model,x & \models_{\imlc}  & \liand_{i\in I} \form_i  & \Leftrightarrow &
                                                                                   \model,x  \models_{\imlc} \form_i \mbox{ for all } i \in I
        \\ \model,x & \models_{\imlc}  & \lnearF \mkern1mu \form & \Leftrightarrow & x \in \closureF(\sem{\form}^{\model})
        \\ \model,x & \models_{\imlc}  & \lnearT \mkern1mu \form & \Leftrightarrow & x \in \closureT(\sem{\form}^{\model})
      \end{array}
      \def\arraystretch{1.0}
    \end{displaymath}
    with $\sem{\form}^\model = \ZET{x \in X}{\model,x \models_{\imlc} \form}$.
    {\closedefi}
\end{defi}

\begin{exa}
  Consider the
  \qdcm{} given in Figure~\ref{fig:example-ilmc} where 
  states $1$,$2$, $5$, and~$6$ satisfy $\mathit{red}$,
  states $3$,$4$, $7$, and~$8$ satisfy $\mathit{green}$,
  states $9$, $10$, $13$, and~$14$ satisfy $\mathit{blue}$, and
  states $11$,$12$, $15$, and~$16$ satisfy $\mathit{orange}$. The
  closure operator is as usual for directed graphs.
  The hatched upper-right area contains the states satisfying
  $\lnearT \mathit{green}$, the hatched lower-left area the states
  satisfying $\lnearF \mathit{blue}$, i.e.,\ 
  \begin{displaymath}
    \begin{array}{rcl}
      \sem{\mkern2mu \lnearF \mathit{green}}
      & =
      & \SET{3,4,6,7,8,11}
      \smallskip \\
      \sem{\mkern2mu \lnearT \mathit{blue}}
      & =
      & \SET{5,9,10,13,14,15}
        \mkern1mu .
    \end{array}
  \end{displaymath}
\end{exa}

\begin{figure}
  \centering

\scriptsize

\scalebox{0.85}{%
\begin{tikzpicture}[baseline=1,
  ->, >=stealth', semithick, shorten >=0.5pt,
  every state/.style={
    circle, minimum size=15pt, inner sep=2pt, draw},
  ]
  
  \draw [gray, pattern={Lines[angle=45, distance=6pt]}, pattern color=gray!50]
  (3.4,6.1) -- (6.1,6.1) -- (6.1,3.4) -- (4.6,3.4) --
  (4.6,1.9) -- (3.4,1.9) -- (3.4,3.4) -- (1.9,3.4) -- (1.9,4.6) --
  (3.4,4.6) -- cycle ;
  \node [draw=none] at (7.2,5.5) {\normalsize $\sem{\mkern2mu \lnearF \mathit{green}}$} ;
  
  \draw [gray, pattern={Lines[angle=-45, distance=6pt]}, pattern
  color=gray!50] (0.4,4.6) -- (1.6,4.6) -- (1.6,3.1) -- (3.1,3.1) --
  (3.1,1.6) -- (4.6,1.6) -- (4.6,0.4) --
  (3.1, 0.4) -- (0.4,0.4) -- cycle ;
  \node [draw=none] at (-0.6,1) {\normalsize $\sem{\mkern2mu \lnearT \mathit{blue}}$} ;

  \node [state, fill=red!60] (x1) at (1,5.5) {$1$} ;
  \node [state, fill=red!60] (x2) at (2.5,5.5) {$2$} ;
  \node [state, fill=green!50] (x3) at (4,5.5) {$3$} ;
  \node [state, fill=green!50] (x4) at (5.5,5.5) {$4$} ;

  \node [state, fill=red!60] (x5) at (1,4) {$5$} ;
  \node [state, fill=red!60] (x6) at (2.5,4) {$6$} ;
  \node [state, fill=green!50] (x7) at (4,4) {$7$} ;
  \node [state, fill=green!50] (x8) at (5.5,4) {$8$} ;

  \node [state, fill=blue!60] (x9) at (1,(2.5) {$9$} ;
  \node [state, fill=blue!60] (x10) at (2.5,2.5) {$10$} ;
  \node [state, fill=orange!50] (x11) at (4,2.5) {$11$} ;
  \node [state, fill=orange!50] (x12) at (5.5,2.5) {$12$} ;

  \node [state, fill=blue!60] (x13) at (1,1) {$13$} ;
  \node [state, fill=blue!60] (x14) at (2.5,1) {$14$} ;
  \node [state, fill=orange!50] (x15) at (4,1) {$15$} ;
  \node [state, fill=orange!50] (x16) at (5.5,1) {$16$} ;

  \draw (x1) edge (x2) ;
  \draw (x2) edge (x3) ;
  \draw (x3) edge (x4) ;

  \draw (x6) edge (x5) ;
  \draw (x7) edge (x6) ;
  \draw (x8) edge (x7) ;

  \draw (x9) edge (x10) ;
  \draw (x10) edge (x11) ;
  \draw (x11) edge (x12) ;

  \draw (x14) edge (x13) ;
  \draw (x15) edge (x14) ;
  \draw (x16) edge (x15) ;

  \draw (x1) edge (x5) ;
  \draw (x5) edge (x9) ;
  \draw (x9) edge (x13) ;

  \draw (x14) edge (x10) ;
  \draw (x10) edge (x6) ;
  \draw (x6) edge (x2) ;

  \draw (x3) edge (x7) ;
  \draw (x7) edge (x11) ;
  \draw (x11) edge (x15) ;

  \draw (x16) edge (x12) ;
  \draw (x12) edge (x8) ;
  \draw (x8) edge (x4) ;

\end{tikzpicture}
} 
  \caption{Points satisfying $\lnearF \mathit{green}$ and
    $\lnearT \mathit {blue}$.}
  \label{fig:example-ilmc}
\end{figure}

\noindent
Equivalence for the logic~\imlc{} is defined as usual.

\begin{defi}[\imlc-equivalence]
  \label{def:ImlcEq}
  For a \qdcm~$\model$, the relation
  $\imlceq^{\model} \, \subseteq \, X \times X$ is defined by
  \begin{displaymath}
    \begin{array}{rcl}
      x_1 \imlceq^{\model} x_2
      & \text{iff}
      & \text{${\model, x_1 \models_{\imlc} \form} \Leftrightarrow {\model, x_2
        \models_{\imlc} \form}$, for all $\form \in \imlc$.}
    \end{array}\vspace{-0.25in}
  \end{displaymath}\closedefi
\end{defi}

\noindent
Next we formulate two lemmas which are used to prove that
\cmc-bisimilarity and \imlc-equivalence coincide.

\begin{lem}
  \label{lem:CMCbisIsImlceq}
  For all points $x_1, x_2$ in \qdcm~$\model$, if $x_1
  \cmcbis^{\model} x_2$ then $x_1 \imlceq^{\model} x_2$. 
\end{lem}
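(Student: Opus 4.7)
The plan is to mimic the proof of Lemma~\ref{lem:CMbisIsImleq}, proceeding by induction on the structure of the \imlc-formula~$\form$, showing that whenever $x_1 \cmcbis^{\model} x_2$ and $\model, x_1 \models_{\imlc} \form$, also $\model, x_2 \models_{\imlc} \form$. (Symmetry of $\cmcbis^{\model}$ then gives the biconditional.) The atomic, negation, and infinitary conjunction cases are routine: atomic propositions are handled by condition~(\ref{CBCMC1}) of Definition~\ref{def:cmc-bisimulation-alt}, and the boolean connectives use only the induction hypothesis. The interesting cases are the two modalities $\lnearF \mkern1mu \form$ and $\lnearT \mkern1mu \form$, and they are symmetric in a very precise sense.

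For the case $\form = \lnearF \mkern1mu \psi$: assume $x_1 \models_{\imlc} \lnearF \mkern1mu \psi$, i.e.,~$x_1 \in \closureF(\sem{\psi}^{\model})$. Since $(X,\closureF)$ is quasi-discrete, condition~(2) of Definition~\ref{def:QDClosureSpace} yields some $a \in \sem{\psi}^{\model}$ with $x_1 \in \closureF(\SET{a})$. By Lemma~\ref{lem:FT}\ref{point:FT2}, this is equivalent to $a \in \closureT(\SET{x_1})$. I would now invoke Lemma~\ref{lem:CMCbisimilarityEqCMCbisimilarity-alt} and apply condition~(\ref{CBCMC3}) of Definition~\ref{def:cmc-bisimulation-alt} to obtain $a' \in \closureT(\SET{x_2})$ with $(a,a') \in B$, hence $a \cmcbis^{\model} a'$. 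The induction hypothesis then gives $a' \models_{\imlc} \psi$, i.e.,~$a' \in \sem{\psi}^{\model}$. Applying Lemma~\ref{lem:FT}\ref{point:FT2} in the other direction, $x_2 \in \closureF(\SET{a'}) \subseteq \closureF(\sem{\psi}^{\model})$, so $x_2 \models_{\imlc} \lnearF \mkern1mu \psi$ as required.

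The case $\form = \lnearT \mkern1mu \psi$ is handled dually: start from $x_1 \in \closureT(\sem{\psi}^{\model})$, obtain $a \in \sem{\psi}^{\model}$ with $a \in \closureF(\SET{x_1})$ (using Lemma~\ref{lem:FT} together with the quasi-discrete decomposition applied to $\closureT$), invoke condition~(\ref{CBCMC2}) of Definition~\ref{def:cmc-bisimulation-alt} to get a matching $a' \in \closureF(\SET{x_2})$ with $(a,a') \in B$, and then use the induction hypothesis to conclude $x_2 \in \closureT(\sem{\psi}^{\model})$.

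I do not expect any genuine obstacle: the machinery already in place (the closure-based reformulation of \cmc-bisimilarity in Definition~\ref{def:cmc-bisimulation-alt} and Lemma~\ref{lem:CMCbisimilarityEqCMCbisimilarity-alt}, together with the quasi-discrete identity $\closureF(A) = \bigcup_{a \in A} \closureF(\SET{a})$ and the $\closureF$/$\closureT$ duality of Lemma~\ref{lem:FT}\ref{point:FT2}) makes the modal cases almost mechanical. The only subtle point is remembering to pass through the closure-based definition rather than trying to argue directly from Definition~\ref{def:cmc-bisimulation}; that is where Lemma~\ref{lem:ImleqIsCMbis}-style neighbourhood manipulations would otherwise become cumbersome.
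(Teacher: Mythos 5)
Your proof is correct and takes essentially the same approach as the paper's: induction on the structure of the formula, with the modal cases handled by converting between $\closureF$ and $\closureT$ via Lemma~\ref{lem:FT}\ref{point:FT2} and then applying the ``crossed'' closure-based clause of Definition~\ref{def:cmc-bisimulation-alt} (the paper writes out only the $\lnearT$ case, using condition~(\ref{CBCMC2}), exactly as in your dual case). Your explicit appeal to Lemma~\ref{lem:CMCbisimilarityEqCMCbisimilarity-alt} to license working with the closure-based formulation is a step the paper leaves implicit, but otherwise the two arguments coincide.
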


\begin{proof}
  The proof is similar to that of Lemma~\ref{lem:CMbisIsImleq}.  Let
  $\model = (X, \closureF, \peval)$. We verify, by induction on the
  structure of the formula~$\form$, that $x_1 \models \form$ if and
  only if $x_2 \models \form$ for $x_1, x_2 \in X$ such
  that~$x_1 \cmcbis x_2$. We only cover the case for~$\lnearT \form$.

  For the case of $\lnearT \form$ we will exploit
  condition~(\ref{CBCMC2}) of
  Definition~\ref{def:cmc-bisimulation-alt}.  Suppose
  $x_1 \cmcbis x_2$ and $x_1 \models \lnearT \form$. Then
  $x_1 \in \closureT (\sem{\form})$ by Definition~\ref{def:Imlc}. Thus
  exists $x'_1 \in X$ such that $x'_1 \models \form$ and
  $x_1 \in \closureT (x'_1)$.
  By Lemma~\ref{lem:FT}\ref{point:FT2}, we also have
  $x'_1 \in \closureF(x_1)$. Since $x'_1 \in \closureF(x_1)$ and
  $x_1 \cmcbis x_2$ we obtain, from condition~(\ref{CBCMC2}) of
  Definition~\ref{def:cmc-bisimulation-alt}, that
  $x'_2 \in \closureF(x_2)$ exists such that $x'_1 \cmcbis x'_2$. From
  $x'_1 \models \form$ we derive $x'_2 \models \form$ by induction
  hypothesis for~$\form$. Therefore, $x'_2 \in \sem{\form}$,
  $x_2\in \closureT(x'_2)$ by Lemma~\ref{lem:FT}\ref{point:FT2}, and
  thus $x_2 \in \closureT (\sem{\form})$, which implies
  $x_2 \models \lnearT \form$.
\end{proof}

For what concerns the other direction, i.e.,\ going from
\imlc-equivalence to \cmc-bisimilarity, we show, as in the previous
section, that logic equivalence is a bisimulation.

\begin{lem}
  \label{lem:ImlceqIsCMCbis}
  For a \qdcm~$\model$,  $\imlceq^{\model}$
  is a \cmc-bisimulation for~$\model$.
\end{lem}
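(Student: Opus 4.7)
The plan is to adapt the characteristic-formula technique used in the proof of Lemma~\ref{lem:ImleqIsCMbis}, this time for \imlc{} and for the closure-based reformulation of \cmc-bisimulation in Definition~\ref{def:cmc-bisimulation-alt}, which is equivalent to Definition~\ref{def:cmc-bisimulation} by Lemma~\ref{lem:CMCbisimilarityEqCMCbisimilarity-alt}. Working with the closure formulation is convenient because it lets us use $\closureF$ and $\closureT$ directly, which match the two modalities of \imlc.

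For each point $x$ of $\model = (X, \closureF, \peval)$, I would define a characteristic formula $\chi(x) = \liand_{y \in X} \delta_{x,y}$, where $\delta_{x,y} = \ltrue$ if $x \imlceq y$ and, otherwise, is any \imlc-formula satisfied by $x$ but not by $y$; infinitary conjunction makes this well-defined. A routine induction yields $z \models \chi(x)$ iff $x \imlceq z$.

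Assume $x_1 \imlceq x_2$. Clause~(\ref{CBCMC1}) of Definition~\ref{def:cmc-bisimulation-alt} is immediate by instantiating $\form$ with each $p \in \ap$. For clause~(\ref{CBCMC2}), given $x'_1 \in \closureF(x_1)$, Lemma~\ref{lem:FT}\ref{point:FT2} rewrites this as $x_1 \in \closureT(\SET{x'_1})$; since $x'_1 \in \sem{\chi(x'_1)}$ and $\closureT$ is monotone, we obtain $x_1 \in \closureT(\sem{\chi(x'_1)})$, that is, $x_1 \models \lnearT \chi(x'_1)$. By \imlc-equivalence, also $x_2 \models \lnearT \chi(x'_1)$, and quasi-discreteness of the induced space $(X, \closureT)$ yields some $a \in \sem{\chi(x'_1)}$ with $x_2 \in \closureT(\SET{a})$. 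A second application of Lemma~\ref{lem:FT}\ref{point:FT2} gives $a \in \closureF(x_2)$, while $a \models \chi(x'_1)$ gives $x'_1 \imlceq a$; hence $x'_2 := a$ discharges clause~(\ref{CBCMC2}).

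Clause~(\ref{CBCMC3}) is entirely symmetric, interchanging $\closureF$ with $\closureT$ and $\lnearT$ with $\lnearF$ throughout, and appealing to quasi-discreteness of $(X, \closureF)$ instead. The only mildly delicate point---and the place I would check most carefully---is the pairing of forward/backward closures with the opposite modalities via Lemma~\ref{lem:FT}\ref{point:FT2}; once that bookkeeping is right, quasi-discreteness reduces the membership $x_2 \in \closureT(\sem{\chi(x'_1)})$ (respectively $x_2 \in \closureF(\sem{\chi(x'_1)})$) to the existence of a single witness, which is exactly what the closure-based bisimulation condition requires.
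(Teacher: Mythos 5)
Your proposal is correct and follows essentially the same route as the paper's proof: the same characteristic formulas $\chi(x) = \liand_{y\in X}\delta_{x,y}$, the same switch to the closure-based Definition~\ref{def:cmc-bisimulation-alt}, and the same pairing of $\closureF$ with $\lnearT$ (and vice versa) via Lemma~\ref{lem:FT}\ref{point:FT2}. The only difference is presentational: you make explicit the appeal to quasi-discreteness when extracting a single witness from $x_2 \in \closureT(\sem{\chi(x'_1)})$, which the paper leaves implicit, and the property $z \models \chi(x)$ iff $x \imlceq z$ follows directly from the definition of $\delta_{x,y}$ rather than by induction.
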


\begin{proof}
  Let $\model = (X,\closureF,\peval)$. 
  Define, for  points $x,y \in X$, \imlc-formula~$\delta_{x,y}$
  as follows: if $x \imlceq y$, then set $\delta_{x,y}= \ltrue$, otherwise pick some 
  \imlc-formula~$\psi$ such that $\model,x \models \psi$ and $\model,y \models \lneg\psi$, and 
  set $\delta_{x,y}= \psi$.
  For a point~$x$
  of~$\model$  define
  $\chi(x) = \liand_{y \in X} \: \delta_{x,y}$. 
  Note, for $x,y\in X$, it holds that $y \in \sem{\chi(x)}$ if and only
  if $x \imlceq y$. 

  In order to verify that $\imlceq$ is a \cmc-bisimulation we check
  the conditions of
  Definition~\ref{def:cmc-bisimulation-alt}. Suppose
  $x_1 \imlceq x_2$ for $x_1, x_2 \in X$; (i)~clearly, $x_1 \in \peval(p)$
  iff $x_2 \in \peval(p)$ since $x_1 \models p$ iff $x_2 \models p$.
  (ii)~let $x'_1 \in \closureF(x_1)$. Since $x_1 \in \closureT(x'_1)$,
  by Lemma~\ref{lem:FT}\ref{point:FT2}, and
  $x'_1 \in \sem{\chi(x'_1)}$ it holds that
  $x_1 \models \lnearT \chi(x'_1)$. By assumption also
  $x_2 \models \lnearT \chi(x'_1)$. Hence, for some~$x'_2 \in X$ we
  have $x_2 \in \closureT(x'_2)$ and $x'_2 \in \sem{\chi(x'_1)}$. Thus
  $x'_2 \in \closureF(x_2)$ and $x'_1 \imlceq x'_2$.
  (iii)~similar to (ii).
\end{proof}

With the two lemmas above in place, we can establish the correspondence of
\cmc-bisimilarity and \imlc-equivalence.

\begin{thm}
  \label{thm:CMCbisEqImleq}
  For a \qdcm{} $\model$ it holds that $\imlceq^{\model}$ coincides
  with $\cmcbis^{\model}$.
\end{thm}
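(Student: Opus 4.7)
The statement is a direct Hennessy--Milner style correspondence, and the plan is to derive it as an immediate consequence of the two preceding lemmas, exactly mirroring the structure used earlier for Theorem~\ref{thm:CMbisEqImleq}. Concretely, the proof will establish the two inclusions $\mathord{\cmcbis^{\model}} \subseteq \mathord{\imlceq^{\model}}$ and $\mathord{\imlceq^{\model}} \subseteq \mathord{\cmcbis^{\model}}$ separately, then conclude equality.

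For the first inclusion, I would simply invoke Lemma~\ref{lem:CMCbisIsImlceq}, which states that any two \cmc-bisimilar points in~$\model$ are \imlc-equivalent. This gives $\mathord{\cmcbis^{\model}} \subseteq \mathord{\imlceq^{\model}}$ at once.

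For the second inclusion, I would invoke Lemma~\ref{lem:ImlceqIsCMCbis}, which shows that the relation $\imlceq^{\model}$ is itself a \cmc-bisimulation for~$\model$. Since \cmc-bisimilarity $\cmcbis^{\model}$ is, by definition, the union of all \cmc-bisimulations on~$\model$ (and hence the largest one, in full analogy with \cm-bisimilarity as discussed right after Definition~\ref{def:cmc-bisimulation}), it follows that $\mathord{\imlceq^{\model}} \subseteq \mathord{\cmcbis^{\model}}$.

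Combining the two inclusions yields $\mathord{\imlceq^{\model}} = \mathord{\cmcbis^{\model}}$, which is the desired coincidence. There is essentially no technical obstacle here: the real content of the argument is packaged in Lemmas~\ref{lem:CMCbisIsImlceq} and~\ref{lem:ImlceqIsCMCbis}, whose proofs rely, respectively, on structural induction on \imlc-formulas (using Lemma~\ref{lem:FT}\ref{point:FT2} to switch between $\closureF$ and $\closureT$) and on the construction of characteristic \imlc-formulas $\chi(x) = \liand_{y \in X} \delta_{x,y}$ discriminating non-equivalent points. The present theorem is thus a short corollary, parallel in form to Theorem~\ref{thm:CMbisEqImleq}.
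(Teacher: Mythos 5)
Your proposal is correct and matches the paper's own proof exactly: both inclusions are obtained by citing Lemma~\ref{lem:CMCbisIsImlceq} and Lemma~\ref{lem:ImlceqIsCMCbis}, respectively, with the second lemma's conclusion (that $\imlceq^{\model}$ is a \cmc-bisimulation) yielding $\mathord{\imlceq^{\model}} \subseteq \mathord{\cmcbis^{\model}}$ since \cmc-bisimilarity contains every \cmc-bisimulation. No differences worth noting.
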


\begin{proof}
  On the one hand,
  $\mathord{\cmcbis^{\model}} \subseteq \mathord{\imlceq^{\mkern1mu
      \model}}$ by Lemma~\ref{lem:CMCbisIsImlceq}.  On the other hand,
  $\mathord{\imlceq^{\mkern1mu \model}} \subseteq
  \mathord{\cmcbis^{\model}}$ by Lemma~\ref{lem:ImlceqIsCMCbis}.
\end{proof}

\noindent
The following statement is an obvious consequence of Theorem~\ref{thm:CMCbisEqImleq}.

\begin{cor}
   \label{cor:CMCbisEquiv}
For all  \qdcm{s} $\model$, $\cmcbis^{\model}$ is an equivalence relation. 
\end{cor}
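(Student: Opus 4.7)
The plan is to obtain the equivalence property of $\cmcbis^{\model}$ by transport along the logical characterisation established in Theorem~\ref{thm:CMCbisEqImleq}. The crucial observation is that $\imlceq^{\model}$, being defined in Definition~\ref{def:ImlcEq} as the relation induced by agreement on all \imlc{} formulas, is an equivalence relation by inspection: reflexivity, symmetry, and transitivity of $\imlceq^{\model}$ all follow directly from the corresponding properties of the biconditional $\Leftrightarrow$ between the truth values $\model,x_1 \models_{\imlc} \form$ and $\model,x_2 \models_{\imlc} \form$.

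Given this, the argument reduces to three one-line checks. First, for reflexivity: for every $x \in X$ trivially $x \imlceq^{\model} x$, and hence $x \cmcbis^{\model} x$ by Theorem~\ref{thm:CMCbisEqImleq}. Second, for symmetry: this is already part of Definition~\ref{def:cmc-bisimulation}, since \cmc-bisimulations are required to be symmetric relations, but it also follows immediately via the theorem from the symmetry of $\imlceq^{\model}$. Third, for transitivity: if $x_1 \cmcbis^{\model} x_2$ and $x_2 \cmcbis^{\model} x_3$, then by Theorem~\ref{thm:CMCbisEqImleq} we have $x_1 \imlceq^{\model} x_2$ and $x_2 \imlceq^{\model} x_3$, whence $x_1 \imlceq^{\model} x_3$, and applying the theorem once more yields $x_1 \cmcbis^{\model} x_3$.

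There is no genuine obstacle here; the only subtlety worth noting is that transitivity of $\cmcbis^{\model}$ is not obvious directly from Definition~\ref{def:cmc-bisimulation-alt} (one would need to show that the relational composition of two \cmc-bisimulations is again a \cmc-bisimulation, which, while true, is precisely the kind of routine check that the logical characterisation lets us bypass). The corollary thus follows immediately from the theorem and the definitional fact that logical indistinguishability is always an equivalence relation.
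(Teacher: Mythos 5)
Your proof is correct and follows exactly the route the paper intends: the corollary is stated there as an ``obvious consequence'' of Theorem~\ref{thm:CMCbisEqImleq}, i.e., transporting reflexivity, symmetry, and transitivity from the logical equivalence $\imlceq^{\model}$ to $\cmcbis^{\model}$. Your write-up merely makes that one-line argument explicit, so there is nothing to add or correct.
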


\begin{rem}
  \label{rem:rho}
  Previous work of Ciancia et al.\ concerns (various iterations of)
  the Spatial Logic for Closure Spaces (\slcs) that is based on the \emph{surrounded} operator~$\lsurr$ and the
  \emph{reachability} operator~$\lreach$, see
  e.g.,~\cite{Ci+21,Ci+20,Be+19,Ci+16}. A point~$x$ satisfies
  $\form_1 \, \lsurr\, \form_2$ if it lays in an area whose points
  satisfy~$\form_1$, and that is delimited, i.e., surrounded, by points that
  satisfy~$\form_2$. The point~$x$ satisfies
  $\lreach \,\form_1[\form_2]$ if there is a path starting in~$x$ that
  \emph{reaches} a point satisfying~$\form_1$ and whose intermediate
  points---if any---satisfy~$\form_2$.

  In~\cite{Be+19} it has been shown that the operator~$\lsurr$ can be
  derived from the logical operator~$\lreach$. More specifically,
  $\form_1 \, \lsurr\, \form_2$ is equivalent to
  $\form_1 \land \neg \lreach(\neg(\form_1 \lor \form_2))[\lneg
  \form_2]$. Furthermore, for \qdcm{s}, the operator~$\lreach$ gives
  rise to a pair of operators, namely $\ltothru$, coinciding
  with~$\lreach$, and its \lq{}converse\rq{}~$\lfromthru$, meaning
  that a point~$x$ \emph{can be reached from} a point
  satisfying~$\form_1$ via a path whose intermediate points---if
  any---satisfy~$\form_2$. It is easy to see that, for such spaces,
  $\lnearF\, \form$ and~$\lnearT\, \form$ are equivalent to
  $\lfromthru \,\form[\lfalse]$ and $\ltothru \, \form[\lfalse]$,
  respectively, and that $\ltothru \, \form_1[\form_2]$ is equivalent
  to a suitable (possibly) infinite disjunction of nested
formulas using only conjunction and the \imlc{} $\lnearT$
proximity operator, involving $\form_1$ and $\form_2$; 
similarly $\lfromthru \, \form_1[\form_2]$ is equivalent to a formula
using only conjunction and the \imlc{} $\lnearF$ operator.
 Thus, on \qdcm{s}, \imlc{} and \islcs{}---the
  infinitary version of \slcs{}~\cite{Ci+21}---share the same
  expressive power. The interested reader is referred to~\cite{Ci+22}.
As for Theorem~\ref{thm:CMbisEqImleq}, note that the
Hennessy-Milner result of Theorem~\ref{thm:CMCbisEqImleq} is the counterpart for closure
models of similar results for Kripke and neighbourhood models.
\end{rem}


\section{\cop-Bisimilarity for \qdcm{}}
\label{sec:COPAbisimilarity}

\cm-bisimilarity, and its refinement \cmc-bisimilarity, are a
fundamental starting point for the study of spatial bisimulations due
to their strong link with topo-bisimulation. However, \cm{} and
\cmc-bisimilarity are rather strong, i.e.,\ fine-grained, notions of
equivalence regarding their use for reasoning about general properties
of space.
For instance, in the \qdcm{} given in Figure~\ref{fig:grid02}
(with points at the border satisfying the atomic proposition
$\mathit{green}$ and inner points satisfying atomic proposition
$\mathit{red}$), point~$13$ in the centre is {\em not} \cmc-bisimilar to
any other red point around it.
This is because \cmc-bisimilarity is based on reachability ``in one
step'', so to speak, as can be seen from
Definition~\ref{def:cmc-bisimulation-alt}. This, in turn, provides
\cmc-bisimilarity with the ability to distinguish points based on
their distance to a set that can be characterized by an \imlc-formula.
This is inconsistent with the wish to consider in this model, for
instance, all red points to be spatially equivalent. Along the same
lines, one could desire to identify all `$\mathit{green}$' points at
the border as well as all inner `$\mathit{red}$' points, obtaining a
model of two points only, a `$\mathit{green}$' one and a
`$\mathit{red}$'.

\begin{figure}
  \centering
  \scalebox{0.85}{%

\scriptsize

\begin{tikzpicture}[baseline=1,
  <->, >=stealth', semithick, shorten >=0.5pt, shorten <=0.5pt,
  every state/.style={
    circle, minimum size=12pt, inner sep=0.5pt, draw},
  ]
  
  \node [state, fill=green!50] (x11) at (1.0,1.0) {21};
  \node [state, fill=green!50] (x12) at (1.0,2.5) {16};
  \node [state, fill=green!50] (x13) at (1.0,4.0) {11};
  \node [state, fill=green!50] (x14) at (1.0,5.5) {6};
  \node [state, fill=green!50] (x15) at (1.0,7.0) {1};

  \node [state, fill=green!50] (x21) at (2.5,1.0) {22};
  \node [state, fill=red!50  ] (x22) at (2.5,2.5) {17};
  \node [state, fill=red!50  ] (x23) at (2.5,4.0) {12};
  \node [state, fill=red!50  ] (x24) at (2.5,5.5) {7};
  \node [state, fill=green!50] (x25) at (2.5,7.0) {2};

  \node [state, fill=green!50] (x31) at (4.0,1.0) {23};
  \node [state, fill=red!50  ] (x32) at (4.0,2.5) {18};
  \node [state, fill=red!50  ] (x33) at (4.0,4.0) {13};
  \node [state, fill=red!50  ] (x34) at (4.0,5.5) {8};
  \node [state, fill=green!50] (x35) at (4.0,7.0) {3};

  \node [state, fill=green!50] (x41) at (5.5,1.0) {24};
  \node [state, fill=red!50  ] (x42) at (5.5,2.5) {19};
  \node [state, fill=red!50  ] (x43) at (5.5,4.0) {14};
  \node [state, fill=red!50  ] (x44) at (5.5,5.5) {9};
  \node [state, fill=green!50] (x45) at (5.5,7.0) {4};

  \node [state, fill=green!50] (x51) at (7.0,1.0) {25};
  \node [state, fill=green!50] (x52) at (7.0,2.5) {20};
  \node [state, fill=green!50] (x53) at (7.0,4.0) {15};
  \node [state, fill=green!50] (x54) at (7.0,5.5) {10};
  \node [state, fill=green!50] (x55) at (7.0,7.0) {5};

  \draw (x11) edge (x12) ;  \draw (x12) edge (x13) ;
  \draw (x13) edge (x14) ;  \draw (x14) edge (x15) ;

  \draw (x21) edge (x22) ;  \draw (x22) edge (x23) ;
  \draw (x23) edge (x24) ;  \draw (x24) edge (x25) ;

  \draw (x31) edge (x32) ;  \draw (x32) edge (x33) ;
  \draw (x33) edge (x34) ;  \draw (x34) edge (x35) ;

  \draw (x41) edge (x42) ;  \draw (x42) edge (x43) ;
  \draw (x43) edge (x44) ;  \draw (x44) edge (x45) ;

  \draw (x51) edge (x52) ;  \draw (x52) edge (x53) ;
  \draw (x53) edge (x54) ;  \draw (x54) edge (x55) ;

  \draw (x14) edge (x25) ;

  \draw (x13) edge (x24) ;
  \draw (x24) edge (x35) ;

  \draw (x12) edge (x23) ;
  \draw (x23) edge (x34) ;
  \draw (x34) edge (x45) ;

  \draw (x11) edge (x22) ;
  \draw (x22) edge (x33) ;
  \draw (x33) edge (x44) ;
  \draw (x44) edge (x55) ;

  \draw (x21) edge (x32) ;
  \draw (x32) edge (x43) ;
  \draw (x43) edge (x54) ;

  \draw (x31) edge (x42) ;
  \draw (x42) edge (x53) ;

  \draw (x41) edge (x52) ;

  \draw (x11) edge (x21) ; \draw (x21) edge (x31) ;
  \draw (x31) edge (x41) ; \draw (x41) edge (x51) ;

  \draw (x12) edge (x22) ; \draw (x22) edge (x32) ;
  \draw (x32) edge (x42) ; \draw (x42) edge (x52) ;

  \draw (x13) edge (x23) ; \draw (x23) edge (x33) ;
  \draw (x33) edge (x43) ; \draw (x43) edge (x53) ;

  \draw (x14) edge (x24) ; \draw (x24) edge (x34) ;
  \draw (x34) edge (x44) ; \draw (x44) edge (x54) ;

  \draw (x15) edge (x25) ; \draw (x25) edge (x35) ;
  \draw (x35) edge (x45) ; \draw (x45) edge (x55) ;

  \draw (x12) edge (x21) ;

  \draw (x13) edge (x22) ;
  \draw (x22) edge (x31) ;

  \draw (x14) edge (x23) ;
  \draw (x23) edge (x32) ;
  \draw (x32) edge (x41) ;

  \draw (x15) edge (x24) ;
  \draw (x24) edge (x33) ;
  \draw (x33) edge (x42) ;
  \draw (x42) edge (x51) ;

  \draw (x25) edge (x34) ;
  \draw (x34) edge (x43) ;
  \draw (x43) edge (x52) ;

  \draw (x35) edge (x44) ;
  \draw (x44) edge (x53) ;

  \draw (x45) edge (x54) ;
\end{tikzpicture}
  } 
  \caption{Red midpoint~$13$ not \cmc-bisimilar to
    red points around it.
  }
  \label{fig:grid02}
\end{figure}

In order to overcome the `counting' capability of \cmc-bisimilarity,
one may think of considering \emph{paths} instead of single
`steps'. In fact in~\cite{Ci+21} we introduced such a bisimilarity,
called path bisimilarity. This bisimilarity requires that, in order
for two points to be bisimilar, for every path starting from one point,
there must be a path starting from the other point having
bisimilar end-points. However, as we  discussed in
Section~\ref{sec:Introduction}, path bisimilarity is too weak. In
particular because no constraints are put on intermediate points.

In order to come to a more constraining definition, and yet a 
notion of bisimulation weaker than \cmc-bisimulation, a deeper insight into the structure of a path
is desirable as well as some, relatively high-level, requirements
regarding paths. For this purpose we resort to a form of compatibility
between paths that essentially requires each of them to be composed of
non-empty, adjacent and interrelated `zones'.

Informally, two such paths under consideration should share the same
structure as in Figure~\ref{fig:Zones}. We see that (i)~both paths can
be seen as comprised of corresponding zones---the number of such zones in each path being equal and positive, (ii)~points in
corresponding zones are bisimilar, although (iii)~the length of
corresponding zones---and thus of the two paths---may be different.

We first formalise the notion of compatibility of paths in a
quasi-discrete closure model.

\begin{defi}[Path-compatibility]
  \label{def:path-compatibility}
  Given \qdcm{} $\model=(X,\closureF, \peval)$ and
  $B \subseteq {X \times X}$ a relation.
    Two forward (respectively backward) paths $\pi_1 = ( x'_i )_{i{=}0}^\ell$ and
    $\pi_2 = ( x''_j )_{j{=}0}^m$ in~$\model$ are called {\em compatible}
    with respect to~$B$ in~$\model$ if, for some~$N > 0$, two total
    monotone surjections $f : [0;\ell \mkern1mu] \to [1;N]$ and
    $g : [0;m] \to [1;N]$ exist such that~$(x'_i,x''_j)\in B$ for all
    indices $i \in [0;\ell \mkern1mu ]$ and~$j \in [0;m]$ satisfying
    $f(i) = g(j)$.
\end{defi}

\noindent
The functions $f$ and~$g$ are referred to as \emph{matching
  functions}. Note that both the number~$N$ and functions $f$ and~$g$
need not be unique. The minimal number~$N > 0$ for which matching
functions exist is defined to be the number of {\em zones} of the two paths
$\pi_1$ and~$\pi_2$.

It is easy to see that, given \qdcm{} $\model=(X,\closureF, \peval)$ and relation
$B \subseteq X\times X$,
whenever two paths 
$( x'_i )_{i{=}0}^\ell$ and $( x''_j )_{j{=}0}^m$ over $X$
are compatible with respect to $B$, for any pair of matching function $f$ and~$g$ the
following holds, by virtue of monotonicity and surjectivity:
$f(0) = 1$ and~$g(0) = 1$ and $f(\ell) = g(m)$.  Hence $(x'_0,x''_0)\in B$
and $(x'_\ell,x''_m)\in B$. In other words, for paths that are compatible with
respect to a relation, the start and end points are
related by that relation.

\begin{figure}
  \centering
\scriptsize
\begin{tikzpicture}[baseline=1,
  <->, >=stealth', semithick, shorten >=0.5pt, shorten <=0.5pt,
  every state/.style={
    circle, minimum size=12pt, inner sep=0.5pt, draw},
  scale=0.67,
  ]
  
  \node [state, fill=green!50] (x11) at (1.0,1.0) {21};
  \node [state, fill=green!50] (x12) at (1.0,2.5) {16};
  \node [state, fill=green!50] (x13) at (1.0,4.0) {11};
  \node [state, fill=green!50] (x14) at (1.0,5.5) {6};
  \node [state, fill=green!50] (x15) at (1.0,7.0) {1};

  \node [state, fill=green!50] (x21) at (2.5,1.0) {22};
  \node [state, fill=red!50  ] (x22) at (2.5,2.5) {17};
  \node [state, fill=red!50  ] (x23) at (2.5,4.0) {12};
  \node [state, fill=red!50  ] (x24) at (2.5,5.5) {7};
  \node [state, fill=green!50] (x25) at (2.5,7.0) {2};

  \node [state, fill=green!50] (x31) at (4.0,1.0) {23};
  \node [state, fill=red!50  ] (x32) at (4.0,2.5) {18};
  \node [state, fill=red!50  ] (x33) at (4.0,4.0) {13};
  \node [state, fill=red!50  ] (x34) at (4.0,5.5) {8};
  \node [state, fill=green!50] (x35) at (4.0,7.0) {3};

  \node [state, fill=green!50] (x41) at (5.5,1.0) {24};
  \node [state, fill=red!50  ] (x42) at (5.5,2.5) {19};
  \node [state, fill=red!50  ] (x43) at (5.5,4.0) {14};
  \node [state, fill=red!50  ] (x44) at (5.5,5.5) {9};
  \node [state, fill=green!50] (x45) at (5.5,7.0) {4};

  \node [state, fill=green!50] (x51) at (7.0,1.0) {25};
  \node [state, fill=green!50] (x52) at (7.0,2.5) {20};
  \node [state, fill=green!50] (x53) at (7.0,4.0) {15};
  \node [state, fill=green!50] (x54) at (7.0,5.5) {10};
  \node [state, fill=green!50] (x55) at (7.0,7.0) {5};

  \node [draw=none] (pi1) at (0.125,7.0) {\normalsize $\pi_1$} ;
  \draw [->] (x11) edge (x12) ;  \draw [->] (x12) edge (x13) ;
  \draw [->] (x13) edge (x14) ; 

  \draw (x22) [->] edge (x23) ;

  \draw [<-] (x42) edge (x43) ;
  \draw [<-] (x43) edge (x44) ; 

  \draw [<-] (x51) edge (x52) ;  \draw [<-] (x52) edge (x53) ;
  \draw [<-] (x53) edge (x54) ;  \draw [<-] (x54) edge (x55) ;

  \draw [<-] (x11) edge (x21) ; \draw [<-] (x21) edge (x31) ;
  \draw [<-] (x31) edge (x41) ; \draw [<-] (x41) edge (x51) ;

  \draw [<-] (x22) edge (x32) ;
  \draw [<-] (x32) edge (x42) ;

  \draw [->] (x23) edge (x33) ;
  \draw [->] (x33) edge [loop right, looseness=12] (x33) ;

  \draw [->] (x14) edge (x24) ; \draw [->] (x24) edge (x34) ;
  \draw [->] (x34) edge (x44) ;

  \draw [->] (x15) edge (x25) ; \draw [->] (x25) edge (x35) ;
  \draw [->] (x35) edge (x45) ; \draw [->] (x45) edge (x55) ;

\end{tikzpicture}
  \hspace*{1.5cm}
\scriptsize
\begin{tikzpicture}[baseline=1,
  <->, >=stealth', semithick, shorten >=0.5pt, shorten <=0.5pt,
  every state/.style={
    circle, minimum size=12pt, inner sep=0.5pt, draw},
  scale=0.67,
  ]
  
  \node [state, fill=green!50] (x11) at (1.0,1.0) {21};
  \node [state, fill=green!50] (x12) at (1.0,2.5) {16};
  \node [state, fill=green!50] (x13) at (1.0,4.0) {11};
  \node [state, fill=green!50] (x14) at (1.0,5.5) {6};
  \node [state, fill=green!50] (x15) at (1.0,7.0) {1};

  \node [state, fill=green!50] (x21) at (2.5,1.0) {22};
  \node [state, fill=red!50  ] (x22) at (2.5,2.5) {17};
  \node [state, fill=red!50  ] (x23) at (2.5,4.0) {12};
  \node [state, fill=red!50  ] (x24) at (2.5,5.5) {7};
  \node [state, fill=green!50] (x25) at (2.5,7.0) {2};

  \node [state, fill=green!50] (x31) at (4.0,1.0) {23};
  \node [state, fill=red!50  ] (x32) at (4.0,2.5) {18};
  \node [state, fill=red!50  ] (x33) at (4.0,4.0) {13};
  \node [state, fill=red!50  ] (x34) at (4.0,5.5) {8};
  \node [state, fill=green!50] (x35) at (4.0,7.0) {3};

  \node [state, fill=green!50] (x41) at (5.5,1.0) {24};
  \node [state, fill=red!50  ] (x42) at (5.5,2.5) {19};
  \node [state, fill=red!50  ] (x43) at (5.5,4.0) {14};
  \node [state, fill=red!50  ] (x44) at (5.5,5.5) {9};
  \node [state, fill=green!50] (x45) at (5.5,7.0) {4};

  \node [state, fill=green!50] (x51) at (7.0,1.0) {25};
  \node [state, fill=green!50] (x52) at (7.0,2.5) {20};
  \node [state, fill=green!50] (x53) at (7.0,4.0) {15};
  \node [state, fill=green!50] (x54) at (7.0,5.5) {10};
  \node [state, fill=green!50] (x55) at (7.0,7.0) {5};

  \node [draw=none] (pi2) at (7.875,4.0) {\normalsize $\pi_2$} ;

  \draw [->] (x53) edge (x43) ;
  \draw [<-] (x32) edge (x43) ;
  \draw [<-] (x43) edge (x34) ;
  \draw [<-] (x34) edge (x23) ;
  \draw [<-] (x23) edge (x32) ;

 \end{tikzpicture}
  \caption{F-compatible path for points $1$ and $15$}
  \label{fig:compatible-paths}
\end{figure}

\begin{exa}
  With reference to the \qdcm~$\model$ given in
  Figure~\ref{fig:grid02}, let us consider forward path $\pi_1$ of length~$27$
  starting in point~$1$ and forward path $\pi_2$ of length~$10$ starting in
  point~$15$, as indicated in Figure~\ref{fig:compatible-paths}.  In
  more detail,
  \begin{displaymath}
    \begin{array}{l}
      \pi_1 = (1,2,3,4,5, 10,15,20,25, 24,23,22,21,
      \\ \qquad \qquad \qquad
       16,11,6, 7,8,9, 14,19, 18,17, 12, 13, 13, 13, 13 )
      \smallskip \\
      \pi_2 = (15, 14,18,12,8, 14,18,12,8, 14,18 )
    \end{array}
  \end{displaymath}
  The paths are compatible.
  Possible matching functions are $f : [0;27] \to [1;2]$ and $g : [0;10]
  \to [1;2]$ where $f(i) = 1$ for $0 \leqslant i \leqslant 15$,
  $f(i) = 2$ for $16 \leqslant i \leqslant 27$ and $g(0) = 1$ and
  $g(j) = 2$ for $1 \leqslant j \leqslant 10$. Thus `green' states are
  in zone~$1$, `red' states are in zone~$2$.
\end{exa}

\noindent
The above notion of compatibility give rise to what is called
{\em compatible path bisimilarity} or \cop-bisimilarity, as defined below.

\begin{defi}[\cop-bisimilarity]
  \label{def:CoPabisimilarity}
  Let \qdcm{} $\model=(X,\closureF, \peval)$ be given.  A symmetric
  relation~$B \subseteq {X \times X}$ is a \emph{\cop-bisimulation}
  for~$\model$ if, whenever $(x_1,x_2)\in B$ for points $x_1, x_2 \in X$,
  the following holds:
  \begin{enumerate}
  \item $x_1 \in \peval(p)$ iff $x_2 \in \peval(p)$ for all
    $p \in \ap$.
  \item \label{zones} For each forward path$~\pi_1$ from~$x_1$ there exists
    a compatible forward path~$\pi_2$ from~$x_2$ with respect to~$B$
    in~$\model$.
  \item For each backward path~$\pi_1$ from~$x_1$ there exists a compatible
    backward path~$\pi_2$ from~$x_2$ with respect to~$B$ in~$\model$.
  \end{enumerate}
\item Two points $x_1, x_2 \in X$ are called \emph{\cop-bisimilar}
  in~$\model$ if $(x_1,x_2)\in B$ for some \cop-bisimulation $B$ for
  $\model$. Notation, $x_1\,\copbis^{\model}\, x_2$. \closedefi
\end{defi}

\begin{exa}\label{exa:CoPaBis}
  To see that, e.g., the points $7$ and~$13$ in the closure
  model~$\model$ of Figure~\ref{fig:grid02} are \cop-bisimilar, we
  consider the relation
  $B = {\bigl( \peval(\mathit{green}) \times \peval(\mathit{green})
    \bigr)} \cup {\bigl( \peval(\mathit{red}) \times
    \peval(\mathit{red}) \bigr)} \subseteq {X \times X}$, among others
  relating $7$ and~$13$. We see that in~$\model$, all pairs of `green'
  points and all pairs of `red' points are related by~$B$.\\
  (i)~Suppose $(x_1,x_2)\in B$ for two points $x_1$ and~$x_2$. Then
  $x_1 \in \peval(p)$ iff $x_2 \in \peval(p)$ by definition
  of~$\peval$ and~$B$.\\
  (ii)~Suppose $(x_1,x_2)\in B$ for two points $x_1$ and~$x_2$. We only
  treat the case where $x_1$~is the point~$13$ in the middle and
  $x_2$~is not; the other cases are similar and/or simpler. Let
  $\pi_1 = ( x'_i )_{i{=}0}^\ell$ be a forward path in~$\model$
  starting from~$x_1$. The path~$\pi_1$ shows an alternation of red
  and green points, not necessarily strict, starting from red. We
  construct a path $\pi_2$ with the same alternation of colour.
  Choose $x_r \in \closureF(x_2) \setminus \SET{13}$ such that
  $\peval(x_r) = \mathit{red}$ and $x_g \in \closureF(x_r)$ such that
  $\peval(x_g) = \mathit{green}$.
  We define the forward path $\pi_2 = ( x''_j )_{j{=}0}^\ell$
  in~$\model$, of the same length as~$\pi_1$, as follows:
  $x''_0 = x_2$ and for $j \in (0;\ell]$, $x''_j = x_g$ if
  $x'_j \in \peval(\mathit{green})$ and $x''_j = x_r$ if
  $x'_j \in \peval(\mathit{red})$. Note that $\pi_2$~is a forward path
  from~$x_2$ in~$\model$ indeed, by choice of the points $x_r$
  and~$x_g$.
  To check the compatibility of $\pi_1$ and~$\pi_2$ with respect to
  the relation~$B$ we choose as matching functions
  $f,g : [0;\ell \mkern1mu ] \to [1;\ell+1]$ satisfying $f(i) = i{+}1$
  and $g(j) = j{+}1$ for $0 \leqslant i,j \leqslant \ell$. Clearly, if
  $f(i) = g(j)$ then $i=j$, from which $\peval(x'_i) = \peval(x''_j)$
  and $(x'_i,x''_j)\in B$~follow.\\  
  (iii)~Compatibility of backward paths for $x_1$ and~$x_2$ can be
  checked in a similar way.
\end{exa}

In~\cite{Ci+22a} a definition of \cop-bisimilarity has been presented that is different from
Definition~\ref{def:CoPabisimilarity}. In the Appendix, we recall the definition given in~\cite{Ci+22a} and we show that  it is equivalent to Definition~\ref{def:CoPabisimilarity}. We prefer the latter since it better conveys the intuitive notion of the ``zones'' in the relevant paths.

The next lemma captures the fact that \cmc-bisimilarity is included in
\cop-bisimilarity.

\begin{lem}
  \label{lem:CMCbisImplCoPa}
  ${\cmcbis^{\model}} \subseteq {\copbis^{\model}}$ for every
  \qdcm~$\model$.
\end{lem}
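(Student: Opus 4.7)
The plan is to show that $\cmcbis^{\model}$ itself is a \cop-bisimulation for~$\model$, from which the desired inclusion follows immediately. Since $\cmcbis^{\model}$ is already symmetric and satisfies the atomic-proposition clause by condition~(\ref{CBCMC1}) of Definition~\ref{def:cmc-bisimulation-alt}, the substance of the argument lies in verifying the two path-compatibility clauses of Definition~\ref{def:CoPabisimilarity}.

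For the forward case, fix $(x_1,x_2) \in {\cmcbis^{\model}}$ and a forward path $\pi_1 = (x'_i)_{i=0}^{\ell}$ from~$x_1$. I would construct, by induction on the index $i \in [0;\ell\mkern1mu]$, a forward path $\pi_2 = (x''_i)_{i=0}^{\ell}$ from~$x_2$ of the \emph{same length}, with the property that $x'_i \cmcbis^{\model} x''_i$ for all~$i$. The base case takes $x''_0 = x_2$. For the inductive step, since $x'_{i+1} \in \closureF(x'_i)$ by Lemma~\ref{lem:FT}\ref{point:Paths} and $x'_i \cmcbis^{\model} x''_i$ by hypothesis, condition~(\ref{CBCMC2}) of Definition~\ref{def:cmc-bisimulation-alt} yields some $x''_{i+1} \in \closureF(x''_i)$ with $x'_{i+1} \cmcbis^{\model} x''_{i+1}$. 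Another appeal to Lemma~\ref{lem:FT}\ref{point:Paths} guarantees that $\pi_2$ is indeed a forward path in~$\model$.

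To close the forward case, I take matching functions $f,g : [0;\ell\mkern1mu] \to [1;\ell{+}1]$ defined by $f(i) = g(i) = i{+}1$; these are total, monotone and surjective with $N = \ell{+}1$, and whenever $f(i) = g(j)$ we have $i = j$, so $(x'_i, x''_j) \in {\cmcbis^{\model}}$ by construction. Hence $\pi_1$ and~$\pi_2$ are compatible with respect to~$\cmcbis^{\model}$. The backward case is entirely analogous, invoking condition~(\ref{CBCMC3}) of Definition~\ref{def:cmc-bisimulation-alt} in place of~(\ref{CBCMC2}).

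There is no real obstacle here: \cmc-bisimilarity is strong enough to mimic each individual step of a path, so compatibility is witnessed by the trivial lock-step matching. The point of the lemma is simply to record that the stronger, step-by-step matching implied by \cmc-bisimilarity is a special case of the zone-based matching used by \cop-bisimilarity with zones of length one.
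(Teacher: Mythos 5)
Your proposal is correct and follows essentially the same route as the paper: a lock-step construction of a matching forward path of equal length via condition~(\ref{CBCMC2}) of Definition~\ref{def:cmc-bisimulation-alt}, with the identity-shift matching functions $f(i)=g(i)=i{+}1$ witnessing compatibility, and the backward case handled symmetrically via condition~(\ref{CBCMC3}). The only cosmetic difference is that the paper verifies the conditions for an arbitrary \cmc-bisimulation~$B$ containing $(x_1,x_2)$, whereas you work directly with $\cmcbis^{\model}$ (legitimate, since it is itself a \cmc-bisimulation); both yield the inclusion.
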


\begin{proof}
  Suppose $\model = (X, \closureF, \peval)$ is a \qdcm.  Let $x_1$
  and~$x_2$ be two points of~$\model$ such that $(x_1,x_2)\in B$ for some
  \cmc-bisimulation $B \subseteq X \times X$. We check the conditions
  for the symmetric relation~$B$ to be a \cop-bisimulation.

  (i)~Clearly, since $B$ is a \cmc-bisimulation, $x_1 \in \peval(p)$
  iff $x_2 \in \peval(p)$ for all~$p \in \ap$.

  (ii)~Let $\pi_1 = ( x'_i )_{i{=}0}^\ell$ be a forward path 
  from~$x_1$. Define a path $\pi_2 = ( x''_j )_{j{=}0}^\ell$
  in~$\model$ from~$x_2$ as follows: $x''_0 = x_2$ and
  $x''_{j{+}1} \in X$ is such that $x''_{j{+}1} \in \closureF(x''_j)$
  and~$(x'_{j{+}1},x''_{j{+}1})\in B$ for $0 < j \leqslant \ell$. This is
  possible because $B$~is a \cmc-bisimulation. Then $\pi_2$~is a
  forward path from~$x_2$. Moreover, the pair of functions
  $f,g : [0;\ell \mkern1mu] \to [1;\ell {+} 1]$ with $f(i) = i{+}1$
  and $g(j) = j{+}1$ for $0 \leqslant i,j \leqslant \ell$, show that
  $\pi_1$ and~$\pi_2$ are compatible.

  (iii)~Similar to case (ii).
\end{proof}

\begin{figure}
  \centering

\begin{tikzpicture}[baseline=1,
  ->, >=stealth', semithick, shorten >=0.5pt,
  every state/.style={
    circle, minimum size=8pt, inner sep=2pt, draw},
  ]
  
  \node [state, fill=red!50, label={+270:{$x$}}] (x1) at (1,1) {} ;
  \node [state, fill=red!50, label={+270:{$y$}}] (y1) at (3,1) {} ;
  \node [state, fill=green!50, label={+270:{$z$}}] (z1) at (5,1) {} ;
  \draw (x1) edge (y1) ; \draw (y1) edge (z1) ;
\end{tikzpicture}
\qquad \qquad
\begin{tikzpicture}[baseline=1,
  ->, >=stealth', semithick, shorten >=0.5pt,
  every state/.style={
    circle, minimum size=8pt, inner sep=2pt, draw},
  ]
  
  \node [state, fill=red!50, label={+270:{$u$}}] (u1) at (1,1) {} ;
  \node [state, fill=green!50, label={+270:{$v$}}] (v1) at (3,1) {} ;
  \draw (u1) edge (v1) ;

\end{tikzpicture}
  \caption{$x \copbis u$ and $x \not\cmcbis u$.}
  \label{fig:example-copa-bisim}
\end{figure}

\noindent
The converse of Lemma~\ref{lem:CMCbisImplCoPa} does not hold. With
reference to the quasi-discrete closure model in
Figure~\ref{fig:example-copa-bisim} for which
$\peval(\mathit{red}) = \SET{x,y,u}$ and
$\peval(\mathit{green}) = \SET{z,v}$, it is easy to see that the
symmetric closure of the relation $B = \SET{(x,u), (y,u), (z,v)}$ is a
\cop-bisimulation, and so $x \copbis u$.  However, there is no
relation $B$ such that $x \cmcbis u$, because there is a green point
in $\closureF(u)$ whereas no green point belongs to the
$\closureF(x)$.

More formally, in order to check that  for every forward
path~$\pi_1 = ( x'_i )_{i{=}0}^\ell$ from~$u$ a compatible path
$\pi_2 = ( x''_j )_{j{=}0}^m$ from~$x$ exists, we reason as
follows, considering two cases:

Case~1: If $\range(\pi_1) = \lbrace u \rbrace$ we choose~$\pi_2$ such
that $\range(\pi_2) = \lbrace x \rbrace$. The constant mappings $f :
[0;\ell \mkern1mu] \to \SET{1}$ and $g : [0;m] \to \SET{1}$ are
matching functions for $\pi_1$ and~$\pi_2$. 

Case~2: Otherwise, for some $k > 0$, $x'_i = v$ for all~$i \geqslant k$. Put
$m = \ell{+}1$, $x''_0 = x$, $x''_1 = y$, and~$x''_j = z$
for~$j \geqslant 2$. Now, $f : [0;\ell \mkern1mu] \to [1;2]$ such that
$f(i) = 1$ for $0 \leqslant i < k$, $f(i) = 2$ for
$k \leqslant i \leqslant \ell$ and $g : [0;\ell{+}1] \to [1;2]$ such
that $g(j) = 1$ for $0 \leqslant j \leqslant k$, $g(j) = 2$ for
$k < j \leqslant \ell{+}1$ are matching functions.

Although $x \copbis u$, we have that $x \not\cmcbis u$.  In fact, we
have $u \models \lnearT \mathit{green}$ whereas
$x \not\models \lnearT \mathit{green}$ and therefore, by
Theorem~\ref{thm:CMCbisEqImleq}, the point~$u$ is not CMC-bisimilar to
the point~$x$.

One can argue that \cop-bisimilarity for \qdcs{s} is conceptually the same as
divergence-blind stuttering equivalence for Kripke structures, \dbs-equivalence for short. 
We recall the definition of \dbs-equivalence from~\cite{Gr+17} below:

\begin{defi}[Definition 2.2 of \cite{Gr+17}]
Let $K = (S, \ap, \rightarrow, L)$ be a Kripke model.
A symmetric relation $R \subseteq S \times S$ is a \dbs-equivalence if and only if, for all $s, t \in $ such that $R(s,t)$:
\begin{enumerate}
\item $L(s)= L(t)$, and
\item for all $s' \in S$, if $s \rightarrow s'$, then there are 
$t_0, \ldots,t_k \in S$ for some $k \in \nats$ such that
$t_0 = t$, $R(s,t_i)$, $t_i \rightarrow t_{i+1}$ for all $i<k$, and $R(s',t_k)$.
\end{enumerate}
We say that two states $s,t \in S$ are \dbs-equivalent if and only if 
there is a \dbs-equivalence relation $R$ such that $R(s,t)$.\closedefi
\end{defi}

Taking the specific structure of closure spaces into account, as well as the back-and-forth aspect of such spaces, the standard
definition of \dbs-equivalence 
can be rearranged for \qdcm{s} as follows.

\begin{defi}[\dbs-bisimilarity for \qdcm{s}]
  \label{def:dbs-bisimilarity}
  Let $\model = (X, \closureF, \peval)$ be a \qdcm. A symmetric
  relation $B \subseteq {X \times X}$ is called a \dbs-bisimulation
  for \qdcm~$\model$ if whenever $(x_1,x_2)\in B$ for points
  $x_1, x_2 \in X$, the following holds:
  \begin{enumerate} [(1)]
  \item $x_1 \in \peval(p)$ if and only if $x_2 \in \peval(p)$ for all
    $p \in \ap$.
  \item For all $x'_1 \in \closureF(x_1)$ exist $n \geqslant 0$ and
    $y_0, \ldots, y_n \in X$ such that $y_0 = x_2$,
    $y_{i{+}1} \in \closureF(y_i)$ and~$(x_1,y_i)\in B$ for $0 \leqslant i
    < n$, and~$(x'_1,y_n)\in B$.
   \item 
   For all $x'_1 \in \closureT(x_1)$ exist $n \geqslant 0$ and
    $y_0, \ldots, y_n \in X$ such that $y_0 = x_2$,
    $y_{i{+}1} \in \closureT(y_i)$ and~$(x_1,y_i)\in B$ for $0 \leqslant i
    < n$, and~$(x'_1,y_n)\in B$.
  \end{enumerate}
  Two points $x_1, x_2 \in X$ are called \dbs-bisimilar in~$\model$,
  if $(x_1,x_2)\in B$ for some \dbs-bisimulation~$B$ for~$M$. Notation,
  $x_1 \dbseq^\model x_2$.\closedefi
\end{defi}

\noindent
It holds that \cop-bisimilarity and \dbs-bisimilarity coincide. We
split the proof of this into two lemmas.

\begin{lem}
  \label{lem:copa-implies-dbs}
  For each \qdcm~$\model$ it holds that ${\copbis^\model} \subseteq
  {\dbsbis^\model}$. 
\end{lem}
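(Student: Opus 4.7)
The plan is to show that any \cop-bisimulation $B$ for $\model = (X, \closureF, \peval)$ is itself a \dbs-bisimulation, from which the inclusion follows immediately. Condition~(1) of Definition~\ref{def:dbs-bisimilarity} (preservation of atomic propositions) is identical to condition~(1) of Definition~\ref{def:CoPabisimilarity}, so there is nothing to do there. The symmetry of~$B$ is also given. Hence the work lies in verifying the two back-and-forth stepwise conditions from a \emph{path}-based hypothesis.

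Suppose $(x_1,x_2) \in B$ and let $x'_1 \in \closureF(x_1)$. The key observation is that $\pi_1 = (x_1, x'_1)$ is then a forward path of length~$1$ from~$x_1$. Applying condition~(\ref{zones}) of Definition~\ref{def:CoPabisimilarity}, there is a compatible forward path $\pi_2 = (x''_j)_{j=0}^m$ from~$x_2$, witnessed by matching functions $f : [0;1] \to [1;N]$ and $g : [0;m] \to [1;N]$ for some $N > 0$. Since $f$ is a monotone surjection from a two-element set, we have $N \in \SET{1,2}$, giving two cases to analyse.

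If $N = 1$, then $f$ and $g$ are constantly~$1$, so $(x_1, x''_j)\in B$ and $(x'_1, x''_j)\in B$ for every $j \in [0;m]$. Setting $n = m$ and $y_j = x''_j$ produces the required witness sequence: $y_0 = x_2$, consecutive~$y_j$'s are related by $\closureF$ because $\pi_2$ is a forward path, $(x_1,y_i)\in B$ for all $i < n$, and $(x'_1,y_n)\in B$. If $N = 2$, then monotonicity and surjectivity of $g$ force the existence of a smallest index $k \in [1;m]$ where $g$ jumps from $1$ to~$2$; hence $g(j)=1$ for $j < k$ and $g(j)=2$ for $j \geqslant k$. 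Since $f(0)=1$ and $f(1)=2$, we obtain $(x_1, x''_j)\in B$ for $j \in [0;k)$ and $(x'_1, x''_k)\in B$. Choosing $n = k$ and $y_j = x''_j$ for $j \in [0;k]$ again yields the sequence demanded by condition~(2) of Definition~\ref{def:dbs-bisimilarity}.

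Condition~(3) of Definition~\ref{def:dbs-bisimilarity} is proved by exactly the same argument, this time using the backward path $(x_1, x'_1)$ for $x'_1 \in \closureT(x_1)$ and invoking condition~(3) of Definition~\ref{def:CoPabisimilarity}. No obstacle of substance is expected: the whole argument is a finite case split on the possible shapes a compatibility witness can take when one of the two paths has length one, and the main subtlety is simply recognising that the matching functions constrain $N$ to be at most~$2$ so that the zones of $\pi_2$ split cleanly into a $B$-prefix of points related to~$x_1$ followed by a remainder whose first point is related to~$x'_1$.
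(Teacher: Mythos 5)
Your proof is correct and follows essentially the same route as the paper's: reduce the stepwise condition to the two-point forward (resp.\ backward) path $(x_1,x'_1)$, invoke path compatibility, and split on $N\in\SET{1,2}$ to extract the witness sequence $y_0,\ldots,y_n$ as a prefix (or the whole) of the compatible path. The only cosmetic difference is that in the $N=1$ case the paper leaves the choice of witness implicit where you explicitly take $n=m$; both work.
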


\begin{proof}
  A \cop-bisimulation $B \subseteq {X \times X}$ is a
  \dbs-bisimulation as well: 
  
  (i)~The first requirement of
  Definition~\ref{def:dbs-bisimilarity} is immediate. 
  
  (ii)~In order to verify the second requirement of
  Definition~\ref{def:dbs-bisimilarity}, suppose $(x_1,x_2)\in B$ and
  $x'_1 \in \closureF(x_1)$. We note that $\pi_1 = (x_1,x'_1)$ is a
  forward path from~$x_1$ in~$\model$. Since~$(x_1,x_2)\in B$ and $B$~is a
  \cop-bisimulation, a forward path $\pi_2 = (x''_j)_{j{=}0}^m$
  from~$x_2$ exists  that is compatible to~$\pi_1$ with respect to~$B$. So,
  $x''_{j{+}1} \in \closureF(x''_j)$ for $0 \leqslant j < m$. Let
  $N > 0$ and $f : [0;1] \to [1;N]$, $g : [0;m] \to [1;N]$ be matching
  functions for $\pi_1$ and~$\pi_2$. 

Since $f$~is surjective, then either $N=1$ or $N=2$.
In the first case,  $(x_1,x''_j)\in B$ and $(x'_1,x''_j)\in B$ for
$0 \leqslant j \leqslant m$, which suffices.
If, instead, $N=2$, then for a suitable index~$k$, $0 < k \leqslant m$, $g(j) = 1$ for
$0 \leqslant j < k$ and $g(j) = 2$ for $k \leqslant j \leqslant m$, as shown in Figure~\ref{fig:lem:copa-implies-dbs-Fig1}.

\begin{figure}
     \begin{tikzpicture}
       \tikzset{->}
       \tikzstyle{point}=[circle,draw=black,fill=white,inner sep=0pt,minimum width=4pt,minimum height=4pt]
          \node (p0)[label=left:{$x_2$}] at (3,0){}; 
          \node (p1)[point,label=below:{$x_0''$},fill=black] at (3.05,0){}; 
          \node (p2)[point,label=below:{$x_1''$},fill=black, right of=p1] {}; 
          \node (p3)[point,fill=black, right of=p2] {}; 
          \node (p4)[point,label=below:{$x_{k-1}''$},fill=black, right of=p3] {}; 
          \node (p5)[point,label=below:{$x_k''$},fill=black, right of=p4] {}; 
          \node (p6)[point,fill=black, right of=p5] {}; 
          \node (p7)[point,fill=black, right of=p6] {}; 
          \node (p8)[point,label=below:{$x_m''$},fill=black, right of=p7] {}; 
          \draw [thick] (p1) edge (p2);
          \draw [thick] (p2) edge (p3);
          \draw [dotted] (p3) edge (p4);
          \draw [thick] (p4) edge (p5);
          \draw [thick] (p5) edge (p6);
          \draw [dotted] (p6) edge (p7);
          \draw [thick] (p7) edge (p8);
          \node (p9)[above of=p1]{$0$};
          \node (p10)[right of=p9]{$1$};
          \node (p11)[right of=p10]{};
          \node (p12)[right of=p11]{$k-1$};
          \node (p13)[right of=p12]{$k$};
          \node (p14)[right of=p13]{};
          \node (p15)[right of=p14]{};
          \node (p16)[right of=p15]{$m$};
          \draw (p9) edge[|->,thin,right] node{$\pi_2$} (p1);
          \draw (p10) edge[|->,thin,right] node{$\pi_2$} (p2);
          \draw (p12) edge[|->,thin,right] node{$\pi_2$} (p4);  
          \draw (p13) edge[|->,thin,right] node{$\pi_2$} (p5); 
          \draw (p16) edge[|->,thin,right] node{$\pi_2$} (p8);   
          \node (p17)[] at (2,2.5){$1$};
          \node (p18)[right of=p17]{};
          \node (p19)[right of=p18]{$2$};   
          \draw (p9) edge[|->,thin,right] node{$g$} (p17);  
          \draw (p10) edge[|->,thin,right] node{$g$} (p17);
          \draw (p12) edge[|->,thin,right] node{$g$} (p17);
          \draw (p13) edge[|->,thin,right] node{$g$} (p19);
          \draw (p16) edge[|->,thin,right] node{$g$} (p19);  
          \node (p20)[label=left:{$x_1$}] at (0,5){}; 
          \node (p21)[point,label=above:{$x_0'$},fill=black] at (0.05,5){}; 
          \node (p22)[point,label=above:{$x_1'$},fill=black, right of=p21] {};     
          \draw [thick] (p21) edge (p22); 
          \node (p23)[below of=p21]{$0$};
          \node (p24)[right of=p23]{$1$};
          \draw (p23) edge[|->,thin,right] node{$\pi_1$} (p21);
          \draw (p24) edge[|->,thin,right] node{$\pi_1$} (p22);         
          \draw (p23) edge[|->,thin,right] node{$f$} (p17);  
          \draw (p24) edge[|->,thin,right] node{$f$} (p19);           
     \end{tikzpicture}
\caption{\label{fig:lem:copa-implies-dbs-Fig1} Example illustrating the proof of Lemma~\ref{lem:copa-implies-dbs} for
$N=2$; $f,g$ are the matching functions and $\pi_2$ an F-compatible path for $\pi_1$}
\end{figure}
 
  Thus, writing $x'_0$ for~$x_1$, it holds that $(x'_0,x''_j)\in B$
  for $0 \leqslant j < k$ and $(x'_1,x''_j)\in B$ for
  $k \leqslant j \leqslant m$. For $x''_0, \ldots , x''_k \in X$ we
  thus have $x''_0 = x_2$, $x''_{j{+}1} \in \closureF(x''_j)$ and
  $(x_1,x''_j)\in B$ for $0 \leqslant j < k$ and $(x'_1,x''_k)\in B$ as
  required (see Figure~\ref{fig:lem:copa-implies-dbs-Fig2}).

\begin{figure} 
     \begin{tikzpicture}
       \tikzstyle{point}=[circle,draw=black,fill=white,inner sep=0pt,minimum width=4pt,minimum height=4pt]
          \node (p0)[label=left:{$x_2$}] at (3,0){}; 
          \node (p1)[point,label=below:{$x_0''$},fill=black] at (3.05,0){}; 
          \node (p2)[point,label=below:{$x_1''$},fill=black, right of=p1] {}; 
          \node (p3)[point,fill=black, right of=p2] {}; 
          \node (p4)[point,label=below:{$x_{k-1}''$},fill=black, right of=p3] {}; 
          \node (p5)[point,label=below:{$x_k''$},fill=black, right of=p4] {}; 
          \node (p6)[point,fill=black, right of=p5] {}; 
          \node (p7)[point,fill=black, right of=p6] {}; 
          \node (p8)[point,label=below:{$x_m''$},fill=black, right of=p7] {}; 
          \draw [->,thick] (p1) edge (p2);
          \draw [->,thick] (p2) edge (p3);
          \draw [->,dotted] (p3) edge (p4);
          \draw [->,thick] (p4) edge (p5);
          \draw [->,thick] (p5) edge (p6);
          \draw [->,dotted] (p6) edge (p7);
          \draw [->,thick] (p7) edge (p8);
          \node (p9)[above of=p1]{$0$};
          \node (p10)[right of=p9]{$1$};
          \node (p11)[right of=p10]{};
          \node (p12)[right of=p11]{$k-1$};
          \node (p13)[right of=p12]{$k$};
          \node (p14)[right of=p13]{};
          \node (p15)[right of=p14]{};
          \node (p16)[right of=p15]{$m$};
          \draw (p9) edge[|->,thin,right] node{$\pi_2$} (p1);
          \draw (p10) edge[|->,thin,right] node{$\pi_2$} (p2);
          \draw (p12) edge[|->,thin,right] node{$\pi_2$} (p4);  
          \draw (p13) edge[|->,thin,right] node{$\pi_2$} (p5); 
          \draw (p16) edge[|->,thin,right] node{$\pi_2$} (p8);   
          \node (p17)[] at (2,2.5){$1$};
          \node (p18)[right of=p17]{};
          \node (p19)[right of=p18]{$2$};   
          \draw (p9) edge[|->,thin,right] node{$g$} (p17);  
          \draw (p10) edge[|->,thin,right] node{$g$} (p17);
          \draw (p12) edge[|->,thin,right] node{$g$} (p17);
          \draw (p13) edge[|->,thin,right] node{$g$} (p19);
          \draw (p16) edge[|->,thin,right] node{$g$} (p19);  
          \node (p20)[label=left:{$x_1$}] at (0,5){}; 
          \node (p21)[point,label=above:{$x_0'$},fill=black] at (0.05,5){}; 
          \node (p22)[point,label=above:{$x_1'$},fill=black, right of=p21] {};     
          \draw [->,thick] (p21) edge (p22); 
          \node (p23)[below of=p21]{$0$};
          \node (p24)[right of=p23]{$1$};
          \draw (p23) edge[|->,thin,right] node{$\pi_1$} (p21);
          \draw (p24) edge[|->,thin,right] node{$\pi_1$} (p22);         
          \draw (p23) edge[|->,thin,right] node{$f$} (p17);  
          \draw (p24) edge[|->,thin,right] node{$f$} (p19);           
          \draw[line width=0.1mm,red] (p1) -- (p21);
          \draw[line width=0.1mm,red] (p2) -- (p21);
          \draw[line width=0.1mm,red] (p3) -- (p21);
          \draw[line width=0.1mm,red] (p4) -- (p21);
          \draw[line width=0.1mm,red] (p5) -- (p22);
          \draw[line width=0.1mm,red] (p6) -- (p22);
          \draw[line width=0.1mm,red] (p7) -- (p22);
          \draw[line width=0.1mm,red] (p8) -- (p22);
     \end{tikzpicture}
\caption{\label{fig:lem:copa-implies-dbs-Fig2} The same as Figure~\ref{fig:lem:copa-implies-dbs-Fig1} but showing also relation $B$, in red.}
\end{figure}
  (iii)~Similar to case (ii).
\end{proof}

\noindent
The converse of Lemma~\ref{lem:copa-implies-dbs} is captured by the
next result.

\begin{lem}
  \label{lem:dbs-implies-copa}
  For each \qdcm~$\model$ it holds that
  ${\dbsbis^\model} \subseteq {\copbis^\model}$.
\end{lem}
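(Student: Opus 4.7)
The goal is to show that every \dbs-bisimulation $B$ is itself a \cop-bisimulation. The atomic proposition clause is identical in the two definitions, and the forward- and backward-path clauses are symmetric, so it will suffice to construct, for each $(x_1,x_2)\in B$ and each forward path $\pi_1 = (x'_i)_{i=0}^{\ell}$ from $x_1$, a forward path $\pi_2$ from $x_2$ that is compatible with $\pi_1$ with respect to~$B$.

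My plan is to build $\pi_2$ incrementally, maintaining the invariant that the current endpoint $z_i$ of $\pi_2$ after processing $x'_i$ satisfies $(x'_i, z_i) \in B$. Setting $z_0 = x_2$, at each step I will invoke clause~(2) of Definition~\ref{def:dbs-bisimilarity} on the pair $(x'_i, z_i)$ and the target $x'_{i+1} \in \closureF(x'_i)$; this yields a $\closureF$-chain $y^{(i)}_0 = z_i, y^{(i)}_1, \ldots, y^{(i)}_{n_i}$ whose intermediate points are $B$-related to $x'_i$ and whose final point $z_{i+1} = y^{(i)}_{n_i}$ is $B$-related to $x'_{i+1}$. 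The path $\pi_2$ will then be obtained by concatenating these chains, with $m_i = \sum_{j < i} n_j$ so that $\pi_2(m_i + q) = y^{(i)}_q$.

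The main technical obstacle will be designing matching functions whose zones are all non-empty. The natural choice places $x'_i$ in its own zone together with the $\pi_2$-positions $[m_i, m_{i+1})$, and this interval is empty precisely when $n_i = 0$. Collapsing such an empty zone into a neighbour fails, since the intermediate points of the adjacent chain are guaranteed $B$-related only to the starting endpoint of that chain, not to $x'_i$. I will circumvent this by a padding step: whenever clause~(2) delivers $n = 0$ (equivalently, $(x'_{i+1}, z_i) \in B$), I will replace the trivial chain by the stuttering chain $y^{(i)}_0 = y^{(i)}_1 = z_i$, which is a valid $\closureF$-path since $z_i \in \closureF(z_i)$ by the closure axioms, and set $z_{i+1} = z_i$ to preserve the invariant. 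This enforces $n_i \geq 1$ for every $i < \ell$.

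Taking $N = \ell + 1$, $f(i) = i + 1$, and $g(j) = i + 1$ whenever $m_i \leq j < m_{i+1}$, with $g(m) = \ell + 1$ where $m = m_\ell$, both maps are monotone and surjective thanks to the bound $n_i \geq 1$. Compatibility then reduces to a short case analysis on $f(i) = g(j) = k$: for $k \leq \ell$, writing $j = m_{k-1} + q$ with $0 \leq q < n_{k-1}$, the position $\pi_2(j) = y^{(k-1)}_q$ equals either $z_{k-1}$ (when $q = 0$, covered by the invariant) or an intermediate point $B$-related to $x'_{k-1}$ by clause~(2); for $k = \ell + 1$, the unique position $\pi_2(m) = z_\ell$ is $B$-related to $x'_\ell$ by the invariant. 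The backward-path clause is handled identically via clause~(3) of Definition~\ref{def:dbs-bisimilarity}.
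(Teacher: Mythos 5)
Your proof is correct, and its engine is the same as the paper's: clause~(2) of Definition~\ref{def:dbs-bisimilarity} is invoked once per edge of~$\pi_1$, producing a $\closureF$-chain whose intermediate points are $B$-related to the source of that edge and whose endpoint is $B$-related to its target, and $\pi_2$ is the concatenation of these chains. Where you differ is in how compatibility is certified. The paper argues by induction on the length of~$\pi_1$: it peels off the first edge, notes that the chain delivered by clause~(2) is compatible with the two-point prefix $(x_1,x'_1)$, applies the induction hypothesis to the $1$-shifted tail starting from the chain's endpoint, and concatenates --- implicitly relying on the fact that concatenating compatible paths glued at $B$-related endpoints yields compatible paths. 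You instead carry out the whole construction in one pass and exhibit the matching functions explicitly, with $N = \ell+1$ zones, one per point of~$\pi_1$. Your padding step --- replacing a chain with $n_i = 0$ by the stutter $z_i \in \closureF(z_i)$ --- is the substantive added value: it handles exactly the degenerate case in which the zone of~$x'_i$ would otherwise be empty on the $\pi_2$ side (so that $g$ could not be surjective without merging zones, which $B$ being non-transitive does not license), a point the paper's concatenation step passes over in silence. The cost is a non-minimal zone count, which is harmless since Definition~\ref{def:path-compatibility} does not require minimality; the $\ell = 0$ case and the backward clause degenerate correctly.
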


\begin{proof}
  Let $\model = (X, \closureF, \peval)$ be a \qdcm. We show that a
  \dbs-bisimulation $B \subseteq {X \times X}$ for~$\model$ is a
  \cop-bisimulation for~$\model$ as well. We verify the three
  requirements of Definition~\ref{def:CoPabisimilarity}.

  (i)~Clearly, if $(x_1,x_2)\in B$ for two points~$x_1, x_2 \in X$ we have
  $x_1 \in \peval(p)$ iff $x_2 \in \peval(p)$ for all $p \in \ap$ by
  definition of a \dbs-bisimulation.

  (ii)~To see that the second requirement for a \cop-bisimulation is
  met, we prove the following claim by induction on~$\ell_1$:
  (\textit{Claim}) For all~$\ell_1 \geqslant 0$, if~$(x_1,x_2)\in B$ for
  two points~$x_1, x_2 \in X$ and $\pi_1$~is a forward path of
  length~$\ell_1$ from~$x_1$ in~$\model$, then  a forward
  path~$\pi_2$ from~$x_2$ in~$\model$ exists that is compatible
  with~$\pi_1$ with respect to~$B$.

  Base, $\ell_1 = 0$: The path~$\pi_1$ consists of $x_1$ only. The
  path~$\pi_2$ consisting of $x_2$ only, is a forward path from~$x_2$
  that is compatible to~$\pi_1$ with respect to~$B$. 
  
  Induction step,~$\ell_1 > 0$: Put $x'_1 = \pi_1(1)$. We have
  $x'_1 \in \closureF(x_1)$ since $\pi_1$~is a forward path. Because
  $B$ is a \dbs-bisimulation and $(x_1,x_2)\in B$, exist $n \geqslant 0$
  and $y_0, \ldots, y_n \in X$ such that $y_0 = x_2$,
  $y_{i{+}1} \in \closureF(y_i)$ and~$(x_1,y_i)\in B$ for
  $0 \leqslant i < n$, and~$(x'_1,y_n)\in B$.
  We split the path~$\pi_1$ into the subpaths $\pi'_{1}$ and
  $\pi''_{1}$ where $\pi'_{1}$ is the restriction of~$\pi_1$
  to~$[0;1]$, i.e.,\ $\pi'_{1}=\pi_{1}|[0;1]$, and $\pi''_{1}$ is the
  $1$-shift of $\pi_1$, i.e.,\ $\pi_{1b}=\pi_{1} {\uparrow} 1$. Then
  $\pi'_2 = ( y_i )_{i{=}0}^n$ is a forward path in~$\model$ from~$x_2$
  that is compatible with respect to~$B$ with the path
  $\pi'_{1} = (x_1,x'_1)$ from~$x_1$. The subpath~$\pi''_{1}$
  in~$\model$ is of length~$\ell_1{-}1$ and is a forward path from $x'_1$. 
  Since~$(x'_1,y_n)\in B$, by induction hypothesis a
  path~$\pi''_2$ from $y_n$ exists that is compatible with
  $\pi''_{1}$ with respect to~$B$. Let the path
  $\pi_2 = \pi'_2 \cdot \pi''_2$ be the concatenation of the forward
  paths $\pi'_2$ and~$\pi''_2$. Then $\pi_2$ is a forward path
  in~$\model$ from~$x_2$ that is compatible with
  $\pi'_{1} \cdot \pi''_{1} = \pi_1$ with respect to~$B$.
  
  (iii)~Similar to case (ii).
\end{proof}

\blankline

\noindent
In order to provide a logical characterisation of \cop-bisimilarity,
we replace the proximity modalities $\lnearF$ and~$\lnearT$ 
in~\imlc{} by the (forward and backward) {conditional reachability} modalities
$\lstothru$ and~$\lsfromthru$ to obtain the Infinitary Compatible
Reachability Logic, or \icrl{} for short.

\begin{defi}[\icrl{}]
  \label{def:Icrl}  
  The abstract language of \icrl{} is defined as follows:
    \begin{displaymath}
      \form ::= p \mid \lneg \, \form \mid \liand_{i \in I} \form_i \mid
      \lstothru \, \form_1[\form_2] \mid \lsfromthru \,\form_1[\form_2]
    \end{displaymath}
    where $p$~ranges over~$\ap$ and $I$~ranges over a collection of index sets.
  The satisfaction relation with respect to a \qdcm~$\model$,
    point~$x \in \model$, and \icrl{} formula~$\form$ is defined
    recursively on the structure of $\form$ as follows:  
    \begin{displaymath}
      \def\arraystretch{1.2}
      \begin{array}{r c l c l c l l}
        \model,x & \models_{\icrl} & p & \Leftrightarrow & x  \in \peval(p) \\
        \model,x & \models_{\icrl} & \lneg \,\form & \Leftrightarrow & \model,x  \models_{\icrl} \form \mbox{ does not hold} \\
        \model,x & \models_{\icrl} & \liand_{i\in I} \form_i & \Leftrightarrow & \model,x  \models_{\icrl} \form_i \mbox{ for all $i \in I$} \\
\model,x  & \models_{\icrl} & \lstothru \, \form_1 [\form_2] & \Leftrightarrow &
\mbox{a forward path } (x_i)_{i=0}^{\ell} \mbox{ from } x \mbox{ exists }\mbox{ such that }\\
&&&&\model, x_{\ell} \models_{\icrl} \form_1, \mbox{ and } 
\model, x_j \models_{\icrl} \form_2 \mbox{ for } j\in [0,\ell)\\
\model,x  & \models_{\icrl} & \lsfromthru \, \form_1 [\form_2] & \Leftrightarrow &
\mbox{a backward path } (x_i)_{i=0}^{\ell} \mbox{ from } x \mbox{ exists }\mbox{ such that }\\
&&&&\model, x_{\ell} \models_{\icrl} \form_1, \mbox{ and } 
\model, x_j \models_{\icrl} \form_2 \mbox{ for } j\in [0,\ell)\\
      \end{array}\vspace{-0.2in}
      \def\arraystretch{1.2}
    \end{displaymath}
    \closedefi
\end{defi}

\noindent
Informally, $x$~satisfies $\lstothru \form_1 [\form_2]$ if it
satisfies~$\form_1$, or it satisfies~$\form_2$ and there is a point
satisfying $\form_1$ \emph{that can be reached from~$x$} via a path
whose intermediate points, if any, satisfy~$\form_2$.  Similarly,
$x$~satisfies $\lsfromthru \form_1 [\form_2]$ if it
satisfies~$\form_1$, or it satisfies~$\form_2$ and there is a point
satisfying~$\form_1$ \emph{from which $x$ can be reached} via a path
whose intermediate points, if any, satisfy~$\form_2$.

\begin{figure}
  \centering

\begin{tikzpicture}[baseline=1,
  ->, >=stealth', semithick, shorten >=0.5pt,
  every state/.style={
    circle, minimum size=8pt, inner sep=2pt, draw},
  ]
  
  \node [state, fill=red!50, label={+270:{$x_1$}}] (x1) at (1,1) {} ;
  \node [state, fill=red!50, label={+270:{$x_2$}}] (x2) at (2.5,1) {} ;
  \node [state, fill=green!50, label={+90:{$x_3$}}] (x3) at (4,2) {} ;
  \node [state, fill=blue!50, label={+90:{$x_4$}}] (x4) at (5.5,2) {} ;
  \node [state, fill=blue!50, label={+90:{$x_5$}}] (x5) at (7,2) {} ;
  \node [state, fill=red!50, label={+90:{$x_6$}}] (x6) at (8.5,2) {} ;
  \node [state, fill=red!50, label={+270:{$x_7$}}] (x7) at (4,0) {} ;
  \node [state, fill=blue!50, label={+270:{$x_8$}}] (x8) at (5.5,0) {} ;
  \node [state, fill=blue!50, label={+270:{$x_9$}}] (x9) at (7,0) {} ;
  \draw (x1) edge (x2) ;
  \draw (x2) edge (x3) (x3) edge (x4) (x4) edge (x5) (x5) edge (x6) ;
  \draw (x2) edge (x7) (x7) edge (x8) (x8) edge (x9) ;
\end{tikzpicture}
  \caption{}
  \label{fig:example-icrl-satisfaction}
\end{figure}

\begin{exa}
  Consider the \qdcm~$\model$ depicted in
  Figure~\ref{fig:example-icrl-satisfaction}. With respect to~$\model$
  we have, for example, that 
$\lstothru \mathit{red} \, [\mathit{red}]$ is satisfied by  $x_1, x_2, x_6$, and $x_7$, 
$\lstothru \mathit{red} \, [\mathit{blue}]$   is satisfied by $x_1, x_2, x_4, x_5, x_6$, and $x_7$,
$\lstothru ( \lstothru \mathit{red} \, [\mathit{blue}] )\, [\neg \,\mathit{blue}]$ is satisfied by
$x_1, x_2, x_3, x_4, x_5, x_6$, and $ x_7$,
$\lneg \bigl( \lstothru ( \lstothru \mathit{red} \, [\mathit{blue}] ) [\neg \mathit{blue}] \bigr)$ 
is satisfied by $x_8$, and $x_9$, and
$\lsfromthru \mathit{red} \, [\mathit{blue}] $ is satisfied by
 $x_1, x_2, x_6, x_7, x_8$, and $x_9$. 
\end{exa}

\noindent
Also for \icrl{} we introduce a notion of equivalence.

\begin{defi}[\icrl-equivalence]
  \label{def:IcrlEq}
  Given a \qdcm{} $\model = (X,\closureF,\peval)$, the 
  relation $\icrleq^{\model} \, \subseteq \, X \times X$ is defined by
  \begin{displaymath}
    x_1 \icrleq^{\model} x_2
    \ \text{iff} \ 
    \model, x_1 \models_{\icrl} \form
    \Leftrightarrow
    \model, x_2 \models_{\icrl} \form,
    \text{for all $\form \in \icrl$}. \tag*{$\bullet$}
  \end{displaymath}
\end{defi}

\noindent
We first check that \cop-bisimulation respects \icrl-equivalence.

\begin{lem}\label{lem:CoPabisIsIcrleq}
For $x_1,x_2$ in  \qdcm{} $\model$,if
$x_1 \copbis^{\model} x_2$ then $x_1 \icrleq^{\model} x_2$.\qed
\end{lem}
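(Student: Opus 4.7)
The plan is a standard structural induction on~$\form$, establishing the forward implication $\model, x_1 \models_\icrl \form \Rightarrow \model, x_2 \models_\icrl \form$ whenever $(x_1,x_2) \in B$ for some \cop-bisimulation~$B$; the converse follows from symmetry of~$B$. The base case $\form = p$ is immediate from clause~(1) of Definition~\ref{def:CoPabisimilarity}, and the Boolean cases $\lneg \psi$ and $\liand_{i \in I} \psi_i$ reduce directly to the induction hypothesis. The substance of the argument lies in the two reachability modalities, which are entirely symmetric; I describe only the case $\form = \lstothru \form_1[\form_2]$, the case $\lsfromthru \form_1[\form_2]$ being handled analogously via the backward-path clause of Definition~\ref{def:CoPabisimilarity}.

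Given a witnessing forward path $\pi_1 = (y_i)_{i=0}^\ell$ from~$x_1$ with $y_\ell \models_\icrl \form_1$ and $y_j \models_\icrl \form_2$ for $j \in [0;\ell)$, clause~(\ref{zones}) of Definition~\ref{def:CoPabisimilarity} supplies a compatible forward path $\pi_2 = (z_k)_{k=0}^m$ from~$x_2$ together with a zone count $N > 0$ and matching functions $f \colon [0;\ell] \to [1;N]$ and $g \colon [0;m] \to [1;N]$. The key obstacle — and the one point that requires care — is that $\pi_2$ itself need not witness $x_2 \models_\icrl \form$: an intermediate $z_k$ (with $k < m$) lying in the last zone, i.e.,\ with $g(k) = N$, is $B$-related only to points $y_i$ satisfying $f(i) = N$, and among these is $y_\ell$, which is guaranteed to satisfy $\form_1$ but in general not $\form_2$. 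So $\pi_2$ may contain intermediate steps that fail the $\form_2$ condition required by the semantics of~$\lstothru$.

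The remedy is to truncate~$\pi_2$ at the first entry into the last zone: set $k_0 = \min \ZET{k \in [0;m]}{g(k) = N}$ and take the prefix $\pi'_2 = \pi_2|[0;k_0]$, which is itself a forward path from~$x_2$. At the endpoint, $f(\ell) = N = g(k_0)$ forces $(y_\ell, z_{k_0}) \in B$, whence the induction hypothesis on $\form_1$ yields $z_{k_0} \models_\icrl \form_1$. At any intermediate index $k \in [0;k_0)$, minimality of~$k_0$ gives $g(k) < N$, so by surjectivity of~$f$ some~$i$ satisfies $f(i) = g(k)$; then $f(i) < N = f(\ell)$ combined with monotonicity of~$f$ forces $i < \ell$, hence $y_i \models_\icrl \form_2$, and the induction hypothesis on $\form_2$ applied to $(y_i, z_k) \in B$ yields $z_k \models_\icrl \form_2$. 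Thus $\pi'_2$ witnesses $x_2 \models_\icrl \form$, completing the induction step.
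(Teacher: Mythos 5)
Your proof is correct and follows essentially the same route as the paper's: structural induction, with the $\lstothru$ case handled by taking a compatible forward path and exploiting monotonicity and surjectivity of the matching functions. The only difference is cosmetic — where the paper assumes \emph{without loss of generality} that $g^{-1}(\SET{N}) = \SET{k_2}$, you make that reduction explicit by truncating $\pi_2$ at the first index entering the last zone, which is exactly the justification the paper's WLOG tacitly relies on.
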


\begin{proof}
  Let $\model = (X,\closureF,\peval)$ be a \qdcm. We verify that for
  all $x_1, x_2 \in X$ such that $x_1 \copbis x_2$ and
  $x_1 \models \form$ implies $x_2 \models \form$, by induction on the
  structure of~$\form$ in~\icrl. We only cover the case
  $\lstothru \form_1 [\form_2]$. The proof for the case $\lsfromthru \form_1 [\form_2]$ is similar, while that for the other cases is straightforward.

  Let $x_1$ and~$x_2$ be two points of~$\model$ such that
  $x_1 \copbis x_2$. Suppose
  $x_1 \models \lstothru \form_1 [\form_2]$. Let
  $\pi_1=(x'_i)_{i=0}^{k_1}$ be a forward path from~$x_1$ satisfying
  $\pi_1(k_1) \models \form_1$ and $\pi_1(i) \models \form_2$ for
  $0 \leqslant i < k_1$. Since $x_1 \copbis x_2$, a forward path
  $\pi_2=(x''_i)_{i=0}^{k_2}$ from~$x_2$ exists that is compatible
  with~$\pi_1$ with respect to~$\copbis$. Let, for
  appropriate~$N > 0$, $f : [0;k_1] \to [1;N]$ and
  $g : [0;k_2] \to [1;N]$ be matching functions for $\pi_1$
  and~$\pi_2$. Without loss of generality,
  $g^{-1}(\SET{N}) = \SET{k_2}$. Since $f(k_1) = g(k_2) = N$, we have
  $\pi_1(k_1) \copbis \pi_2(k_2)$. Thus $\pi_2(k_2) \models \form_1$
  by induction hypothesis. Moreover, if $0 \leqslant j < k_2$, then
  $g(j) < N$ by assumption and $f(i) = g(j)$ and
  $\pi_1(i) \copbis \pi_2(j)$ for some~$i$, $0 \leqslant i <
  k_1$. Since $\pi_1(i) \models \form_2$ it follows that
  $\pi_2(j) \models \form_2$ by induction hypothesis.  Therefore
  path~$\pi_2$ witnesses $x_2 \models \lstothru \form_1 [\form_2]$.
\end{proof}

\noindent
The converse of Lemma~\ref{lem:CoPabisIsIcrleq} stating that
\icrl-equivalent points are \cop-bisimilar, is given below.

\begin{lem}
  \label{lem:IcrleqIsCoPabis}
  For a \qdcm{} $\model$, $\icrleq$ is a \cop-bisimulation for
  $\model$. \qed
\end{lem}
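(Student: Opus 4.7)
My plan is to follow the template of Lemmas~\ref{lem:ImleqIsCMbis} and~\ref{lem:ImlceqIsCMCbis}, but tailored to the reachability modalities of \icrl. First I would introduce characteristic formulas: for each pair of points $x,y \in X$, set $\delta_{x,y} = \ltrue$ if $x \icrleq y$, and otherwise pick some \icrl-formula $\psi$ with $\model, x \models_{\icrl} \psi$ and $\model, y \not\models_{\icrl} \psi$ and set $\delta_{x,y} = \psi$; then let $\chi(x) = \liand_{y \in X} \delta_{x,y}$. For every $z \in X$ it then holds that $\model, z \models_{\icrl} \chi(x)$ iff $x \icrleq z$, so $\chi(x)$ characterises the $\icrleq$-class of $x$.

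Now assume $x_1 \icrleq x_2$. Clause~(1) of Definition~\ref{def:CoPabisimilarity} is immediate since every $p \in \ap$ is in~\icrl. For the forward-path clause, I would take a forward path $\pi_1 = (x'_i)_{i=0}^\ell$ from $x_1$ and slice $[0;\ell]$ into the maximal consecutive runs of points lying in the same $\icrleq$-class along $\pi_1$, obtaining indices $0 = i_0 < i_1 < \cdots < i_N = \ell + 1$ and classes $e_1, \ldots, e_N$ such that $x'_i \in e_k$ for $i \in [i_{k-1}; i_k - 1]$, with $e_k \neq e_{k+1}$ for $k < N$. Then I would put $\chi_k = \chi(x'_{i_{k-1}})$ and define the zone-encoding formulas $\psi_N = \chi_N$ and $\psi_k = \chi_k \wedge \lstothru \psi_{k+1}[\chi_k]$ for $k < N$. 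A downward induction on $k$, using the sub-path of $\pi_1$ from $x'_{i_{k-1}}$ to $x'_{i_k}$ as the witness for the outer $\lstothru$-modality, yields $\model, x'_{i_{k-1}} \models_{\icrl} \psi_k$; in particular $\model, x_1 \models_{\icrl} \psi_1$ and hence $\model, x_2 \models_{\icrl} \psi_1$.

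From here I would unfold the satisfaction of $\psi_1$ at $x_2$ into a sequence of forward paths whose concatenation forms a single forward path $\pi_2 = (x''_j)_{j=0}^m$ from $x_2$, partitioned into consecutive non-empty blocks $B_1, \ldots, B_N$, with every point of $B_k$ satisfying $\chi_k$ and therefore lying in $e_k$. The non-emptiness of each $B_k$ is guaranteed by $e_k \neq e_{k+1}$: no single point can satisfy both $\chi_k$ and $\chi_{k+1}$, so each level of the nested $\lstothru$-unfolding is forced to contribute at least one transition step. The matching functions $f : [0;\ell] \to [1;N]$ and $g : [0;m] \to [1;N]$ are then defined as the maps sending an index to its block number; both are total monotone surjections, and whenever $f(i) = g(j) = k$ one has $x'_i, x''_j \in e_k$, so $x'_i \icrleq x''_j$. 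This establishes compatibility of $\pi_1$ and $\pi_2$ with respect to $\icrleq$.

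Clause~(3) of Definition~\ref{def:CoPabisimilarity} is handled symmetrically, replacing $\lstothru$ by $\lsfromthru$ and forward paths by backward paths throughout. The main obstacle I anticipate is the careful bookkeeping needed to extract the single concatenated witness path $\pi_2$ and its non-empty block structure from the recursive unfolding of $\psi_1$; once that structure is in hand, both the matching-function construction and the compatibility check are routine.
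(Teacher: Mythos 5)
Your proposal is correct, and it reuses the two essential ingredients of the paper's own proof --- the characteristic formulas $\chi(x)$ and the use of $\lstothru\,\chi(\cdot)\,[\chi(\cdot)]$ to transfer the zone structure of $\pi_1$ across $\icrleq$ --- but it organises them differently. The paper proceeds by induction on the number of zones of $\pi_1$ (defined via a recursive $\zones$ function): at each step it peels off the first zone using a single, non-nested formula $\lstothru\,\chi(x'_m)\,[\chi(x_1)]$, obtains a matching initial segment of $\pi_2$, and appeals to the induction hypothesis for the tail, concatenating the resulting paths. You instead encode the \emph{entire} zone structure of $\pi_1$ in one nested formula $\psi_1$ (with $\psi_k = \chi_k \wedge \lstothru\,\psi_{k+1}[\chi_k]$), transfer it to $x_2$ in a single application of $x_1 \icrleq x_2$, and then unfold the witness into a concatenated path with a block decomposition. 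Both routes are sound; your non-emptiness argument for the blocks (no point can satisfy both $\chi_k$ and $\chi_{k+1}$ when $e_k \neq e_{k+1}$, since $\icrleq$ is an equivalence) is exactly what is needed for $g$ to be surjective, and the one degenerate case $N=1$ (where the unfolding produces no segments and $\pi_2 = (x_2)$) is covered by your construction even though you do not mention it explicitly. The trade-off is that the paper's induction keeps each logical step shallow at the cost of re-establishing the link $x'_m \icrleq \pi'(\ell)$ at every stage, whereas your version concentrates all the logical work into one deeply nested formula and shifts the effort to the bookkeeping of its unfolding --- which, as you anticipate, is the only part that requires real care.
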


\begin{proof}
  Let~$\model$ be a quasi-discrete closure model. We check that
  $\icrleq$ satisfies requirement~(ii) of
  Definition~\ref{def:CoPabisimilarity}. Requirement~(i) is immediate;
  requirement~(iii) can be verified, with appropriate auxiliary
  definitions, similar to requirement~(ii).

  Let, for points $x,y \in X$,  \icrl-formula~$\delta_{x,y}$ be
  defined as follows: if $x \icrleq y$, then set $\delta_{x,y}= \ltrue$, otherwise pick some 
  \icrl-formula~$\psi$ such that $\model,x \models \psi$ and $\model,y \models \lneg\psi$, and 
  set $\delta_{x,y}= \psi$.
  Put $\chi(x) = \bigwedge_{y \in X} \: \delta_{x,y}$. As before, for
  points $x, y \in X$, it holds that $x \models \chi(y)$ iff
  $x \icrleq y$.

  Let the function $\zones : \fpthsF \to \mathbb{N}$ 
  be such that, for 
  a forward path $\pi = (x_i )_{i{=}0}^n$,
  \begin{displaymath}
    \begin{array}{r@{\,}c@{\,}l@{\quad}l}
      \zones(\pi) & = & 1 & \text{if $n=0$} \\
      \zones(\pi) & = & \zones(\pi') &
      \text{if $n > 0$ and $x_0 \icrleq x_1$} \\
      \zones(\pi) & = & \zones(\pi') + 1 & 
      \text{if $n > 0$ and $x_0 \not\icrleq x_1$} \\
    \end{array}
  \end{displaymath}
  where $\pi'= ( x_i )_{i{=}1}^n$. A forward path~$\pi$ is said to have
  $n$~zones, if $\zones(\pi) = n$.

  \textit{Claim} For all~$k \geqslant 1$, for all $x_1, x_2 \in X$, if
  $x_1 \icrleq x_2$ and $\pi_1$ is a forward path from~$x_1$ of
  $k$~zones, then  a forward path~$\pi_2$ from~$x_2$ exists that is
  compatible to~$\pi_1$ with respect to~$\icrleq$. The claim is
  proven by induction on~$k$.

  Base case, $k = 1$: If $x_1 \icrleq x_2$ and
  $\pi_1 = ( x'_i )_{i{=}0}^n$~is a forward path from~$x_1$ of 1~zone,
  then $x_1 \icrleq x'_i$ for $0 \leqslant i \leqslant n$. Let~$\pi_2$
  be the forward path consisting of~$x_2$ only. Since
  $x_1 \icrleq x_2$, also $x_2 \icrleq x'_i$ for
  $0 \leqslant i \leqslant n$. Hence, $\pi_2$~is compatible
  with~$\pi_1$ with respect to~$\icrleq$.

  Induction step, $k {+} 1$: Suppose $x_1 \icrleq x_2$ and
  $\pi_1 = ( x'_i )_{i{=}0}^n$~is a forward path from~$x_1$ of
  $k{+}1$~zones. So, let $m > 0$ be such that $x_1 \icrleq x'_i$ for
  $0 \leqslant i < m$ and $x_1 \not\icrleq x'_m$. Then it holds that
  $x_1 \models \lstothru {\chi(x'_m)} {[ \mkern1mu \chi(x_1)]}$. Since
  $x_2 \icrleq x_1$ also
  $x_2 \models \lstothru {\chi(x'_m)} {[ \mkern1mu \chi(x_1)]}$. Thus,
  a forward path~$\pi'$ from~$x_2$ exists such that
  $\pi'(\ell) \models \chi(x'_m)$ and $\pi'(j) \models \chi(x_1)$ for
  $0 \leqslant j < \ell$. We have that
  $x'_m \icrleq \pi'(\ell)$---because
  $\pi'(\ell) \models \chi(x'_m)$---and the forward path $\pi_1[m;n]$
  from~$x'_m$ has $k$~zones. By induction hypothesis exists a forward
  path~$\pi''$ from~$\pi(\ell)$ that is compatible
  to~$\pi'_1[m;n]$. Then the path~$\pi' \cdot \pi''$, the
  concatenation of $\pi'$ and~$\pi''$, is a forward path from~$x_2$ that is compatible with the
  path~$\pi_1$.

  Requirement~(ii) of Definition~\ref{def:CoPabisimilarity} follows
  directly from the claim. We conclude that $\icrleq$ is a
  \cop-bisimulation, as was to be shown. 
\end{proof}

\noindent
The correspondence between \icrl-equivalence and \cop-bisimilarity is
summarised by the following theorem.

\begin{thm}
  \label{thm:CoPabisEqIcrleq}
  For every \qdcm~$\model$ it holds that \icrl-equivalence
  $\icrleq^{\model}$ coincides with \cop-bisimilarity $\copbis^{\model}$.
\end{thm}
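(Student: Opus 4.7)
The plan is to follow exactly the same two-inclusion strategy used for Theorems~\ref{thm:CMbisEqImleq} and~\ref{thm:CMCbisEqImleq}, since all the heavy lifting has already been done in the two preceding lemmas.

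First, I would establish $\mathord{\copbis^{\model}} \subseteq \mathord{\icrleq^{\model}}$ as an immediate consequence of Lemma~\ref{lem:CoPabisIsIcrleq}: whenever $x_1 \copbis^{\model} x_2$, that lemma yields $x_1 \icrleq^{\model} x_2$, i.e.\ the two points satisfy the same \icrl-formulas.

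Second, for the reverse inclusion $\mathord{\icrleq^{\model}} \subseteq \mathord{\copbis^{\model}}$, I would invoke Lemma~\ref{lem:IcrleqIsCoPabis}, which states that $\icrleq^{\model}$ is itself a \cop-bisimulation for~$\model$. Since $\copbis^{\model}$ is, by Definition~\ref{def:CoPabisimilarity}, the union of all \cop-bisimulations for $\model$, any pair related by $\icrleq^{\model}$ is a fortiori \cop-bisimilar.

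Combining the two inclusions gives the desired equality $\mathord{\icrleq^{\model}} = \mathord{\copbis^{\model}}$. There is essentially no obstacle here: all the creative work (the formula $\chi(x)$, the case analysis on zones, and the use of the conditional reachability modalities $\lstothru$ and~$\lsfromthru$) has been absorbed into Lemmas~\ref{lem:CoPabisIsIcrleq} and~\ref{lem:IcrleqIsCoPabis}, so the theorem itself is a one-line corollary. As a by-product, one obtains the analogue of Corollaries~\ref{cor:CMbisEquiv} and~\ref{cor:CMCbisEquiv}, namely that $\copbis^{\model}$ is an equivalence relation on~$X$, since $\icrleq^{\model}$ manifestly is.
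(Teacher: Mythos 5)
Your proposal is correct and follows essentially the same route as the paper: the theorem is obtained directly by combining Lemma~\ref{lem:CoPabisIsIcrleq} for the inclusion $\mathord{\copbis^{\model}} \subseteq \mathord{\icrleq^{\model}}$ with Lemma~\ref{lem:IcrleqIsCoPabis} for the converse. If anything, your citation of Lemma~\ref{lem:CoPabisIsIcrleq} is more accurate than the paper's own proof text, which appears to contain a slip by referring to Lemma~\ref{lem:CMCbisImplCoPa} instead.
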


\begin{proof}
  Direct from Lemma~\ref{lem:CMCbisImplCoPa} and
  Lemma~\ref{lem:IcrleqIsCoPabis}.
\end{proof}


\section{Conclusions}
\label{sec:conclusions}

In this paper we have developed a uniform, coherent and self-contained formal framework for the design of logic-based, automatic verification techniques, notably model checking, to be applied on suitable representations of space. In particular we have studied three bisimilarities for closure
spaces, namely (i) {\em Closure Model bisimilarity} (\cm-bisimilarity), (ii) its specialisation for \qdcm{s}, namely
{\em \cm-bisimilarity with converse} (\cmc-bisimilarity), and (iii) {\em Compatible Paths bisimilarity} (\cop-bisimilarity), 
a conditional form of path bisimilarity.
For each bisimilarity we introduced a spatial logic and we proved that the associated logical equivalence coincides with the relevant bisimilarity.

\cm-bisimilarity is a generalisation for \cm{s} of classical
topo-bisimilarity for topological spaces and can also be interpreted as an instance of neighbourhood bisimilarity, when \cm{s} are seen as neighbourhood models.  \cmc-bisimilarity 
 takes into consideration the fact that, in \qdcm{s}, there is
a notion of ``direction'' given by the binary relation underlying the
closure operator. This can be exploited in order to obtain an
equivalence---namely \cmc-bisimilarity---that, for
\qdcm{s}, refines \cm-bisimilarity.  In several applications, e.g., image analysis, weaker notions, such as spatial conditional reachability, are more appropriate. 
This notion cannot be captured conveniently using \cm-bisimilarity or \cmc-bisimilarity since both relations are too strong, in the sense that they can ``count'' the number of single closure steps. 
To capture such weaker notion we introduce \cop-bisimilarity, 
a stronger version of path bisimilarity --- first proposed
in~\cite{Ci+21} --- but weaker than \cmc-bisimilarity. \cop-bisimilarity expresses 
path ``compatibility'' in a way that resembles the concept of stuttering
equivalence for Kripke models~\cite{BCG88}. We have proven 
that \cop-bisimilarity coincides with (an adaptation to closure models of) divergence-blind
stuttering equivalence~\cite{Gr+17}.  

The practical relevance of the study presented in this paper stems from the fact that
bisimilarity gives rise to the notion of a minimal model. Such minimal models, when computable,  
can be effectively used for optimising spatial model-checking procedures, when the bisimilarity preserves
logical equivalence. This is indeed the case for all spatial equivalences proposed in this paper. 
The issue of model minimisation via bisimilarity has been explicitly addressed in~\cite{Ci+20} and in~\cite{Ci+23}.

In~\cite{Ci+20} a minimisation algorithm has been presented for \cmc-bisimilarity, based on the coalgebraic interpretation of the equivalence. That paper also introduces \minilogica, a minimisation tool for \cmc-bisimilarity. 

In~\cite{Ci+23} an encoding of (finite) \qdcm{s} into  labelled transition systems has been defined. 
Intuitively, given finite model $\model=(X,\closureF,\peval)$, a labelled transition system
$\lts(\model)$ is generated such that
whenever $y\in \closureF(x)$ and $\invpeval(\SET{x})=\invpeval(\SET{y})$,
there is  a $\tau$-transition in $\lts(\model)$
from the state encoding $x$ to that encoding $y$, 
whereas, if $\invpeval(\SET{x})\not=\invpeval(\SET{y})$, there is a transition
with a different, ad hoc, label encoding the change in $\invpeval$. A similar procedure is performed for  encoding $\closureT$. In addition, a state $s$ encoding a point $x$ is 
 equipped with a self-loop labelled by $p$, for each $p \in \invpeval(\SET{x})$.
It has been shown that two points
are \cop-bisimilar in $\model$ if and only if the states they are encoded into are branching equivalent in $\lts(\model)$. The size of the resulting labelled transition system grows linearly with
that of the original model. In particular, depending on the nature of the relation underlying 
the closure space, the number of states of the transition system can vary from the
same as to twice that of the points in the space.
This  enables the exploitation of very efficient minimisation algorithms for 
branching  equivalence, as those in~\cite{Gr+17}, and related tools, such as {\tt mCRL2},  in order to compute the 
minimal \qdcm{} with respect to \cop-bisimilarity
and then apply spatial model-checking to the latter.
As a proof-of-concept, an experimental tool chain, based on the above mentioned encoding and {\tt mCRL2} has been used for obtaining the minimal model for a maze similar to that of
Figure~\ref{fig:Maze}, for various sizes ranging from  $128 \times 128=16$K pixels to $8192 \times 8192=64$M pixels.  The labelled
transition systems resulting from the encoding have the same number of states as the number of pixels in the images, whereas the reduced models have, in all cases, $7$ states, like the 
model in Figure~\ref{fig:PathReducedMaze}, with different kinds of blue and white states, 
identifying different situations with respect to the possibility of reaching an exit. 
The resulting model has been model-checked using the (non-optimised) prototypical tool \graphlogica{.}
The time for model-checking the original models with \voxlogica{} ranged from $0.49$s to $5.34$s.
The time for model-checking the minimal model with \graphlogica{} was about $0.40$s leading to a maximal speed-up of $12.77$, for the largest model. The time for minimisation ranges from $0.0$s to $21.53$s. 
We underline here that \voxlogica{} is inherently much faster than \graphlogica{} as the former is specialised for images, exploiting state-of-the-art imaging libraries and automatic parallelisation. This poses a further challenge to the speed-up via minimisation and is the reason why we used~\voxlogica{} instead of \graphlogica{} for the full model.
It should also be noted that our interest  is more on space efficiency than on computation time, since  the size of space models is often huge, beyond the capacity of current analysis tools. Moreover, we focus more on the time taken for actual model-checking than on that for model reduction because, typically, one is interested in several model-checking sessions on the same model, so that the (time) cost of generating the minimal model is then distributed over such sessions.

Another experiment has been performed, using 
images of the classical \emph{Philips 5544} monoscope test pattern. Also in this case, the experiment has been carried out for different resolutions of the image, ranging from $30.37$K pixels to $31.64$M pixels. Given the specific nature of this image (e.g. the presence of very thin test lines), the topological relationships of different parts of the image change when changing the resolution. Consequently, in this case, for different resolutions we get different minimal models, with number of states ranging from $155$ to $955$. In this case, 
model-checking has been
performed for some formulas with a high level of nesting (up to 16) of reachability operators. The
model-checking time for the largest model
was $14.96$s whereas that for the minimised one was $0.65$s with
a speed-up of $22.87$. The minimisation of the largest monoscope model took $9.88$s. For further details, the reader is referred to~\cite{Ci+23}.

As a closing remark concerning the experiments briefly described above, we underline once more the importance of the development of a uniform theoretical framework for models of space, that includes model reduction based on suitable bisimulation-based equivalences  that enjoy the Hennessy-Milner property with respect to specific logics. As an example, the approach described in~\cite{Ci+23} and outlined above makes it possible, for users interested in modelling 
space and analysing properties of such models, to think and reason {\em directly} in terms of {\em only} spatial notions. A user interested in certain spatial features of a model of a space does not need to have specific competence in labelled transition systems, silent $\tau$-transitions or branching equivalence. Neither she is required to think about coding her problem into a concurrency theory problem. She only needs knowledge on models for space and related theory. All the coding necessary for exploiting branching equivalence --- or, in general any other notion developed in concurrency theory or general modal logics and related models --- is taken care of by software tools based on solid mathematical notions and correctness proofs. In other words, the approach discussed in this paper, although based on notions that are closely related to temporal logic and bisimilarity in concurrency theory --- that, in turn, have their roots in modal logic and related models --- provides a conceptual ``front end'' for the user that requires only knowledge on models of space and related logics whereas, whenever necessary or convenient, an encoding to similar models and logics for different domains, typically those of concurrency theory or general modal logics, is provided by a ``back end'' like that presented, for instance, in~\cite{Ci+23}. Such a ``front end'' and ``back end'' relationship can then be inherited as a basis for the design and implementation of efficient automatic spatial analysis tools, as illustrated in~\cite{Ci+23}.

Many results we have shown in this paper concern \qdcm{s;} we think
the investigation of their extension to continuous or general closure
spaces is an interesting line of future research.  A significant first step
in that direction is proposed in~\cite{Bez+22} where a technique  for 
 model-checking a variant of \slcs{} on polyhedra in Euclidean spaces has been proposed
as well as an efficient  model checker for 2D and 3D spaces, that enhances the spatial model-checker  \voxlogica{.} 
Notions of bisimilarity for polyhedral models have also been proposed that 
coincide with the equivalence induced by variants of \slcs{} and for which discrete versions have been presented that can be used for computing the minimal model of a cell poset model, i.e. of the discrete representation of a polyhedral model used for model-checking~\cite{Ci+23a,Be+24a,Be+24b}.

In~\cite{Ci+20} we
investigated a coalgebraic view of \qdcm{s} that was useful for the
definition of the minimisation algorithm for \cmc-bisimilarity. It
would be interesting to study a similar approach for \pth-bisimilarity
and \cop-bisimilarity. Steps in this direction have been performed in the context of closure hyperdoctrines in recent work by Miculan et al.~\cite{CaM21}.

In~\cite{HKP09}, coalgebraic bisimilarity is developed for a general kind of models, generalising the topological ones, known as Neighbourhood Frames. The development of the 
corresponding Hennessy-Milner style theorems for logics with reachability such as \slcs{,} by enriching the class of Neighbourhood Frames with a notion of path, could be an interesting addition to that theory.

\bibliographystyle{alphaurl}
\bibliography{clmv}

\appendix
\section{}

Below, we recall the definition of \cop-bisimilarity we used in ~\cite{Ci+22a}\footnote{Here, for notational consistency with the rest of the paper,  we use a terminology that is slightly different than, but equivalent to, that used in~\cite{Ci+22a}.} and show that it is equivalent to Definition~\ref{def:CoPabisimilarity}.

\begin{defi}[Definition 15 of \cite{Ci+22a}]\label{def:CoPabisimilarityOld}
Given \qdcs{} $\model=(X,\closureF, \peval)$, a symmetric relation 
$B \subseteq X \times X$ is a {\em \cop-bisimulation  for $\model$} if, whenever $B(x_1,x_2)$, the following holds:
\begin{enumerate}
\item for all $p\in\ap$ we have $x_1 \in \peval(p)$ in and only if $x_2 \in \peval(p)$;
\item\label{Ci+22a-zones} 
for each forward path $\pi_1=(x_i')_{i=0}^{\ell_1}$ from $x_1$ such that
$B(\pi_1(i),x_2)$ for all $i \in  [0;\ell_1)$ 
there is a forward path $\pi_2=(x_j'')_{j=0}^{\ell_2}$ from $x_2$ such that
the following holds:
$B(x_1,\pi_2(j))$ for all $j\in [0;\ell_2)$ and
$B(\pi_1(\ell_1),\pi_2(\ell_2))$;
\item 
for each backward path $\pi_1=(x_i')_{i=0}^{\ell_1}$ from $x_1$ such that
$B(\pi_1(i),x_2)$ for all $i\in [0;\ell_1)$ 
there is a backward path $\pi_2=(x_j'')_{j=0}^{\ell_2}$ from $x_2$ such that
the following holds:
$B(x_1,\pi_2(j))$ for all $j\in [0;\ell_2)$ and
$B(\pi_1(\ell_1),\pi_2(\ell_2))$.
\end{enumerate}
Two points $x_1, x_2 \in X$ are called {\em \cop-bisimilar} in $\model$ if $B(x_1, x_2)$
for some \cop-bisimulation $B$ for $\model$. Notation, $x_1\,\copbis^{\model}\, x_2$.
\closedefi
\end{defi}

In~\cite{Ci+22a} it has been shown that, for all \qdcm{s} $\model$, $\copbis^{\model}$, defined according to Definition~\ref{def:CoPabisimilarityOld}, is
an equivalence relation.  
The following proposition establishes the equivalence of the two definitions.

\begin{prop}
  \label{prop:CoPA-rec-vs-zones}
  Let $\model = (X, \closureF, \peval)$ be a \qdcm{} and $x_1,x_2 \in X$. It holds that 
  $x_1 \copbis x_2$  according to 
  Definition~\ref{def:CoPabisimilarity} if and only if
  $x_1 \copbis x_2$ according to 
  Definition~\ref{def:CoPabisimilarityOld}.
\end{prop}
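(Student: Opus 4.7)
The plan is to establish both inclusions $\copbisold\subseteq\copbisnew$ and $\copbisnew\subseteq\copbisold$ directly, exploiting the fact that both relations are equivalences: $\copbisold$ by the result from~\cite{Ci+22a} cited in the paper, and $\copbisnew$ by Theorem~\ref{thm:CoPabisEqIcrleq}, which identifies it with $\icrleq$. Throughout, only the forward-path clause is treated; the backward case is entirely symmetric.

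For $\copbisold\subseteq\copbisnew$, I would verify that $\copbisold$ is itself a $\copbisnew$-bisimulation in the sense of Definition~\ref{def:CoPabisimilarity}. Given $(x_1,x_2)\in\copbisold$ and a forward path $\pi_1=(x'_i)_{i=0}^{\ell_1}$ from $x_1$, decompose $\pi_1$ into its maximal consecutive zones of $\copbisold$-equivalent points, obtaining some number $N$ of zones. Induct on $N$: for $N=1$ the trivial path $\pi_2=(x_2)$ together with the constant matching functions works, since every point of $\pi_1$ is $\copbisold$-equivalent to $x_1$ and hence, by transitivity, to $x_2$. For $N>1$, let $k\geq 1$ be the starting index of zone~$2$; apply Definition~\ref{def:CoPabisimilarityOld} to the prefix $\pi_1|[0,k]$, whose inner points are all $\copbisold$-related to $x_2$, to obtain a path $\pi_2^{(1)}$ from $x_2$ ending in a point $y$ with $y\copbisold\pi_1(k)$. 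Invoke the induction hypothesis on the $(N{-}1)$-zone suffix $\pi_1\uparrow k$ and the pair $(\pi_1(k),y)\in\copbisold$ to obtain a compatible tail $\pi_2^{(2)}$ from $y$ with matching functions into $[1;N{-}1]$; shift these by $+1$ and glue them to the two-zone matching functions for $\pi_2^{(1)}$ and $\pi_1|[0,k]$, producing matching functions into $[1;N]$ for the concatenation $\pi_2=\pi_2^{(1)}\cdot\pi_2^{(2)}$, using transitivity of $\copbisold$ to verify compatibility within zone~$1$.

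For $\copbisnew\subseteq\copbisold$, I would verify that $\copbisnew$ is itself a $\copbisold$-bisimulation in the sense of Definition~\ref{def:CoPabisimilarityOld}. Given $(x_1,x_2)\in\copbisnew$ and a forward path $\pi_1=(x'_i)_{i=0}^{\ell_1}$ with $\pi_1(i)\copbisnew x_2$ for $i\in[0;\ell_1)$, applying Definition~\ref{def:CoPabisimilarity} yields a compatible path $\pi'_2=(x''_j)_{j=0}^{\ell'_2}$ from $x_2$ with matching functions $f,g$ into $[1;N]$; in particular $f(\ell_1)=g(\ell'_2)=N$. Let $k=\min g^{-1}(\{N\})$ and set $\pi_2=(x''_j)_{j=0}^{k}$, the truncation of $\pi'_2$ at the first point of its final zone; then $\pi_2$ is a forward path from $x_2$. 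Compatibility immediately yields $\pi_1(\ell_1)\copbisnew\pi_2(k)$. For $j<k$ we have $g(j)<N$ and, by monotonicity and surjectivity of $f$, there exists $i<\ell_1$ with $f(i)=g(j)$, whence compatibility gives $\pi_1(i)\copbisnew\pi_2(j)$; combining with the precondition $\pi_1(i)\copbisnew x_2$ and the hypothesis $x_1\copbisnew x_2$, transitivity of $\copbisnew$ yields $x_1\copbisnew\pi_2(j)$, as required.

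The main obstacle lies in the first inclusion, namely the careful bookkeeping of matching functions: the boundary point $y$, shared by $\pi_2^{(1)}$ and $\pi_2^{(2)}$, must end up in zone~$2$ of the global matching, being simultaneously the zone-$2$ endpoint of the two-zone decomposition used for $\pi_2^{(1)}$ and the zone-$1$ start of the inductive sub-matching for $\pi_2^{(2)}$. Once this gluing is set up correctly, compatibility within zone~$1$ reduces to transitivity of $\copbisold$, while compatibility within zones $2,\ldots,N$ follows directly from the induction hypothesis through the shift of indices.
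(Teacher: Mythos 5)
Your proof is correct and follows essentially the same strategy as the paper's: each inclusion is obtained by showing that one relation satisfies the defining clauses of the other, using an induction on the number of zones of $\pi_1$ for $\copbisold\subseteq\copbisnew$ and a truncation of the compatible path at its final zone for $\copbisnew\subseteq\copbisold$, with transitivity of both equivalences doing the remaining work. The only deviations are cosmetic --- you peel off the \emph{first} zone where the paper peels off the last one (via $m_1=\max\ZET{i}{x'_i\not\copbisold x'_{i+1}}$), and you truncate at $\min g^{-1}(\SET{N})$ rather than at the first equivalence break along $\pi_2$, which lets you avoid the paper's two-case split; just record explicitly, as the paper does when observing $m_{2b}>0$, that $\pi_2^{(1)}$ has positive length (its endpoint $y\copbisold\pi_1(k)\not\copbisold x_2$ cannot be $x_2$ itself), so that zone~$1$ of the glued matching function on the $\pi_2$ side is non-empty and surjectivity is preserved.
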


\begin{proof}
In this proof, for the sake of clarity, we use the following notation:
$\copbisold$ denotes relation $\copbis$ defined according to 
  Definition~\ref{def:CoPabisimilarityOld}, 
  and
 $\copbisnew$ denotes relation $\copbis$ defined according to 
  Definition~\ref{def:CoPabisimilarity}.
  
  We first prove that $\copbisnew \subseteq \copbisold$. Let $x_1,x_2 \in X$ and assume
  $x_1 \copbisnew x_2$. We have to show that the three conditions of Definition~\ref{def:CoPabisimilarityOld} 
  are satisfied.\\
 (i) The first condition is trivially satisfied since $x_1 \copbisnew x_2$ requires that 
 $x_1 \in \peval(p)$ iff $x_2 \in \peval(p)$ for all $p \in \ap$.\\
(ii) As for the second condition, let $\pi_1=(x'_i)_{i=0}^{\ell_1}$  be a forward path from $x_1$ such that 
$x'_i \, \copbisnew\, x_2$ for all $i \in [0;\ell_1)$. We distinguish two cases:\\
{\bf Case 1:} $x'_{\ell_1} \, \copbisnew x_2$.\\
Put $\ell_2 = 0$ and consider the one-point path $\pi_2=(x_2)$. Then we have that
$\pi_1(\ell_1) = x'_{\ell_1} \, \copbisnew  x_2 = \pi_2(\ell_2)$. Note that the other requirement of 
Definition~\ref{def:CoPabisimilarityOld}(\ref{Ci+22a-zones}) is vacuous.\\
{\bf Case 2:} $x'_{\ell_1} \, \not\copbisnew x_2$.\\
Since, by assumption, $x_1 \, \copbisnew x_2$, a forward path $\pi_2=(x''_j)_{j=0}^{\ell_2}$ from $x_2$ exists, for some $\ell_2 \in \nats$, that is compatible with $\pi_1$.
Define $m_2 = \min\ZET{j\in (0;\ell_2]}{x''_{j-1}\not\copbisnew x''_j}$
(see Figure~\ref{fig:lem:CoPA-rec-vs-zonesNIO}). 
\begin{figure}
\begin{tikzpicture}[%
  pstate/.style = {draw, circle, minimum size=1.5mm, inner sep=0pt},
  qstate/.style = {draw, circle, minimum size=1mm, inner sep=0pt, fill=black},
  semithick,
  ]


  \node (x1) at (0.5,3) {$x_1$} ;
  \node [pstate, label={+90:$x'_0$}] (x0p) at (1,3) {} ;
  \node [qstate, label={+90:$x'_{\ell_1{-}1}$}] (xellminus1p) at (4,3) {} ;
  \node [pstate, label={+90:$x'_{\ell_1}$}] (xellp) at (5,3) {} ;
  
  \node (x2) at (0.5,1) {$x_2$} ;
  \node [pstate, label={-90:$x'_0$}] (x0d) at (1,1) {} ;
  \node [qstate, label={-90:$x''_{m_2{-}1}$}] (xm2minus1d) at (4,1) {} ;
  \node [pstate, label={-90:$x''_{m_2}$}] (xm2d) at (5,1) {} ;
  \node [qstate, label={-90:$x''_{\ell_2}$}] (xell2d) at (8,1) {} ;
 
  \draw [decorate, decoration=snake] (x0p) -- (xellminus1p) ;
  \draw [decorate, decoration=snake] (x0d) -- (xm2minus1d) ;
  \draw [decorate, decoration=snake] (xm2d) -- (xell2d) ;

  \draw [draw=red] (x0p) -- (x0d) ;
  \coordinate (xmid1d) at ($(x0d)!0.5!(xm2minus1d)$) ;
  \draw [draw=red] (x0p) -- ([xshift=0.25pt, yshift=-0.25pt]xmid1d) ;
  \draw [draw=red] (x0p) -- (xm2minus1d) ;
  \coordinate (xmid1p) at ($(x0p)!0.5!(xellminus1p)$) ;
  \draw [draw=red] (x0d) -- ([xshift=-2.85pt, yshift=-2.5pt] xmid1p) ;
  \draw [draw=red] (x0d) -- (xellminus1p) ;
  
  \draw [draw=red] (xellp) -- (xm2d) ;
  \coordinate (xmid2d) at ($(xm2d)!0.5!(xell2d)$) ;
  \draw [draw=red] (xellp) -- ([xshift=0.15pt, yshift=-0.15pt] xmid2d) ;
  \draw [draw=red] (xellp) -- (xell2d) ;

\end{tikzpicture}
\caption{Example illustrating the proof of Proposition~\ref{prop:CoPA-rec-vs-zones} ($\copbisnew \subseteq \copbisold$).}\label{fig:lem:CoPA-rec-vs-zonesNIO}
\end{figure}
Then $m_2$ is well-defined 
since $x_2 \not\copbisnew x'_{\ell_1}$ by hypothesis and $x'_{\ell_1} \copbisnew x''_{\ell_2}$ because
$\pi_1$ is compatible with $\pi_2$. Consider the forward path $\widehat{\pi}_2 = (x''_j)_{j=0}^{m_2}$ from $x_2$, i.e., the prefix of length $m_2$ of $\pi_2$. For all $j \in [0; m_2)$ it holds that
$x''_j \copbisnew x_2$, by definition of $m_2$ and  $x_2 \copbisnew x_1$, by assumption.
Let, for some $N \in \nats$,  $f:[0;\ell_1] \to [1;N]$ and $g:[0;\ell_2] \to [1;N]$ be matching functions
for $\pi_1$ and $\pi_2$. Let $i\in [0;\ell_1]$ be such that  $f(i)=g(m_2)$, implying that
$x'_i \copbisnew x''_{m_2}$. 
Then $i= \ell_1$, since for all $i \in [0,\ell_1)$ it holds $x'_i \, \copbisnew \, x_2$ by hypothesis and
 $x_2 \, \not\copbisnew x''_{m_2}$ by definition of $m_2$.
Thus, we have shown that there is a forward path $\widehat{\pi}_2$ from $x_2$ such that 
$\widehat{\pi}_2(j) \copbisnew x_1$ for all $j \in [0,m_2)$ and 
$\widehat{\pi}_2(m_2) \copbisnew \pi_1(\ell_1)$.\\
(iii) The proof is similar to that for the second condition.

We now prove that $  \copbisold \subseteq \copbisnew$.\\
 (i) Also in this case, the first condition is trivially satisfied since $x_1 \copbisold x_2$ requires that 
 $x_1 \in \peval(p)$ iff $x_2 \in \peval(p)$ for all $p \in \ap$.\\
 (ii) Regarding the second condition, for any forward path $\pi=(x_i)_{i=0}^{\ell}$ the number $Z(\pi)$
 of times $\copbisold$ is not preserved while going from $\pi(0)$ to $\pi(\ell)$, i.e., 
 $$
 Z(\pi) =\big|\ZET{i\in [0,\ell)}{x_i \not\copbisold x_{i+1}}\big|.
 $$
We have to show that for all $x_1,x_2 \in X$ such that $x_1 \copbisold x_2$ and
 forward path $\pi_1 = (x'_i)_{i=0}^{\ell_1}$ from $x_1$ there is a forward path 
 $\pi_2 = (x''_j)_{j=0}^{\ell_2}$ from $x_2$ that is compatible with $\pi_1$. We proceed
 by induction on $Z(\pi_1)$.\\
 {\bf Base Case:} $Z(\pi_1)=0$.\\
 In this case, since $Z(\pi_1)=0$, we have that $x'_i \copbisold x_1$ for all $i\in [0;\ell_1]$ 
 and, by hypothesis, we have that $x_1 \copbisold x_2$. Let then $\pi_2$ be the forward
 path of length 0 from $x_2$, i.e., $\pi_2=(x_2)$. Then $\pi_2$ is compatible with $\pi_1$ because
 the constant maps $f:[0;\ell_1] \to [1;1]$ and $g:[0;0] \to [1;1]$ are matching functions for $\pi_1$ and $\pi_2$.\\
 {\bf Induction Step:} $Z(\pi_1)>0$.\\
 In the sequel, we  construct a forward path $\pi_2 = (x''_j)_{j=0}^{\ell_2}$ from $x_2$ as the concatenation of two forward paths --- namely $\widehat{\pi}_{2a}$, of length $m_{2a}$, from $x_2$ and  $\widehat{\pi}_{2b}$, of length $m_{2b}$ from $\widehat{\pi}_{2a}(m_{2a})$, with 
$\ell_2 = m_{2a}+m_{2b}$ --- and we show that $\pi_2$ is compatible with $\pi_1$.
Define $m_1 = \max\ZET{i\in [0;\ell_1)}{x'_{i}\not\copbisold x'_{i+1}}$
(see Figure~\ref{fig:lem:CoPA-rec-vs-zonesNOI}). 
\begin{figure}
\begin{tikzpicture}[%
  pstate/.style = {draw, circle, minimum size=1.5mm, inner sep=0pt},
  qstate/.style = {draw, circle, minimum size=1mm, inner sep=0pt, fill=black},
  semithick,
  ]


  \node (x1) at (2,3) {$x_1$} ;
  \node [pstate] (x0p) at (2.5,3) {} ;
  \node [qstate] (x1p) at (4.5,3) {} ;
  \node [pstate] (x2p) at (6,3) {} ;
  \node [qstate, label={+90:$x'_{m_1}$}] (xm1p) at (8,3) {} ;
  \node [pstate, label={+90:$x'_{m_1{+}1}$}] (xm1plus1p) at (11,3) {} ;
  \node [qstate, label={+90:$x'_{\ell_1}$}] (xell1p) at (13,3) {} ;
  
  \node (x2) at (2,1) {$x_2$} ;
  \node [pstate] (x0d) at (2.5,1) {} ;
  \node [qstate] (x1d) at (4.5,1) {} ;
  \node [pstate] (x2d) at (6,1) {} ;
  \node [qstate, label={-90:$x''_{m_2}$}] (xm2d) at (8,1) {} ;
  \node [qstate, label={-90:$x''_{\ell_2{-}1}$}] (xell2minus1d) at (10,1) {} ;
  \node [pstate, label={-90:$x''_{\ell_2}$}] (xell2d) at (11,1) {} ;
 
  \draw [decorate, decoration=snake] (x0p) -- (x1p) ;
  \draw [decorate, decoration=snake] (x2p) -- (xm1p) ;
  \draw [decorate, decoration=snake] (xm1plus1p) -- (xell1p) ;
  \draw [decorate, decoration=snake] (x0d) -- (x1d) ;
  \draw [decorate, decoration=snake] (x2d) -- (xm2d) ;
  \draw [decorate, decoration=snake] (xm2d) -- (xell2minus1d) ;

  \draw [dotted] (4.8,3) -- (5.8,3) ;
  \draw [dotted] (4.8,1) -- (5.8,1) ;

  \draw [draw=red] (x0p) -- (x0d) ;
  \coordinate (xmid1d) at ($(x0d)!0.5!(x1d)$) ;
  \draw [draw=red] (x0p) -- ([xshift=-2.75pt, yshift=+2.25pt]xmid1d) ;
  \draw [draw=red] (x0p) -- (x1d) ;
  \coordinate (xmid1p) at ($(x0p)!0.5!(x1p)$) ;
  \draw [draw=red] (x0d) -- ([xshift=2pt, yshift=-2.5pt] xmid1p) ;
  \draw [draw=red] (x0d) -- (x1p) ;
  
  \draw [draw=red] (x2p) -- (x2d) ;
  \coordinate (xmid2d) at ($(x2d)!0.5!(xm2d)$) ;
  \draw [draw=red] (x2p) -- ([xshift=-2.75pt, yshift=2.5pt] xmid2d) ;
  \draw [draw=red] (x2p) -- (xm2d) ;
  \coordinate (xmid2p) at ([xshift=2pt, yshift=-2.75pt]$(x2p)!0.5!(xm1p)$) ;
  \draw [draw=red] (x2d) -- (xmid2p) ;
  \draw [draw=red] (x2d) -- (xm1p) ;
  \draw [draw=red] (xm1p) -- (xm2d) ;
  \coordinate (xmid3d) at ([xshift=-3.5pt, yshift=2.75pt]$(xm2d)!0.5!(xell2minus1d)$) ;
  \draw [draw=red] (xm1p) -- (xmid3d) ;
  \draw [draw=red] (xm1p) -- (xell2minus1d) ;

  \draw [draw=red] (xm1plus1p) -- (xell2d) ;
  \coordinate (xmid3p) at ([xshift=2.25pt, yshift=-2.25pt]$(xm1plus1p)!0.5!(xell1p)$) ;
  \draw [draw=red] (xmid3p) -- (xell2d) ;

  \draw [draw=red] (xell1p) -- (xell2d) ;

\end{tikzpicture}
\caption{Example illustrating the proof of Proposition~\ref{prop:CoPA-rec-vs-zones} ($\copbisold \subseteq \copbisnew$).}\label{fig:lem:CoPA-rec-vs-zonesNOI}
\end{figure}
Note that $m_1$ is
well-defined since $Z(\pi_1)>0$. Consider the forward path $\widehat{\pi}_1 = (x'_i)_{i=0}^{m_1}$ from $x_1$, i.e., the prefix of length $m_1$ of $\pi_1$. 
Since 
$Z(\widehat{\pi}_1)<Z(\pi_1)$, a forward path $\widehat{\pi}_{2a}=(x''_j)_{j=0}^{m_{2a}}$  exists 
that is compatible with $\widehat{\pi}_1$, by the induction hypothesis.
Let, for some $N\in \nats$,  $f':[0;m_1] \to [1;N]$ and $g':[0;m_{2a}] \to [1;N]$ be matching functions for $\widehat{\pi}_1$ and $\widehat{\pi}_{2a}$ with respect to $\copbisold$. Note that, by 
compatibility of  $\widehat{\pi}_1$ and $\widehat{\pi}_{2a}$, with respect to $\copbisold$,
we have that $x'_{m_1} \copbisold  x''_{m_{2a}}$. Consider the two point forward path 
$(x'_{m_1}, x'_{m_1+1})$  from $x'_{m_1}$. Since $x'_{m_1} \copbisold x''_{m_{2a}}$, 
a forward path $\widehat{\pi}_{2b}=(z_j)_{j=0}^{m_{2b}}$ exists
from $x''_{m_{2a}}$  such that
$x'_{m_1} \copbisold z_j$ for all $j\in [0;m_{2b})$ and $x'_{m_1+1} \copbisold z_{m_{2b}}$.
We have  that $m_{2b} > 0$  since $x''_{m_{2a}}\copbisold x'_{m_1} \not\copbisold x'_{m_1+1}  \copbisold z_{m_{2b}}$.
Define path $\pi_2 = (x''_j)_{j=0}^{\ell_2}$, with $\ell_2=m_{2a}+m_{2b}$, as the concatenation 
$\widehat{\pi}_{2a}\cdot\widehat{\pi}_{2b}$ of $\widehat{\pi}_{2a}$ and $\widehat{\pi}_{2b}$. 
Obviously, $\pi_2$ is a forward path from $x_2$. Moreover, $\pi_1$ and $\pi_2$ are
compatible as shown below.
Let $f:[0;\ell_1] \to [1;N+1]$ and $g:[0;\ell_2] \to [1;N+1]$ be defined as follows, where,
for notational convenience, we let $m_2=m_{2a}$:
$$
f(i) =
\left\{
\begin{array}{l l}
f'(i) & \mbox{ if } i\in [0;m_1],\\
N+1 &  \mbox{ if } i\in (m_1;\ell_1].
\end{array}
\right.
$$
$$
g(j) =
\left\{
\begin{array}{l l}
g'(j) & \mbox{ if } j\in [0;m_2],\\
N &  \mbox{ if } j\in (m_2;\ell_2),\\
N+1 &  \mbox{ if } j=\ell_2.
\end{array}
\right.
$$

The functions $f$ and $g$ are monotone. Surjectivity of $f$ is obvious.
For what concerns  surjectivity of $g$, recall that $g'(m_2)=N$ and that $\ell_2 > m_2$. 
We now show that for all $i \in [0;\ell_1]$ and $j\in [0;\ell_2]$ satisfying $f(i)=g(j)$ it holds that
$x'_i=\pi_1(i) \copbisold \pi_2(j)=x''_j$.\\
If $f(i)=g(j) < N$ and $i\in [0;m_1],j\in [0;m_2]$,
then $f'(i)=f(i)=g(j)=g'(j)$, which implies $x'_i \copbisold x''_j$, since $f'$ and $g'$ are
matching functions for $\widehat{\pi}_1$ and $\widehat{\pi}_{2a}$ with respect to $\copbisold$.\\
If $f(i)=g(j) = N$ and $i\in [0;m_1],j\in (m_2;\ell_2)$,
then $f(i)=g(m_2)$; so $x'_i \copbisold x''_{m_2}$ 
and $x''_{m_2} \copbisold x'_{m_1}$.
Moreover $x'_{m_1}\copbisold x''_j$ by choice of $\widehat{\pi}_{2b}$.
Thence $x'_i \copbisold x''_j$.\\
If  $f(i)=g(j) = N+1$ then $i \in (m_1,\ell_1]$ and $j=\ell_2$. Since, by definition of $m_1$,
we have that 
$x'_i \copbisold x'_{m_1+1}$ and, by definition of $\widehat{\pi}_{2b}$, 
$x'_{m_1+1} \copbisold x''_{\ell_2}$, it follows that $x'_i \copbisold x''_j$.
In conclusion, we have that $\pi_1$ and $\pi_2$ are compatible.\\
(iii) Similar to case (ii).
\end{proof}

\end{document}